\declaretheorem[name=Theorem,numberwithin=section]{theorem}
\declaretheorem[name=Definition,sibling=theorem]{definition}
\declaretheorem[name=Lemma,sibling=theorem]{lemma}
\declaretheorem[name=Remark,sibling=theorem]{remark}
\pgfplotsset{compat=1.16}
\NewDocumentCommand{\matrixLl}{}{{\kern.2mm\scalebox{.05}{\begin{tikzpicture}[baseline={(-1,-1)}]
  \draw (-1,-1) rectangle (1,1);
  \fill (-1,-1) -- (-1,1) -- (1,-1) -- cycle; 
\end{tikzpicture}}}}
\NewDocumentCommand{\matrixUr}{}{{\kern.2mm\scalebox{.05}{\begin{tikzpicture}[baseline={(-1,-1)}]
  \draw (-1,-1) rectangle (1,1);
  \fill (1,1) -- (-1,1) -- (1,-1) -- cycle; 
\end{tikzpicture}}}}
\newcommand*{\vect}[1]{\ensuremath{\mathbf{#1}}}
\NewDocumentEnvironment{protocol}{m}{\Protocol}{\caption{#1}\endProtocol}
\NewDocumentEnvironment{resource}{m}{\Resource\caption{#1}}{\endResource}
      \newcommand*{\xor}{\oplus}
      \newcommand*{\eqdef}{\coloneqq}
      \newcommand{\Sim}{\ensuremath{\mathbf{\sigma}}}
      \tikzset{
        myBold/.style={
          /tikz/commutative
          diagrams/arrows={B},
          /zx/user overlay nodes/.style={
            zxSpiders/.append style={BBw},
            zxH/.append style={BBw},
          }
        }
      }
      \DeclareMathAlphabet{\pazocal}{OMS}{zplm}{m}{n}
\tikzset{
  myBold/.style={
    /tikz/commutative
    diagrams/arrows={B},
    /zx/user overlay nodes/.style={
      zxSpiders/.append style={BBw},
      zxH/.append style={BBw},
    }
  }
}
\def\fps@figure{htbp}
\def\fps@table{htbp}
\let\originalmiddle=\middle
\def\middle#1{\mathrel{}\originalmiddle#1\mathrel{}}
\NewDocumentCommand{\pr}{som}{\Pr\IfBooleanT#1{\IfValueT{#2}{_{\substack{#2}}}}\left[\,#3\IfBooleanF#1{\IfValueT{#2}{\bigm\vert #2}}\,\right]}
\DeclareMathOperator*{\espsymbol}{\mathbb{E}}
\NewDocumentCommand{\esp}{som}{\espsymbol\IfBooleanT#1{\IfValueT{#2}{_{\substack{#2}}}}\left[\,#3\IfBooleanF#1{\IfValueT{#2}{\bigm\vert #2}}\,\right]}
\NewDocumentEnvironment{pral}{soO{}}
 {%
  \Pr\IfValueT{#2}{_{\IfBooleanTF{#1}{\substack{#2}}{#2}}}
\begin{alignedat}[t]{2}
  [\, } {\,]
 #3
 \end{alignedat}
 }
\NewDocumentEnvironment{espal}{soO{}}
 {%
   \IfValueTF{#2}{\underset{\substack{#2}}{\mathbb{E}}}{\mathbb{E}}
   \begin{alignedat}[t]{2}[\,
 }
 {\,]#3\end{alignedat}
}
\DeclareMathAlphabet{\pazocal}{OMS}{zplm}{m}{n} 
\newcommand*{\cA}{\pazocal{A}}
\newcommand*{\cC}{\pazocal{C}}
\newcommand*{\cD}{\pazocal{D}}
\newcommand*{\cM}{\pazocal{M}}
\renewcommand*{\cP}{\pazocal{P}} 
\newcommand*{\cR}{\pazocal{R}}
\renewcommand*{\cS}{\pazocal{S}} 
\newcommand*{\cT}{\pazocal{T}}
\newcommand*{\cV}{\pazocal{V}}
\newcommand{\Sim}{\ensuremath{\mathbf{\sigma}}}
\newcommand*\GHZ{\ensuremath{}{\sf GHZ}}
\newcommand*{\eqdef}{\coloneqq}
\newcommand*{\xor}{\oplus}
\newcommand*{\source}{\ensuremath{\mathtt{S}}}
\DeclareMathOperator{\TD}{TD}
\newcommand*{\Merge}{\mathsf{Merge}}
\newcommand*{\CheckState}{\mathsf{Check}}
\newcommand*{\eps}[0]{\varepsilon}
\renewcommand*{\R}[0]{\mathbb{R}}
\newcommand*{\Z}[0]{\mathbb{Z}}
\newcommand*{\gateI}{\vect{I}}
\newcommand*{\gateX}{\vect{X}}
\newcommand*{\gateZ}{\vect{Z}}
\newcommand*{\gateH}{\vect{H}}
\newcommand*{\gateRZ}{\vect{R}_z}
\newcommand*{\gateRX}{\vect{R}_x}
\newcommand*{\gateCNOT}{\vect{CNOT}}
\newcommand*{\gateCZ}{\vect{\wedge{}Z}}
\newcommand*{\eqAboveWidth}[2][=]{\mathrel{\stackrel{\substack{#2}}{#1}}}
\newcommand*{\eqAbove}[2][=]{\eqAboveWidth[#1]{#2}}
\newcommand*{\eqeq}[2][=]{\eqAbove[#1]{(\ref{#2})}}
\newcommand*{\eqSubeq}[3][=]{\eqAboveWidth[#1]{(\hyperref[#2]{\ref{#2}.#3})}}
\newcommand*{\eqZXandSubeq}[4][=]{\eqAboveWidth[#1]{\myZXRule{#2}, (\hyperref[#2]{\ref{#3}.#4})}}
\newcommand*{\eqeqDef}[2][=]{\eqAbove[#1]{\ref{#2}}}
\newcommand*{\eqeqLem}[2][=]{\eqAbove[#1]{\ref{#2}}}
\newcommand*{\myZXRule}[1]{\hyperlink{eq:zxRule#1}{#1}}
\newcommand*{\eqZX}[1]{\eqAbove{\myZXRule{#1}}}
\newcommand*{\eqZXX}[2]{\eqAbove{\myZXRule{#1}, \myZXRule{#2}}}
\newcommand*{\eqZXXX}[3]{\eqAbove{\myZXRule{#1}, \myZXRule{#2}, \myZXRule{#3}}}
\newcommand{\addEditor}[2]{%
  \expandafter\newcommand\csname #1AddText\endcsname[2][]{%
    \ifdefined\removeCommentsAcceptPropositions %
      ##2%
    \else%
      \textcolor{#2}{\sout{##1}}\textcolor{#2!60!black}{##2}%
    \fi%
  }%
  \expandafter\newcommand\csname #1Comment\endcsname[1]{%
    \ifdefined\removeCommentsAcceptPropositions %
    \else%
      \textcolor{#2!60!white}{\textbf{##1}}%
    \fi%
  }%
  \expandafter\newcommand\csname #1Remove\endcsname[1]{%
    \ifdefined\removeCommentsAcceptPropositions %
    \else%
      \textcolor{#2}{\sout{##1}}%
    \fi%
  }%
}
\definecolor{darkgreen}{rgb}{0., 0.7, 0}
\protected\def\abx@missing#1{\textbf{??}}
\newcommand{\filter}{\ensuremath{\bot}}
\NewDocumentCommand{\Verif}{O{\ket{G}}}{\ensuremath{\cV_{#1}}}
\newcommand{\CoinFlip}{\ensuremath{\cR_{\mathsf{CoinFlip}}}}
\newcommand{\Concrete}{$\pi_{[n]}\cR\pi_S$}
\definecolor{secondaryColor}{RGB}{206,149,0} 
\crefname{equation}{Eq.}{Eqs.}
\crefname{figure}{Fig.}{Figs.}
\Crefname{equation}{Equation}{Equations}
\Crefname{figure}{Figure}{Figures}
\Crefname{Protocol}{Protocol}{Protocols}
\Crefname{Resource}{Resource}{Resource}
\renewcommand{\sample}{\mathrel{\mathpalette\sample@\relax}}
\newcommand{\sample@}[2]{%
  \ooalign{%
    \hspace{\stretch{2}}\raisebox{0.7\height}{$\m@th\demotestyle{#1}\mathdollar$}\hspace{\stretch{1}}\cr
    $\m@th#1\leftarrow$\cr
  }%
}
\newcommand{\demotestyle}[1]{%
  \ifx#1\displaystyle\scriptstyle\else
  \ifx#1\textstyle\scriptstyle\else
  \scriptscriptstyle\fi\fi
}
\title{All graph state verification protocols are composably secure}
\author[1]{Léo Colisson\,\orcidlink{0000-0001-8963-4656}}
\author[2]{Damian Markham}
\author[3]{Raja Yehia\, \orcidlink{0000-0002-7843-7398}}
\affil[1]{QuSoft and Centrum Wiskunde \& Informatica, Science Park 123, 1098 XG Amsterdam, Netherlands}
\affil[2]{Sorbonne Université, CNRS, LIP6, 4 place Jussieu, F-75005 Paris, France}
\affil[3]{ICFO-Institut de Ciencies Fotoniques, The Barcelona Institute of Science and Technology, Castelldefels, Spain}
\begin{document}

\maketitle

\begin{abstract}
  Graph state verification protocols allow multiple parties to share a graph state while checking that the state is honestly prepared, even in the presence of malicious parties. Since graph states are the starting point of numerous quantum protocols, it is crucial to ensure that graph state verification protocols can safely be composed with other protocols, this property being known as \emph{composable security}. Previous works~\cite{YDK21_ComposableSecurityMultipartite} conjectured that such a property could not be proven within the abstract cryptography framework: we disprove this conjecture by showing that \emph{all} graph state verification protocols can be turned into a composably secure protocol with respect to the natural functionality for graph state preparation. Moreover, we show that any \emph{unchanged} graph state verification protocols can also be considered as composably secure for a slightly different, yet useful, functionality. Finally, we show that these two results are optimal, in the sense that any such generic result, considering arbitrary black-box protocols, must either modify the protocol or consider a different functionality.

  Along the way, we show a protocol to generalize entanglement swapping to arbitrary graph states that might be of independent interest.
\end{abstract}

\newpage
\tableofcontents
\newpage

\section{Introduction}

Quantum networks enhance today's networks capabilities by providing a higher level of security, based on the inviolable laws of physics, but also by enabling the emergence of new protocols impossible to obtain classically. The spectrum of quantum protocols is wide, starting from quantum teleportation~\cite{BBC+93_TeleportingUnknownQuantum} to delegated computation~\cite{BFK09_UniversalBlindQuantum}, verifiable computation~\cite{FK17_UnconditionallyVerifiableBlind,MF18_PostHocVerification,GKK19_VerificationQuantumComputation}, multi-party computation~\cite{DNS12_ActivelySecureTwoParty,MultipartyQC,DGJ+20_SecureMultipartyQuantum,GLSV21_ObliviousTransferMiniQCrypt,CMS23_ObliviousTransferZeroKnowledge}, quantum money~\cite{Wie83_ConjugateCoding,QuantumMoney}, anonymous transmission~\cite{CW05_QuantumAnonymousTransmissions,AnuAnonymity}, copy-protection~\cite{Aar09_QuantumCopyProtectionQuantum}, leader elections and coin flipping~\cite{Leaderelection,weakcoinflippin}, e-voting~\cite{HZBB06_QuantumbasedPrivacyVoting,fedeVoting}, and more.

A large fraction of these protocols, like anonymous transmission protocols~\cite{CW05_QuantumAnonymousTransmissions,AnuAnonymity}, expect parties to share before the beginning of the protocol a number of fundamental quantum states like Bell pairs, \GHZ{} states, or, more generally, arbitrary \emph{graph states}. This task is typically achieved using a \emph{graph state verification protocol}, whose role is to securely distribute a graph state among all parties.

These graph state verification protocols should typically be resilient to deviations from possibly malicious parties, whether they are controlling the source or not. Such security properties are usually proven in a weak, so called game-based model. In this model, we can only prove guarantees on the final quantum state, but we cannot really obtain any guarantee on the behavior of the protocol when it is composed into other protocols (which is the whole point of graph state verification protocols!), or when the adversary is allowed to run attacks in parallel.

As a consequence, it is often unclear if the security of the original protocol is preserved when the graph state is obtained via a \emph{graph state verification protocol} instead of being honestly generated by a trusted third party, leading to the natural question:
\begin{center}%
  \emph{%
    Is it safe to compose any arbitrary protocol with any arbitrary graph state verification protocol?\\
    Is it still secure if the adversary can run multiple attacks in parallel?%
  }%
\end{center}

The study of the composition of protocols is typically done in a security framework where the notion of \emph{functionality} or \emph{resource} is introduced in order to abstract the properties of a given protocol~\cite{Can01_UniversallyComposableSecurity,Unr10_UniversallyComposableQuantum,MR11_AbstractCryptography,Mau12_ConstructiveCryptographyNew}. A functionality can be seen as a trusted party: this way a protocol is said to realize a given functionality if it is impossible to say if we are running the actual protocol or the functionality. With this concept in mind, creating new protocols from sub-protocols is a breeze: we just need to prove that the protocol is secure when the sub-protocol is implemented by a functionality, and we are automatically guaranteed that the protocol will still be secure if the functionality is replaced with any sub-protocol realizing this functionality, even if the adversary is allowed to run attacks in parallel. Composing functionalities is therefore fundamental when designing protocols, since many more advanced protocols are often obtained by composing simpler sub-protocols. This allows to build on previous works and to use functionalities as black-boxes with definite input and outputs.

When using the terminology of these frameworks, the above questions can be reformulated as follows:

\begin{center}
  \emph{Do composable graph state verification protocols exist?}
\end{center}

\paragraph{Our results.} In this work we answer positively to this interrogation, proving that \emph{any} secure graph state verification protocol is composable. This answers an open question raised in~\cite{YDK21_ComposableSecurityMultipartite} that was suggesting that there might not even exist a single composable state verification protocol.

\needspace{3cm}More specifically:
\begin{itemize}
\item We present a method to turn any arbitrary graph state verification protocol, secure in the game-based model, into a composably secure protocol realizing the natural functionality $\Verif$ for graph state verification. This ``compilation'' only adds one round of classical communication at the end of the protocol, and mostly preserves the guarantee of the original protocol. More precisely, if the final state obtained in the real protocol is supposed to be $\eps$-close to the target graph state for some notion of closeness, then the protocol $\eps$-realizes $\Verif$. Our results are expressed in the abstract cryptography framework~\cite{AbstractCryptography}.
\item We also show that any \emph{unchanged} graph state verification protocols can also be considered as composably secure for a slightly different, yet useful, functionality.
\item We show that it is \emph{impossible} to prove that any arbitrary unchanged protocols realizes $\Verif$ having only black-box access to the protocol, without either changing the protocol, or the functionality, showing that the above results are optimal. Note that our impossibility result assumes that the simulators has a certain natural structure, which seems hard to avoid when we consider the protocols as black-boxes.
\item Along the way, we show a protocol to generalize entanglement swapping to arbitrary graph states, which might be of independent interest. Since graph-state manipulation can be challenging using the usual density matrix formalism, we use scalable ZX-calculus~\cite{CK17_PicturingQuantumProcesses,CHP19_SZXcalculusScalableGraphical} to prove our result, asserting the relevance of this language for complex graph state operations.
\end{itemize}

This paper is organized as follows: in \cref{sec:Preliminaries} we start by defining formally graph states verification protocols as well as introducing scalable ZX-calculus and composable security in the Abstract cryptography framework. In \cref{sec:CompSecurityofGraphState}, we then define 
our ideal functionality outputting graph states and we show the equivalence between generic graph state verification protocols and this ideal functionality. Our ideal functionality applies corrections on the outputted graph state, which might seem unreasonable in a concrete implementation. Hence, in \cref{sec:withoutCorr}, we show how to modify graph state verification protocols to realize the ideal functionality without these corrections. Finally, in \cref{sec:UseCases}, we show how our result applies to two existing graph-state verification protocols.

\section{Preliminaries}
\label{sec:Preliminaries}
\subsection{Notations}

We assume basic familiarity with quantum computing~\cite{NC10_QuantumComputationQuantum}. For any subset of index $M \subseteq [n]$, and matrix $G \in \Z_2^{n \times m}$ (resp.\ $x \in \Z_2^n$), we denote $x_M$ as the vector obtained from $x$ after removing lines not in $M$. For any quantum gate $\gateX$, $\gateX^{x_M}$ will denote the application of $\gateX$ on all qubits $i \in M$ such that $x_i = 1$. The fidelity $F(\rho, \sigma)$ of two quantum states $\rho$ and $\sigma$ expressed in term of density matrices is defined as $F(\rho, \sigma) \eqdef \Tr \sqrt{\rho^{1/2}\sigma\rho^{1/2}}$.

\subsection{Graph states}

A quantum state $\ket{G}$ is called a graph state if it can be represented by a graph $G=(V,E)$ where the vertices $V$ are $\ket{+}$ states and the edges $E=\{(v_i,v_j)\}_{i=1}^{|E|}$ correspond to controlled $\gateZ{}$ gates ($\gateCZ$) between the vertices $v_i$ and $v_j$. Thus a graph state can be expressed as
\begin{equation}
\label{eq:graphstatedef1}
    \ket{G}=\prod_{(v_i,v_j) \in E} \gateCZ^{\{v_i,v_j\}}\ket{+}^{\otimes V}
\end{equation}

As a consequence to this construction, graph states can also be uniquely defined through a set of so-called \textit{stabilizers}. They are operators who leave the graph state unchanged, or in other words, a graph state is the eigenstate of eigenvalue 1 of its stabilizers. For a given graph state $\ket{G}$ with vertices $V =\{v_i\}_{i=1}^{|V|}$, the stabilizers are

\begin{equation}
\label{eq:graphstatedef2}
S_{v_i}= \sigma^{v_i}_X \prod_{v_j \in N(v_i)}\sigma^{v_j}_Z
\end{equation}

where $\sigma_X$ and $\sigma_Z$ are the Pauli matrices and $N(v_i)$ is the neighborhood of $v_i$. For each vertex $v_i$, we have that $S_{v_i}\ket{G} = \ket{G}$. The set $\{S_{v_i}\}_{v_i \in V}$ characterizes the graph state $\ket{G}$ and is an equivalent definition to the one of \cref{eq:graphstatedef1}.\\

Graph states are multipartite entangled states. This means that measuring one of the qubit of a graph state will have an effect on the adjacent qubits of the state. This property is used for example in measurement-based quantum computing~\cite{MBQC} where a computation is done by sequentially measuring the qubits of a universal graph state (e.g. the brickwork state) which propagates through the state. In a communication network setting where a $n$ qubit graph state is shared among $n$ parties, it can be used to create a shared list of correlated bits by asking each party to measure their qubit. This list can then be used in many contexts, for example to create a common secret key among the parties in so-called conference key agreement protocols, the multipartite counterpart of quantum key distribution.\\

\subsection{Scalable ZX-calculus}

The ZX calculus~\cite{CD08_InteractingQuantumObservables} is a language allowing us to prove equality between quantum operations diagrammatically, using a simple set of rewriting rules between graphs. We present in \cref{appendix:introZX} the basic properties of the ZX calculus, as well as one extension called scalable ZX~\cite{CHP19_SZXcalculusScalableGraphical}, that we use in our proofs. Note however that all equations proven in this article can be checked manually without ZX calculus, and that readers not interested in checking the proof should be able to read this article without any ZX knowledge. We refer the interested reader to~\cite{van20_ZXcalculusWorkingQuantum} and~\cite{CK17_PicturingQuantumProcesses} for more details.

\subsection{Composable security and Abstract Cryptography}
\label{sec:CompSecDef}
As mentioned in the introduction, graph state verification protocols are usually used as sub-protocols in more complex protocols. They are meant to be reused many times, in a setting where a graph state is shared between $n$ parties and some of the parties and/or the source may be dishonest. Hence, they do not only need to be secure for one run, but the security should also hold for an arbitrary number of repetition of the protocol. This level of security is called composable security, as opposed to game-based security where we study specific attack models against a protocol.\\

To prove composable security, one have to use a so called simulation-based framework where the security proofs are composable. In this paper, we will use the Abstract (or Constructive) Cryptography (AC) model, a top-down approach developed by U.\ Maurer and R.\ Renner~\cite{AbstractCryptography,FromToConsCrypto,ConsCryptography}. This framework uses the concept of abstract systems to express cryptography as a resource theory. A cryptography protocol is viewed as the construction of some \textit{ideal} resource $\cS$ out of other \textit{real} resources $\cR$. Resources are box-like abstract systems with interfaces that are accessed by the parties. They represent any non-local operation such as communication channels, but also more involved functionalities to model for instance a coin tossing protocol. They can be composed in sequence or in parallel to create bigger resources. 

Local operations\===like the protocol run by a given party\===are called \emph{converters}. They are plugged into a single interface of a resource, changing the interaction of the resource with the outside world. A converter $\pi$ attached to a resource $\cR$ creates a new resource that we write $\pi\cR$. We usually call the converters $\pi$ acting in the real world \emph{protocols}, as they represent the local operations done by honest parties. In the ideal world, the \emph{honest use of an ideal resource} is done through converters called \emph{filters}, typically denoted $\bot$, that send the input expected from honest parties to the resource.

Finally, the distance between two resources is formalised through the notion of a distinguisher. It is an object that access all the interfaces of two given resources, such as $\pi\cR$ and $\bot\cS$, and tries to distinguish them by sending inputs and comparing the outputs. When considering statistical security, the distinguisher can be given unlimited computing power. On the other hand, computational security only allows distinguishers to have limited computing power. When any distinguisher accessing the interfaces of both resources cannot decide which of the two system is the ideal or the real one, we say that the resources are equivalent and write $\pi\cR\approx\bot\cS$.

Resources, converters and distinguishers are the building blocks of the AC theory.  We refer the reader to~\cite{AbstractCryptography} for more details about the mathematical construction of the framework.\\

The secure construction of an ideal resource, that represents the ideal functionality that we want to achieve, from a concrete resource, that represents the actual realization of the protocol, is proven by showing a series of equivalences. In AC, a dishonest behaviour from a party is represented by unplugging the associated converter on the concrete resource, which means that this party is not following the protocol. This leaves some new interfaces accessible for a distinguisher on the concrete resource. To prove that the security of the protocol still holds, one has to find a converter called a \textit{simulator} to plug into the ideal resource to make it indistinguishable from the dishonest concrete resource. The full security proofs thus consist in finding simulators for each possible subset of dishonest party. We give the security definition in AC below in \cref{def:securityDef}.

\begin{definition}[Security in AC]\label{def:securityDef}
    
 let $\Pi=\{\pi_i\}_{i=1}^n$ be a protocol run by $n$ parties using the concrete resource $\cR$  and let $\cS$ be an ideal resource with all the desired properties expected from the protocol. $\cR$ and $\cS$ have interfaces $I$. We say that \textbf{$\Pi$ securely realizes $\cS$ out of $\cR$  within $\eps$} or that \textbf{$\Pi$ $\eps$-realizes $\cS$} and write $\cR\xrightarrow{(\Pi,\eps)} \cS $ if there exist simulators $\sigma = \{\sigma_{i}\}$  such that\footnote{Technically speaking~\cite[Thm.~2]{MR11_AbstractCryptography}, we should also define filters for $\cR$ possibly restricting the access to $\cR$ for honest participants. For simplicity, and without loss of generality, we will often assume that the filters for $\cR$ are trivially forwarding their inputs, and we integrate the original filters in the parties. This way, proving the correctness of a protocol using $\cR$ can be simplified as $\pi_H \cR \approx \filter_H \cS \sigma_M$ instead of $\pi_H \filter^{\cR}_H \cR \approx \filter_H \cS \sigma_M$.}:
\begin{equation}
\label{eq:resscons}
    \forall H \subseteq I, \pi_{H}  \cR \approx_{\eps} \sigma_{I \setminus H}\cS\bot_H,
\end{equation} 
with $\forall H \subseteq I, \pi_{H}=\{\pi_i\}_{i\in H}$ and $\bot_{H}=\{\bot_i\}_{i\in H}$.\\
\end{definition}

\begin{remark}\label{rq:GlobalSimu}
Note that in this paper, we will equip the ideal resources with a communication channel that can forward any\===possibly quantum\===message between any party. This is only needed to allow simulators to perform non-local operations, and is filtered for honest parties. For a given subset of honest parties $H$, we will thus have to find only one global simulator $\sigma_{I \setminus H}$ such that the above equivalence relation holds.\end{remark}

Since the AC framework is composable, any security proof proven within the framework is composable. This means all $\cR,\cS$ and $\cT$ resources and $\pi,\nu$ converters (protocols) such that $\cR\xrightarrow{\pi} \cS$ and $\cS\xrightarrow{\nu} \cT$ we have that

\begin{equation}
    \cR \xrightarrow{\pi} \cS \wedge \cS \xrightarrow{\nu} \cT \implies \cR \xrightarrow{ \nu \circ \pi} \cT.
\end{equation}

\section{Composable security of graph state verification}
\label{sec:CompSecurityofGraphState}
\subsection{Definition of the ideal verification resources}
In this section, we present the ideal functionalities that we will use to prove the security of generic graph state verification protocols.

\paragraph{Ideal resource.}
Note than in this article we will consider two functionalities:
\begin{itemize}
\item $\Verif^f$ (\cref{protoc:idealResourceVerif}) is the functionality that we consider in this section: on the one side, we can show that any secure graph state verification algorithm realizes this functionality, without any modification to the protocol. However, this functionality is less natural than what one would expect, as it also allows malicious adversaries to apply some malicious corrections to the graph state given to honest adversaries. The function $f$ acts as a safeguard to only allow some corrections to be performed by the adversary. More precisely, $f$ must be a function taking as input a subset of corrupted party $M$ and $(x,z)$, a list of $X$ and $Z$ corrections to apply on the qubits sent to honest parties, and must output either $\top$ if this correction is allowed or $\bot$ otherwise. We discuss later in \cref{rq:justificationF} the motivations behind this verification.
\item $\Verif$ (\cref{protoc:idealResourceVerifNoCorrections}), on the other hand, is much simpler and closer to what one would expect as it simply sends $\ket{G}$ to all parties (unless the malicious parties abort), but we cannot directly show that all protocols realize this functionality: we need to apply an additional step where all parties apply a random stabilizer on their part of the state.
\end{itemize}
Since both resources have pro and cons, we study them separately in the following two sections, starting from $\Verif^f$. $\Verif^f$, that we describe formally in \cref{protoc:idealResourceVerif} and informally in \cref{fig:IdealVerif}, is an abstract system outputting all qubits of a given graph state $\ket{G}$ or an abort signal, with interfaces allowing to model possible dishonest behaviors from the parties and the source. Note that $\Verif^f$ is not a functionality allowing to create any graph state, but for any graph state $\ket{G}$, we can construct an ideal resource $\Verif^f$ outputting $\ket{G}$. 

The $\Verif^f$ resource has $n+1$ interfaces, one for each party and one for the source. For each party $i$, the value $c_i \in \{0,1,\bot\}$ indicates whether the party is honest or dishonest. To an honest party ($c_i=0$), it will output either an Abort signal or a qubit from the graph state $\ket{G}$. To a dishonest party ($c_i =1$ or $ \bot$), it will output the corresponding qubits of $\ket{G}$ and then wait for corrections $(a_i,b_i)$. For readability, in \cref{fig:IdealVerif}, we show on the left the input and output corresponding to honest uses of the resource from the parties, while the bottom input and output correspond to dishonest behavior. It should however be kept in mind that only one input/output interface is accessible to each party. The resource $\Verif^f$ is equipped with a function $f$ that outputs a Boolean stating whether the corrections proposed by the dishonest parties have the correct form. The details of how the resource works is given below in Protocol 1, and the details of $f$ will be explicit in the security proof. 

\begin{figure}[!ht]
    \centering
    \includegraphics[width=15cm]{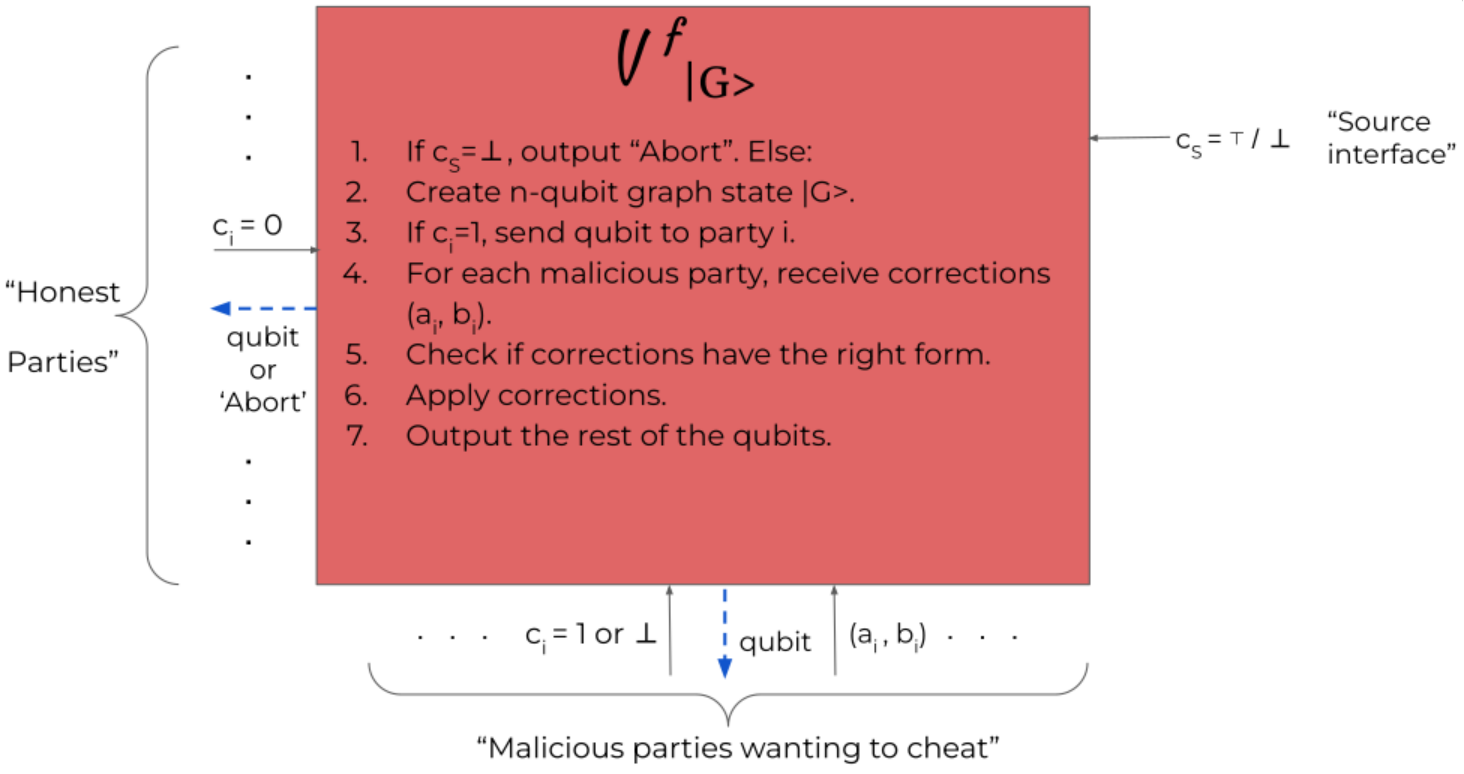}
    \caption{Informal presentation of the ideal resource $\Verif^f$  for verified graph state sharing.}
    \label{fig:IdealVerif}
\end{figure}

\begin{resource}{Ideal resource $\Verif^f$}\label{protoc:idealResourceVerif}
  \begin{enumerate}
  \item Receive from the source's interface $c_\source \in \{\top,\bot\}$
  \item If $c_\source = \bot$, abort and send $\bot$ to all parties. Otherwise it continues.
  \item Create $\ket{G}$.
  \item Receive $\{c_i\} \in \{0,1,\bot\}^n$ from each party. If any $c_i = \bot$ send $\bot$ to all parties and abort. Otherwise, let $M = \{c_i \mid c_i = 1\}$ be the set of malicious parties, $H = [n] \setminus M$ the honest parties, $m = |M|$ the number of malicious parties and $h=|H|$ the number of honest parties.
  \item For each $i \in M$, send the $i$-th qubit of $\ket{G}$ to interface $i$.
  \item Receive from each malicious party $i \in M$ corrections $(a_i,b_i) \in (\{0,1\}^{h})^2$, and define $x = \xor_{i \in M} a_i=\{x_i\}_{i \in H}$ and $z = \xor_{i \in M} a_i=\{z_i\}_{i \in H}$. 
  \item If there is a least one malicious party, check if $f_G(M,x,z) = \top$ (to check that corrections are well formed), if not sent $\bot$ to all parties and abort.
  \item If ok, Apply $\gateZ^{z_1}\otimes...\otimes \gateZ^{z_{h}}$ and $\gateX^{x_1}\otimes...\otimes \gateX^{x_{h}}$ to the remaining qubits of $\ket{G}$.
  \item Send these qubit to the parties.
  \end{enumerate}
  Additionally, we include in $\Verif^f$ an additional communication channel $\cC$ that can forward any\===possibly quantum\===message between any party (see \cref{rq:GlobalSimu}).
  We also define naturally the filter $\filter_\source$ as the converter that sends $\top$, and for any $i \in [n]$, we define $\filter_i$ as the converter that sends $c_i = 0$ and forwards any message sent or received by the ideal function functionality (excluding messages sent on $\cC$ that are just blocked).
\end{resource}

In Abstract Cryptography, we typically model the \emph{honest use of an ideal resource} by adding special converters called \emph{filters} that send the input expected from honest parties to the resource. In our case, an honest use of the $\Verif^f$ resource corresponds to the source sending $c_S = \top$ and each party $i$ sending $c_i = 0$ to the resource. We thus define the following: the filter $\bot_S$ that corresponds to an honest source sending $c_S = \top$ and, for $i\in [n]$, the filter $\bot_i$ that corresponds to an honest party sending $c_i = 0$ to $\Verif^f$. For the sake of simplicity, we will write $\bot_{[n]}$ the filter corresponding to the parallel composition of $\bot_i$ for all $i \in [n]$. An honest use of the ideal resource is thus represented by the so-called filtered resource, that we show in \cref{fig:FilteredResource}.

\begin{figure}[!ht]
    \centering
    \includegraphics[width=7cm]{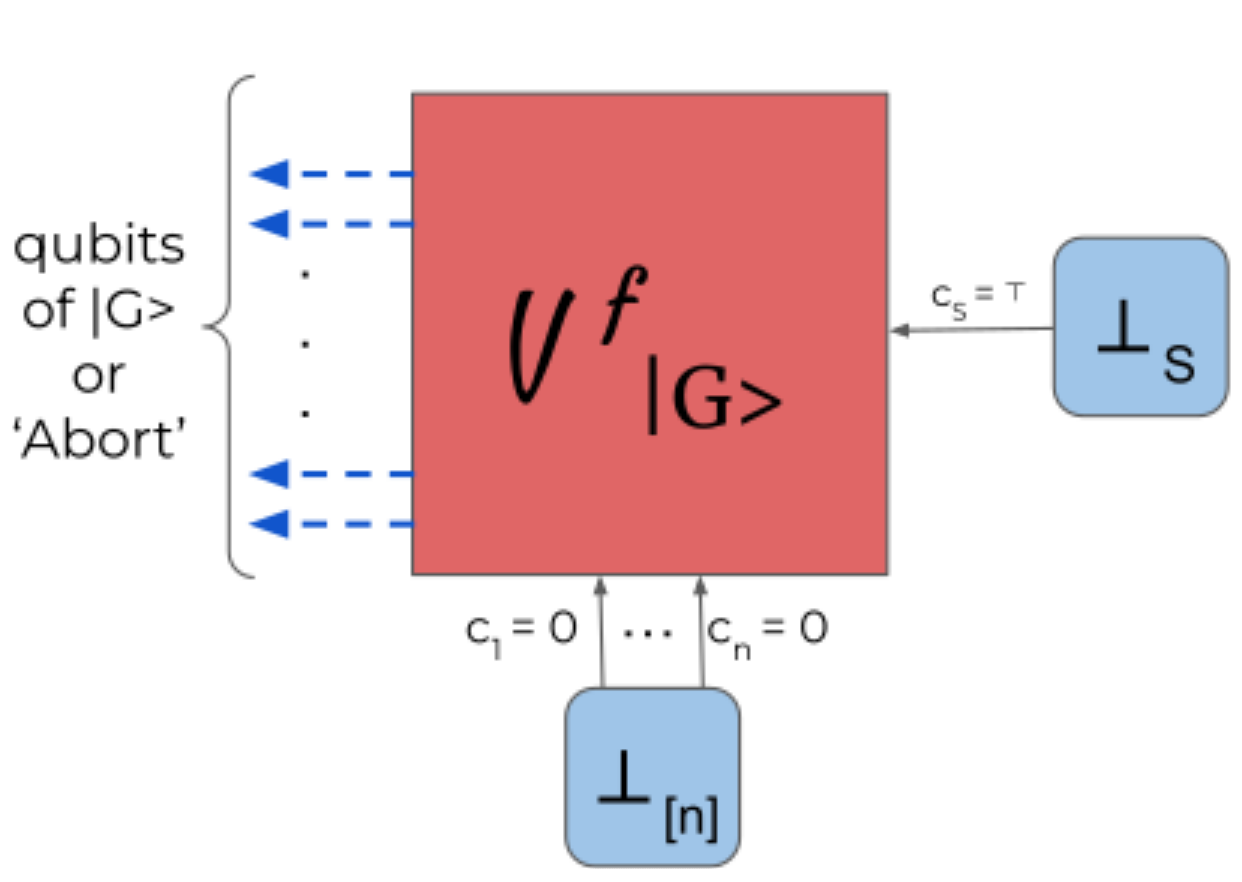}
    \caption{Ideal filtered resource. The filter $ \bot_S$ corresponds to an honest source inputting $c_S = \top$ and, for $i\in [n]$ the filter $\bot_i$ corresponds to an honest party sending $c_i = 0$ to $\Verif^f$}
    \label{fig:FilteredResource}
\end{figure}

\subsection{Definition of the concrete verification protocols}
Generally, protocols to verify the preparation of graph states consist of the following steps: first, quantum states are shared between $n$ parties, one qubit of each state per party, then, they test a random selection among these states and if the checks are positive, they keep the others for later use. The tests usually consist in local operations and measurements, classical communication and multiparty computation that outputs a bit indicating if the state is far away the desired state reduced to the honest parties, for a certain distance (usually the trace distance). By randomising which states they test and which one they use, the parties prevent a malicious source from sending the desired states. Verification protocols can also be done sequentially, by asking the source to send quantum states one after the other and randomly choosing the ones that they test and the ones that they use. \\

Verification protocols provide a bound on the probability that, given the protocol has not aborted after a certain number of tests, the final reduced state that is used for computation or communication by the honest parties is close to the desired graph state reduced to the honest parties. This bound depends on the number of parties, the number of tests, the number of dishonest parties and the maximum distance that is accepted with the desired graph state.\\

\paragraph{Definition of graph state verification protocols.}

The literature uses different (but mostly equivalent) security definitions when considering graph state verification protocols. Note that the security of these protocols is never expressed in the composable AC framework. The main contribution of this article is actually to prove that any protocol fulfilling the security definition of \cref{def:graphStateVerifProtoc} below is also composably secure.

\begin{definition}[Graph state verification protocol]\label{def:graphStateVerifProtoc}
  Let $\Pi = \{\pi_i\}_{i \in [n]} \cup \{\pi_\source\}$ be a protocol between $n$ parties and a source, interacting through a resource $\cR$ in charge of modeling, for instance, the communication channels between all parties. We will say that $\Pi$ is a \emph{$\eps$-graph state verification protocol} if the following properties are respected:
  \begin{itemize}
  \item \textbf{Correctness}: if all parties are honest, they output a state negligibly close (in trace distance) to $\ket{G}$, i.e.\ $\pi_{[n]}\cR \pi_\source$ outputs $\rho$ such that $\TD(\rho, \ket{G}) \geq 1-\negl[\lambda]$.
  \item \textbf{Security}: for any set of honest parties $H \subseteq [n] \cup \{\source\}$, the honest parties output their state at the same time\footnote{This can for instance be done using a broadcast channel, which is anyway implicitly needed in most existing works.}, each party outputting either a special symbol $\ket{\bot}$ if they aborted, or a quantum state otherwise\footnote{Note that this implies that the Hilbert space is spanned by $\{\ket{0},\ket{1}, \ket{\bot}\}$, where $\ket{\bot}$ is orthogonal to the other two states. This is quite practical as this way we can only maintain $n$ registers instead of $2n$, and we do not need to worry about the content of the other register when the abort register contains an abort. Note that otherwise, this is mostly equivalent as we can test if a party aborted by simply measuring $(\ketbra{\bot}{\bot}, I - \ketbra{\bot}{\bot})$, without disturbing the state if it is only spanning $(\ket{0}, \ket{1})$.}. Moreover, when considering any adversary\footnote{$\cA$ might be bounded or unbounded depending on the assumptions on the protocol.} $\cA$ corrupting parties in $[n]\setminus H$, there exists $p \in [0,1]$ such that:
    \begin{align}
      F(\rho, \sigma) \geq 1-\eps(\lambda)\label{eq:graphStateVerifProtocSecurity}
    \end{align}
    where $F$ is the fidelity (we use the definition of~\cite{NC10_QuantumComputationQuantum}, sometimes called the square root fidelity), $\rho \eqdef \esp[\rho_i \gets \pi_{H}\cR \cA]{\Tr_{[n]\setminus H} \rho_i}$ is the averaged state obtained by the honest parties at the end of the protocol, where we average over all randomness involved in $\cA$ and in the whole protocol\footnote{We can also equivalently purify $\cA$ and the protocols using the Stinespring dilation, and simply trace-out all the registers except for the output register owned by the honest party. The Stinespring dilation allows us to postpone any measurement by essentially replacing them with a $\gateCNOT$ on an auxiliary qubit, and to sample randomness by creating a $\ket{+}$ state measured using the postponed measurement that we just described.} and $\sigma \eqdef p\Tr_{[n]\setminus H}(\ketbra{G}{G}) + (1-p) \ketbra*{\bot^{|H|}}{\bot^{|H|}}$ denotes the mixture where all honest parties either abort at the same time with probability $1-p$ or output a qubit that is part of $\ket{G}$.
  \end{itemize}
\end{definition}

\paragraph{Concrete resource}

In this work, we model all graph state verification protocols in the same way: we abstract all the resources required to perform these protocols into one resource $\cR$ and the local operations done by each parties into converters $\{\pi_i\}_{i=1}^n$. For instance, $\cR$ can contain the quantum channel from the source to each party, some authenticated classical channels used to communicate between the parties, a coin flipping resource or a multiparty computation resource that may involve shared randomness resources etc. The converters $\{\pi_i\}_{i=1}^n$, that we will write $\pi_{[n]}$, correspond to the protocol followed by the honest parties and often consist in applying some quantum operations on their qubit, measuring them and using the output in the multiparty computation that tests the state. Finally the source's protocol $\pi_\source$ generally consists either in sending a certain number of copies of a desired state or in sending them one by one. We illustrate in \cref{fig:ConcreteVerif} the concrete resource $\pi_{[n]}\cR \pi_\source$ corresponding to an honest run of the protocol, which simply outputs a state that is close to a fixed graph state $\ket{G}$.

\begin{figure}[!ht]
    \centering
    \includegraphics[width=15cm]{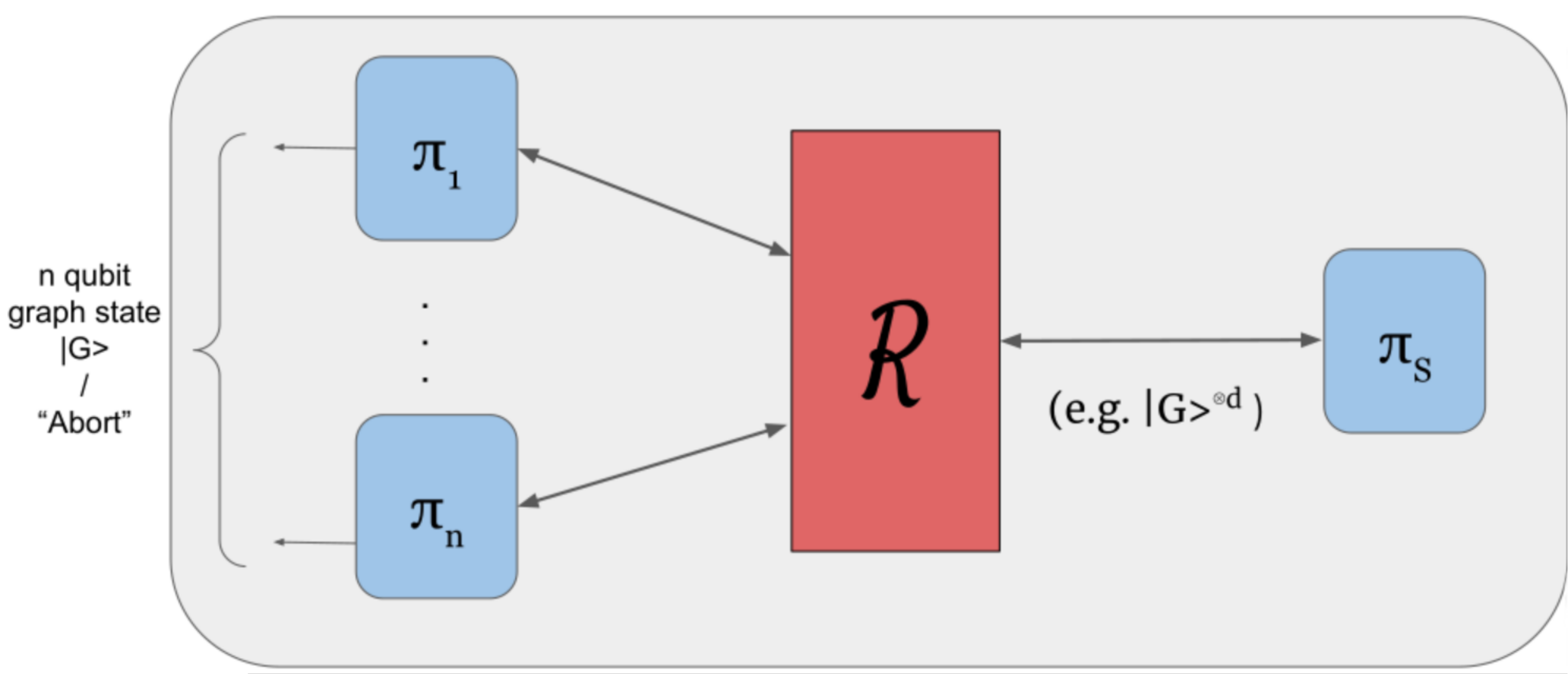}
    \caption{Concrete resource for generic graph state verification protocols}
    \label{fig:ConcreteVerif}
\end{figure}

\subsection{Security proof}
In this section we prove our main result, namely the composable security of any graph state verification protocol. Following the Security \cref{def:securityDef}, we will prove the following theorem in the rest of this section:

\begin{theorem}\label{thm:mainThm} 
  Let $\ket{G}$ be an arbitrary graph state, and let $\Pi$ be an $\eps$-graph state verification protocol according to \cref{def:graphStateVerifProtoc}. Then, $\Pi$ $(2\sqrt{2\eps - \eps^2})$-realizes the functionality $\Verif[\ket{G}]^f$, where the allowed corrections $f$ are defined as follows: let $U,r,R$ be defined like in \cref{lem:GaussPivot}, then we define $f(M, x, y)$ as the function that outputs $\top$ iff the last $|n|-r$ vectors of $Ux$ are $0$’s, and if $(U^T)^{-1}(z \xor G_H x) =
  \begin{bmatrix}
    b\\
    R^Tb
  \end{bmatrix}
  $ for some arbitrary vector $b \in \{0,1\}^r$.
\end{theorem}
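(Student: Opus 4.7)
The plan is to verify the indistinguishability $\pi_H \cR \approx_{2\sqrt{2\eps-\eps^2}} \sigma_{M}\Verif^f\bot_H$ required by \cref{def:securityDef}, for every partition of the interfaces into honest parties $H$ and malicious parties $M = ([n]\cup\{\source\}) \setminus H$. The bulk of the work is the construction of an explicit simulator $\sigma_M$ and the bookkeeping that converts the fidelity guarantee of \cref{def:graphStateVerifProtoc} into a trace-distance bound.

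First I would build $\sigma_M$ generically, from the honest code $\pi_H$ and the concrete resource $\cR$ alone. The simulator internally runs a virtualized copy of $\pi_H$ together with $\cR$ (and of $\pi_\source$ when the source is honest), and exposes the malicious interfaces of this virtualized execution to the external distinguisher, so that the distinguisher's view on $M$ is by construction identical in both worlds. When the virtual honest parties abort, $\sigma_M$ refuses to supply a well-formed correction in step~6 of \cref{protoc:idealResourceVerif}, triggering an abort on the ideal side. When the virtual honest parties produce a genuine quantum output, $\sigma_M$ uses the auxiliary channel $\cC$ to collect the malicious qubits produced by $\Verif^f$ and then runs the generalized entanglement-swapping primitive announced in the introduction between those qubits and the virtual honest output, extracting a classical Pauli pair $(x,z)$ that it forwards to $\Verif^f$ through the malicious correction interfaces. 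By design, the state delivered by $\Verif^f$ to the honest parties after this correction is exactly the virtual honest output.

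Next I would bound the distance. Let $\rho \eqdef \esp[\rho_i \gets \pi_H \cR \cA]{\Tr_M \rho_i}$ denote the averaged honest output in the real world and let $\sigma \eqdef p\Tr_M \ketbra{G}{G} + (1-p)\ketbra*{\bot^{|H|}}{\bot^{|H|}}$ be the mixture promised by \cref{def:graphStateVerifProtoc}, so that $F(\rho,\sigma)\geq 1-\eps$. The Fuchs--van de Graaf inequality yields $\TD(\rho,\sigma) \leq \sqrt{1-F(\rho,\sigma)^2} \leq \sqrt{2\eps-\eps^2}$. In the ideal world, by the construction of $\sigma_M$, the honest output is exactly the virtual honest output of an identically distributed execution of $\pi_H\cR$ against the same adversary, so its averaged marginal is again $\rho$ but it is \emph{also} a mixture of the form $\sigma$ (because the ideal functionality always outputs either $\bot^{|H|}$ or $\Tr_M\ketbra{G}{G}$ up to Pauli corrections that, restricted to the honest side, act trivially after averaging over the simulator's randomness). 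A triangle-inequality argument comparing both worlds to the common reference $\sigma$ then yields a total distinguishing advantage of at most $2\sqrt{2\eps-\eps^2}$.

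The two main obstacles are the correctness of the entanglement-swapping step and the admissibility of the extracted corrections under $f$. For the first, I need to show that swapping the virtual honest registers through the malicious qubits of a freshly prepared $\ket{G}$ really does teleport one state into the honest registers of the ideal functionality up to a Pauli correction\====this is where the generalized entanglement-swapping protocol of the paper (proven in scalable ZX-calculus) plays its central role. For the second, I must argue that every Pauli $(x,z)$ produced by the swapping step lies, up to stabilizer equivalence on the honest side, in the canonical set selected by $f$: concretely, multiplying by suitable stabilizers of $\ket{G}_H$ permits reducing $x$ so that only the pivot coordinates of $G_H$ (given by \cref{lem:GaussPivot}) are nonzero, and the accompanying $z$ component is then uniquely determined by $x$ and $G_H$ in exactly the form required by $f$. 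Verifying this algebraic normal-form statement is the technically most delicate ingredient of the argument.
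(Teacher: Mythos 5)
Your simulator coincides with the paper's: internally run $\pi_H\cR$, expose the malicious interfaces of this virtual execution to the distinguisher, and use the graph-state merging map of \cref{thm:anygraphismergeable} to stitch the virtual honest output onto the fresh $\ket{G}$ held by $\Verif^f$, forwarding the extracted Pauli pair $(x,z)$ as the correction. The overall architecture (correctness plus one simulator per corruption pattern, final loss $2\sqrt{2\eps-\eps^2}$) also matches. However, your distance analysis has a genuine gap. You compare the \emph{averaged marginals} $\rho$ and $\sigma$ of the honest output in the two worlds, but the distinguisher also holds the malicious registers, which are correlated with the honest output; the distinguishing advantage is governed by the trace distance of the \emph{joint} states, and closeness of marginals does not imply closeness of joint states. (Your intermediate claim that the ideal-world honest marginal ``is again $\rho$'' is also false: by construction it is exactly of the form $\sigma$, since $\Verif^f$ outputs either $\bot^{|H|}$ or a corrected share of a fresh $\ket{G}$; and your heuristic of adding two deviations from the common reference $\sigma$ would then only give $\sqrt{2\eps-\eps^2}$, not the claimed $2\sqrt{2\eps-\eps^2}$.) The paper bridges the marginal-to-joint gap with Uhlmann's theorem: $F(\rho,\sigma)\geq 1-\eps$ yields purifications $\ket{\phi_\rho},\ket{\phi_\sigma}$ with $\TD(\ket{\phi_\rho},\ket{\phi_\sigma})\leq\sqrt{2\eps-\eps^2}$, each related to the actual joint state ($\ket{\psi}^{R_H,R_E}$, resp.\ $\ket{G_p}$) by a unitary on the purifying register; since the merge-and-check map fixes $\ket{G_p}$ \emph{exactly} (this is precisely what mergeability buys), two applications of the triangle inequality on these purified joint states, together with contractivity of $\TD$, produce the factor $2$. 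Without this step your argument does not close.

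A second, smaller issue concerns $f$-admissibility. Your plan to reduce the extracted $(x,z)$ modulo stabilizers of $\ket{G}_H$ into a canonical form is not how the proof goes and is problematic: $\Verif^f$ applies exactly the correction it receives, so the simulator is not free to normalize it after the fact, and the relevant pivots are those of the biadjacency matrix $\Gamma$ between $H$ and $M$ (decomposed via \cref{lem:GaussPivot}), not of $G_H$. In the paper the merging map directly outputs corrections of the specific form that $f$ is \emph{defined} to accept, so admissibility holds by construction and no normal-form argument is needed.
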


\subsubsection{Correctness}

\begin{proof}
  
  The first step is to prove the correctness of the protocol when all parties are honest, i.e.\ we need to show that $\pi_{[n]}\cR \pi_\source \approx \filter_{[n]} \Verif^f \filter_\source$. This is also pictured in \cref{fig:HonestEquiv}, where on the left-hand side we represented the filtered version of the $\Verif^f$ resource, simply outputting the final\===expected\===state, and corresponding to a honest use of the resource.

  \begin{figure}[!ht]
    \centering
    \includegraphics[width=15cm]{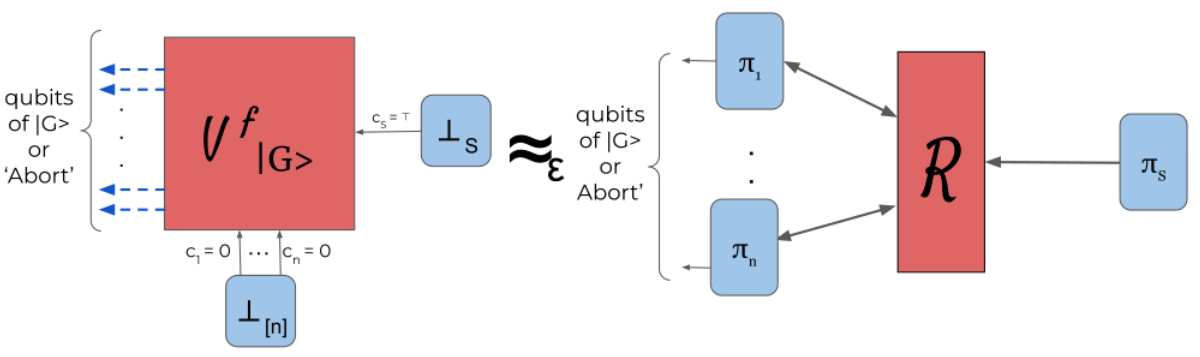}
    \caption{Correctness of the protocol: $\filter_{[n]} \Verif^f \filter_\source$ $\approx$ \Concrete}
    \label{fig:HonestEquiv}
  \end{figure}

  This is actually a direct consequence of the correctness of the protocol. First, we can easily see that since the filters disable any malicious behavior, $\filter_{[n]} \Verif^f \filter_\source$ is the system that outputs $\ket{G}$ with probability $1$. Moreover, due to the correctness property of $\Pi$, $\pi_{[n]}\cR \pi_\source$ outputs a state indistinguishable from $\ket{G}$. Hence, any distinguisher trying to distinguish between the two resource only gets indistinguishable copies of $\ket{G}$, i.e. $\pi_{[n]}\cR \pi_\source \approx \filter_{[n]} \Verif^f \filter_\source$. This ends the correctness proof.
\end{proof}

\subsubsection{Security}
The next step in the security proof, (see \cref{def:securityDef}), is to study the case of dishonest parties tampering with the verification protocol. This means that some subset $M \in [n] \cup \{\source\}$ of the parties, possibly including the source, are no longer following their local protocols $\pi_i$. The concrete resource representing the real protocol becomes $\pi_{H}\cR$ where $H \subseteq [n] \cup \{\source\}$ are the honest parties.

On the ideal resource, the malicious behavior of some subset $M$ of the parties is represented by removing the filters $\{\bot_i\}_{i \in M}$, letting the adversaries access interfaces of the ideal resource $\filter_{H} \Verif^f$. We need to find a simulator $\sigma_M$ such that there exists an $\epsilon$ such that $\pi_{H}\cR  \approx_{\epsilon} \filter_{H} \Verif^f \sigma_M$. We show the representation in AC in \cref{fig:DishonestTOUT}.

\begin{figure}[!ht]
    \centering
    \includegraphics[width=15cm]{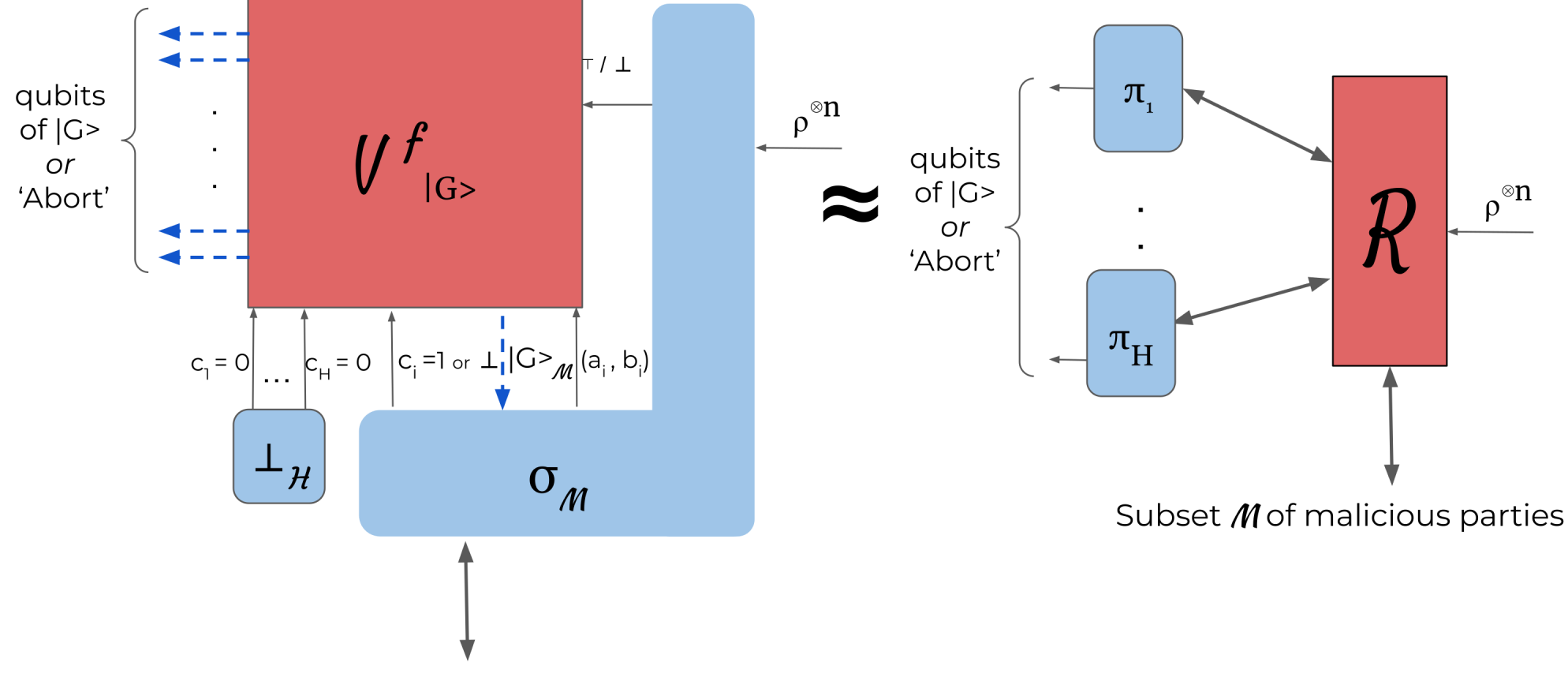}
    \caption{General case of the security proof: we need to find $\sigma_M$ such that $\pi_{H}\cR  \approx_{\epsilon} \filter_{H} \Verif^f \sigma_M$}
    \label{fig:DishonestTOUT}
\end{figure}

In order to construct such a simulator, and thus to prove that all graph state verification protocols are composable, we first study a property of graph states that we call ``mergeable''.

\paragraph{Mergeable states.}

 In order to motivate this property, let us try to prove that a given graph state verification protocol $\Pi = (\pi_1, \dots, \pi_n, \pi_\source)$ is composably secure, where $\pi_\source$ is the protocol followed by the source while $\pi_i$'s are the protocols followed by other parties. As explained above, this is done by finding a simulator to emulate on the ideal resource the interfaces of the concrete resource left open when removing the converters $\pi_i$ corresponding to the dishonest parties. Let us try naively to find such a simulator. If $H$ is the set of honest parties and $M \eqdef \bar{H}$ is the set of malicious parties, possibly including the source, we want to find a simulator $\sigma_M$ such that $\pi_H \cR \approx_\eps \bot_H \Verif \sigma_M$:
\begin{align}
  \begin{ZX}<circuit,forward=\cR>[circuit, thick lines,zx column sep=6mm]
    \rar[latex-latex] & \zxGate{\pi_H} \rar[latex-latex] & \zxGate{\cR} \rar[latex-latex] &
  \end{ZX} \approx_\eps
  \begin{ZX}<circuit,forward=\cV,forward=\Verif>[circuit, thick lines,zx column sep=6mm]
    \rar[latex-latex] & \zxGate{\bot_H} \rar[latex-latex] & \zxGate{\Verif} \rar[latex-latex] & \zxGate{\sigma_M} \rar[latex-latex] &
  \end{ZX}\label{eq:goalSim}
\end{align}
where we represented the interfaces of honest parties on the left and the interface of malicious parties on the right, and where $\bot_H$ is the filter blocking the honest interfaces of the ideal resource.\\

For the sake of explanation, let us first try to find a simulator that works when the distinguisher is first running the honest protocol $\pi_M$ (let us write this part of the distinguisher $\cD_0$): in that case, we can rewrite \cref{eq:goalSim}, which gives us:
\begin{align}
  \begin{ZX}<circuit,forward=\cR,fw=\cD>[circuit, thick lines,zx column sep=6mm]
    \rar[latex-latex] & \zxGate{\pi_H} \rar[latex-latex] & \zxGate{\cR} \rar[latex-latex] & \zxGate{\cD_0} \rar[-latex] &
  \end{ZX} \approx_\eps
  \begin{ZX}<circuit,forward=\cV,forward=\Verif,forward=\cD>[circuit, thick lines,zx column sep=6mm]
    \rar[latex-latex] & \zxGate{\bot_H} \rar[latex-latex] & \zxGate{\Verif} \rar[latex-latex] & \zxGate{\sigma_M} \rar[latex-latex] & \zxGate{\cD_0} \rar[-latex] &
  \end{ZX}\label{eq:goalSimD0}
\end{align}
i.e., after replacing the definition of this first naive distinguisher:
\begin{align}
  \begin{ZX}<circuit,forward=\cR>[circuit, thick lines,zx column sep=6mm]
    \rar[latex-latex] & \zxGate{\pi_H} \rar[latex-latex] & \zxGate{\cR} \rar[latex-latex] & \zxGate{\pi_M} \rar[-latex] &
  \end{ZX} \approx_\eps
  \begin{ZX}<circuit,forward=\cV,forward=\Verif>[circuit, thick lines,zx column sep=6mm]
    \rar[latex-latex] & \zxGate{\bot_H} \rar[latex-latex] & \zxGate{\Verif} \rar[latex-latex] & \zxGate{\sigma_M} \rar[latex-latex] & \zxGate{\pi_M} \rar[-latex] &
  \end{ZX}\label{eq:goalSimD0Replaced}
\end{align}
Due to the correctness of $\Pi$, in the real world both parties will share $\ket{G}$, i.e.\ we can simplify the LHS with:
\begin{align}
  \begin{ZX}<circuit,forward=\cR>[circuit, thick lines,zx column sep=6mm]
     & \zxGate{\ket{G}} \ar[rd,|--,-latex,start anchor={[xshift=1mm]south}, Bn.Args={M}{pos=.75},thick] \ar[ld,|--,-latex,start anchor={[xshift=-1mm]south},Bn.Args={H}{pos=.75},thick] & \\[\zxZeroRow+3mm]
     &                                                        & 
  \end{ZX} \approx_\eps
  \begin{ZX}<circuit,forward=\cV,forward=\Verif>[circuit, thick lines,zx column sep=6mm]
    \rar[latex-latex] & \zxGate{\bot_H} \rar[latex-latex] & \zxGate{\Verif} \rar[latex-latex] & \zxGate{\sigma_M} \rar[latex-latex] & \zxGate{\pi_M} \rar[-latex] &
  \end{ZX}\label{eq:simplifiedGoal}
\end{align}

Since we want our simulator to work for any protocol $\Pi$, the most natural thing to do is to let the simulator start with simulating $\pi_H \cR$ directly in a black-box manner otherwise we would not even know what to send to the distinguisher. The simulator hence forwards all messages received on the outer interface to the appropriate honest party via $\cR$, and all messages from honest parties to the outer interface. At the end of this interaction, the simulator will obtain a state outputted by the honest parties: if one party aborted, the simulator can tell the ideal functionality to abort but this should never occur in this simplified analysis where the distinguisher is running the honest protocol. If no party aborted, and if $\Pi$ is secure, then:
\begin{itemize}
\item we should get from $\pi_H$ a state close to $\ket{G}^H$,
\item and on the other hand, the simulator will receive from the ideal functionality $\Verif$ a state $\ket{G}^M$.
\end{itemize}
By the correctness of $\pi$ and the definition of $\Verif$, we can therefore rewrite the RHS of \cref{eq:simplifiedGoal} as follows, where $\zx{\zxGate{?}}$ represents a yet unknown operation run by the simulator that we need to determine:
\begin{align}
  &\begin{ZX}<circuit,forward=\cV,forward=\Verif>[circuit, thick lines,zx column sep=6mm]
    \rar[latex-latex] & \zxGate{\bot_H} \rar[latex-latex] & \zxGate{\Verif} \rar[latex-latex] & \zxGate{\sigma_M} \rar[latex-latex] & \zxGate{\pi_M} \rar[-latex] &
  \end{ZX}\label{eq:startDiagProofImpossibliliy}\\ &=
  \begin{ZX}<circuit,forward=\cV,forward=\Verif,fw=\cR>[circuit, thick lines,zx column sep=6mm,mbr=1]
    \rar[latex-latex] & \zxGate{\bot_H} \rar[latex-latex] & \zxGate{\Verif} \rar[latex-latex] & \zxGate{?} \rar[latex-] \zxCont{1}{3}{$\sigma_M$} & \zxGate{\pi_H} \rar[latex-latex] & \zxGate{\cR} \rar[latex-latex] & \zxGate{\pi_M} \rar[-latex] &
  \end{ZX}\\
  &=
  \begin{ZX}<circuit,forward=\cV,forward=\Verif>[circuit, thick lines,zx column sep=6mm]
   & \zxGate{\ket{G}} \ar[rd,|--,-latex,start anchor={[xshift=1mm]south}, Bn.Args={M}{pos=.75},thick] \ar[ld,|--,-latex,start anchor={[xshift=-1mm]south},Bn.Args={H}{pos=.75},thick] &            & \zxGate{\ket{G}} \ar[rd,|--,-latex,start anchor={[xshift=1mm]south}, Bn.Args={M}{pos=.75},thick] \ar[ld,|--,-latex,start anchor={[xshift=-1mm]south},Bn.Args={H}{pos=.75},thick] & \\[\zxZeroRow-1mm]
   &                                                                                                                                                                                        & \zxGate{?} &                                                                                                                                                                                        & 
  \end{ZX}
\end{align}
We can inject this back into \cref{eq:simplifiedGoal} to get:
\begin{align}
  \begin{ZX}<circuit,forward=\cR>[circuit, thick lines,zx column sep=6mm]
     & \zxGate{\ket{G}} \ar[rd,|--,-latex,start anchor={[xshift=1mm]south}, Bn.Args={M}{pos=.75},thick] \ar[ld,|--,-latex,start anchor={[xshift=-1mm]south},Bn.Args={H}{pos=.75},thick] & \\[\zxZeroRow+3mm]
     &                                                        & 
  \end{ZX} \approx_\eps
  \begin{ZX}<circuit,forward=\cV,forward=\Verif>[circuit, thick lines,zx column sep=6mm]
   & \zxGate{\ket{G}} \ar[rd,|--,-latex,start anchor={[xshift=1mm]south}, Bn.Args={M}{pos=.75},thick] \ar[ld,|--,-latex,start anchor={[xshift=-1mm]south},Bn.Args={H}{pos=.75},thick] &            & \zxGate{\ket{G}} \ar[rd,|--,-latex,start anchor={[xshift=1mm]south}, Bn.Args={M}{pos=.75},thick] \ar[ld,|--,-latex,start anchor={[xshift=-1mm]south},Bn.Args={H}{pos=.75},thick] & \\[\zxZeroRow-1mm]
   &                                                                                                                                                                                        & \zxGate{?} &                                                                                                                                                                                        & 
  \end{ZX}\label{eq:finalSimplifiedGeqGG}
\end{align}
We should therefore search for a ``merging'' operation that can combine two shared state $\ket{G}$ into a single copy of $\ket{G}$ while working \emph{only} on (different) parts of two copies of $\ket{G}$. Unfortunately, it is easy to see that it is impossible to obtain such a merging map. For instance, if we take $\ket{G}$ to be a Bell pair, one of the simplest graph state, then after tracing out the registers of $\zx{\zxGate{?}}$ (that cannot signal any measurement outcome to the outside world) we obtain two un-entangled states, which cannot possibly be equal to a single entangled Bell pair. This can be seen for instance diagrammatically using the formalism described in \cite{CK17_PicturingQuantumProcesses}, or using non-signaling as formalised in~\cref{thm:impossibilityBlackBox}:
\begin{align}
  \begin{ZX}[circuit]
                          & \ar[ld,C]   \\
    \zxGateMulti{3}{1}{?} &             \\[\zxZeroRow-2mm]
    \rar                  & \zxGround{} \\[\zxZeroRow-2mm]
    \ar[dr,C]             &             \\
                          & 
  \end{ZX} =
  \begin{ZX}[circuit]
                         & \ar[ld,C]   \\
    \zxGround{}          &             \\
    \zxGround{} \ar[dr,C] &             \\
                         & 
  \end{ZX} =
  \begin{ZX}[circuit]
    \zxGround-{} \rar & \\
    \zxGround-{} \rar &
  \end{ZX} \neq
  \begin{ZX}[circuit]
    \zxN{} \dar[C] \rar & \\
    \zxN{} \rar         & 
  \end{ZX}  
\end{align}
This is not surprising since, for instance, in quantum teleportation after the Bell measurement Bob needs to apply additional corrections to recover the original state. We formalise now this statement:

\textEnd{\subsection{Proofs of \cref{sec:CompSecurityofGraphState}}}

\begin{theoremE}[Impossibility of black-box realization of $\Verif$][end, restate]\label{thm:impossibilityBlackBox}
  There exists some\footnote{Actually most graph states have this property as soon as they are not separable.} graph states $\ket{G}$ such that for any graph state verification protocol $\Pi \eqdef \{\pi_i\}_{i \in [n] \cup \source}$ using a resource $\cR$ and producing $\ket{G}$ (i.e.\ $\{\pi_i\}_{i \in [n] \cup \source} \cR$ outputs $\ket{G}$ shared among the $n$ parties), it is \emph{impossible} to prove that $\Pi$ is $\eps$-realizing $\Verif$ (\cref{protoc:idealResourceVerif}) for any $\eps < 1/2$ if the simulator is black-box\footnote{We call it black-box since the definition of the simulator is mostly independent of the protocol, as it can only execute $\Pi$ without having access to its code. This definition of black-box simulator is relatively generic: if the simulator does not know the definition of the protocol $\Pi$, it can basically only forward the messages of the protocol to the distinguisher. For instance, if all exchanged messages are signed with a key unknown to the simulator (but which is part of the public description of the protocol), the only way to communicate with the distinguisher is to forward the messages sent by running $\{\pi_i\}_{i \in H} \cR$. We could formalize this by giving an even more generic definition of black-box simulator that is not explicitly asked to forward the messages of $\{\pi_i\}_{i \in H} \cR$ to the distinguisher, but this would obfuscate our proof without clear benefits.}, in the sense that the simulator interacts with the interface controlled by the environment by running $\{\pi_i\}_{i \in H} \cR$, forwarding all messages between the malicious interfaces of $\cR$ and the environment.
\end{theoremE}
\begin{proofE}
  This proof is diagrammatically illustrated with equations starting from \cref{eq:startDiagProofImpossibliliy}. In the following, we will consider the graph state $\ket{G} \eqdef \ket{00}+\ket{11}$, consisting of a Bell pair. By contradiction, we assume the existence of a protocol $\Pi \eqdef \{\pi_0, \pi_1, \pi_\source\}$ $\eps$-realizing $\Verif$, where the simulator $\sigma$ is black-box as defined in \cref{thm:impossibilityBlackBox}. In particular, we can consider the case where parties $0$ and $\source$ are honest and party $1$ is corrupted. We also consider the following distinguisher $\cD$:
  \begin{itemize}
  \item $\cD$ runs the honest protocol $\pi_1$, interacting with the corrupted interfaces. At the end of the protocol, it gets a bipartite quantum state $\rho^{0,1}$ from the output from $\pi_1$ and from the honest interface of the functionality $\Verif$.
  \item Then, $\cD$ will measure both qubits in the computational basis, outputting $0$ if both outcomes are equal and different from $\bot$ and $1$ otherwise.
  \end{itemize}

  We will show that this distinguisher can distinguish the real world from the ideal world with an advantage greater than $\eps$, raising a contradiction since both worlds must be indistinguishable against any distinguisher and any subset of corrupted parties.

  First, if he distinguisher is interacting with the real world $\pi_{\{0,\source\}} \cR$, because $\cD$ is following the honest protocol, we have $\rho = \{\pi_i\}_{i \in [n] \cup \source} \cR$. Since $\Pi$ is correct by assumption (recall that $\{\pi_i\}_{i \in [n] \cup \source} \cR$ outputs $\ket{G} = \ket{00}+\ket{11}$), we always have $\rho = \ket{00}+\ket{11}$, and therefore $\cD$ will always measure $2$ identical bits, hence always outputting $0$.

  On the other hand, let us consider the case where the distinguisher is interacting with the real world. Since the simulator is black-box according to definition given in \cref{thm:impossibilityBlackBox}, and since $\Pi$ expects no input from any party, we can assume, without loss of generality, that the simulator starts by running $\{\pi_i\}_{i \in \{0,\source\}} \cR$, forwarding all messages between the malicious interfaces of $\cR$ and the environment, and obtaining a state $\rho^0$ outputted by $\pi_0$. Moreover, since $\cD$ is honestly running $\pi_1$, we know, by the correctness of the protocol, that $\rho^0$ is half of a Bell-state shared with $\cD$. Similarly, since in the ideal world, the protocol never aborts, without loss of generality we can assume that the simulator also sends $c_1 = \top$ to $\Verif$ when starting (otherwise, we can always convert any simulator that sets $c_1 = \bot$ into a better simulator that sets $c_1 = \top$): the simulator will then receiving a state $\rho^{\sigma}$ from $\Verif$, where $\rho^\sigma$ is half of a Bell pair shared with the distinguisher. Then, we can call $U^\sigma(\rho^0,\ket{0^l},\rho^\sigma)$ the rest of the quantum map performed by the simulator after receiving $\rho^0$ and $\rho^\sigma$, where $\ket{0^l}$ is an arbitrary auxiliary register. Therefore, the global state obtained when the simulator has finished to run is:
  \begin{align}
    (I_2 \otimes U^\sigma \otimes I_2)(\ket{00}+\ket{11})\otimes \ket{0^l} \otimes (\ket{00}+\ket{11})
  \end{align}
  where $I_2$ is the identity acting on a single qubit. If we consider now the view of the distinguisher right before performing its measurement, we can trace out the map performed by the simulator:
  \begin{align}
    \Tr_\sigma(
    (I_2 \otimes U^\sigma \otimes I_2)(\ket{00}+\ket{11})\otimes \ket{0^l} \otimes (\ket{00}+\ket{11})
    (\bra{00}+\bra{11})\otimes \bra{0^l} \otimes (\bra{00}+\bra{11})(I_2 \otimes U^{\sigma\dagger} \otimes I_2)
    )
  \end{align}
  By the non-signaling principle, $U^\sigma$ cannot modify this state since it is traced out. As a consequence, this state is equal to:
  \begin{align}
    \Tr_\sigma(
    (I_2 \otimes I_{2^{l+2}} \otimes I_2)(\ket{00}+\ket{11})\otimes \ket{0^l} \otimes (\ket{00}+\ket{11})
    (\bra{00}+\bra{11})\otimes \bra{0^l} \otimes (\bra{00}+\bra{11})(I_2 \otimes I_{2^{l+2}} \otimes I_2)
    )
  \end{align}
  But it is easy to see that this state is equal to the identity density matrix of $2$ qubits, as we discard, twice, one share of a Bell pair. Hence, after measuring this state in the computational basis, the distinguisher will obtain $0$ with probability $1/2$ instead of $1$, so the advantage in distinguishing is $1/2 > \eps$. Therefore, $\Pi$ cannot $\eps$ realize $\Verif$, raising a contradiction.
\end{proofE}

We are therefore left with two options:
\begin{itemize}
\item either we change the functionality,
\item or we change the protocol.
\end{itemize}
We present both approaches in this article. We will focus in this section on the first approach, while the second approach will be seen in \cref{sec:withoutCorr}, building on the results introduced here. \\

The previous impossibility result suggests a first modification: Let the simulator be allowed to communicate the output $x$ of their measurements to the ideal functionality. This in turns let the functionality apply these corrections $\xi_H(x)$ to the honest part of the graph state. This brings us to the following picture, building on \cref{eq:finalSimplifiedGeqGG}:
\begin{align}
  \begin{ZX}<circuit,forward=\cR>[circuit, thick lines,zx column sep=6mm,mbr=2]
     & \zxGate{\ket{G}} \ar[rd,|--,-latex,start anchor={[xshift=1mm]south}, Bn.Args={M}{pos=.75},thick] \ar[ld,|--,-latex,start anchor={[xshift=-1mm]south},Bn.Args={H}{pos=.75},thick] & \\[\zxZeroRow+3mm]
     &                                                        & 
  \end{ZX} \approx_\eps
  \begin{ZX}<circuit,forward=\cV,forward=\Verif>[circuit, thick lines,zx column sep=6mm,mbr=2]
                                    & \zxContName{(xi)(G)}{below:$\Verif^f$}                               & \zxGate[a=G]{\ket{G}} \ar[rd,|--,-latex,start anchor={[xshift=1mm]south}, Bn'Args={M}{pos=.80},thick] \ar[ld,|--,-latex,start anchor={[xshift=-1mm]south},Bn'Args={H}{pos=.80},thick] &            & \zxGate{\ket{G}} \ar[rd,|--,-latex,start anchor={[xshift=1mm]south}, Bn.Args={M}{pos=.75},thick] \ar[ld,|--,-latex,start anchor={[xshift=-1mm]south},Bn.Args={H}{pos=.75},thick] & \\[\zxZeroRow-1mm]
  \rar[latex-,Bn'Args={H}{pos=.50},thick] & \zxGate[a=xi]{\xi_H} \ar[rr,latex-,cl,shift left=-2mm,"{x}" {below}] &                                                                                                                                                                                       & \zxGate{?} &                                                                                                                                                                                  & 
  \end{ZX}\label{eq:mergeableStateIntro}
\end{align}

\begin{remark}\label{rq:justificationF}  It is important to allow only certain, harmless, corrections, by letting the functionality check that the corrections $x$ are valid, after verifying that $f(x) = \top$. 
  Indeed, allowing arbitrary corrections would allow the adversary to perform attacks that might be impossible to perform with only access to $\ket{G}^M$. For instance, if we allow arbitrary corrections on a GHZ state $\frac{1}{\sqrt{2}}(\ket{0\dots0} + \ket{1 \dots 1})$, then nothing prevents an adversary from flipping the bit of the honest parties in a different way, for instance to produce $\frac{1}{\sqrt{2}}(\ket{1 0\dots0} + \ket{01 \dots 1})$, where the honest parties get the first two qubits. This would be disastrous, for instance if we build a coin tossing protocol from this GHZ state by measuring the first two qubits in the computational basis, since the first two parties would get different outcomes, which is impossible to obtain from a normal GHZ. Note that it might be hard to justify why a correction $x$ is harmless as it might depend on the protocol that we target. Protocols that expect a graph state $\ket{G}$ should in theory reprove that they are secure when using the functionality $\Verif^f$, or use, instead, the result that we present in \cref{sec:withoutCorr}. However, we believe that the set of corrections we obtain is harmless, since it is basically a subset of the stabilizers of the graph state. Therefore, intuitively, any correction $x$ such that $f(x) = \top$ can be done by the adversary by applying a stabilizer on his side, which will automatically propagate some corrections to the honest players. This property is actually formalized later in order to obtain our second result in \cref{sec:withoutCorr}.
\end{remark}

 Informally speaking, a state is \emph{mergeable} if the following holds: if this state is shared between two parties Alice and Bob, and if another copy of this state is shared between Bob and Charlie, then is is be possible to obtain a single copy of this state between Alice and Charlie, under the constraint that Charlie should not do any operation, that Bob should perform an arbitrary measurement to obtain an outcome $m$, and that Alice only does an operation that only depends on $m$. In a sense, this can be seen as a generalization of entanglement swapping to arbitrary states, with additional constraints on the set of allowed operations. Said differently, we will say that a state $\ket{G}$ is mergeable with respect to $f$ and $\xi_H$ if we can find $\zx[circuit]{\rar & \zxGate{?} \rar &}$ (called $\xi_\Sim$ from now) such that \cref{eq:mergeableStateIntro} is true, where $f(x) = \top$. Since this must be true for any set of corrupted party, the function $f$ and $\xi_H$ will be different for any subset $H$ of honest parties. More formally:

\begin{definition}[Mergeable states]\label{def:mergeable}
  A state $\ket{\psi}^{R_1,\dots,R_n}$ on $n$ registers $\{R_i\}_{i \in [n]}$ is said to be \emph{mergeable} with respect to a function $f \colon \cP([n]) \times \cM$ (taking as input a set of honest party and a measurement outcome), and a collection of quantum maps $\{\xi_{H}\}_{H \subseteq [n]}$ (taking as input the registers $R_{i}$ for $i \in H$, together with an additional classical input, and outputting the same $R_{i}$'s registers) if for any subset $H \subseteq [n]$ of registers, there exists a quantum map $\xi_\Sim$ (taking inputs from the registers $R_{\bar{H}}$ of a first state, and $R_{H}$ of a second state, and outputting a single classical value in $\cM$ such that $f(H, x) = \top$) and $\xi_H$ (taking the register $R_{H}$ of the first state and the classical output of $\xi_\Sim$, and outputting a quantum register), such that:
  \begin{align}
    (\xi_H \otimes I_{n-|H|})(I_{|H|} \otimes \xi_\Sim \otimes I_{n-|H|})(\ket{\psi}^{R_H,R_{\bar{H}}} \otimes \ket{\psi}^{R_H,R_{\bar{H}}}) = \ket{\psi}^{R_H,R_{\bar{H}}}\label{eq:mergeable}
  \end{align}
  Note that this is always trivially possible if $H$ is empty or equal to $[n]$. In picture: 
  \begin{align}
    \begin{ZX}<add to preamble={\usepackage{braket}}>[circuit,thick lines]
      \zxInputMulti{2}{\ket{\psi}^{R_H,R_{\bar{H}}}} \ar[to=xiH] &                                                 & \zxGate[a=xiH]{\xi_H} \rar & \\
      \rar                                                       & \zxGateMulti{2}{1}{\xi_\Sim} \ar[to=xiH,cl,--|] &                         \\
      \zxInputMulti{2}{\ket{\psi}^{R_H,R_{\bar{H}}}} \rar        &                                          &                       & \\
                                                      \ar[rrr]   &                                                 &                       & 
    \end{ZX}
    = \ket{\psi}^{R_H,R_{\bar{H}}}
\end{align}
\end{definition}

We show now that any graph state is mergeable. Note that during a first reading, it might help to start with the simpler construction of \cref{cor:ghzstatesAreMergeable} that focuses only on \GHZ{} states.

\begin{theoremE}[Any graph state is mergeable][end, restate]\label{thm:anygraphismergeable}
  For any graph $G = (V,E)$, $\ket{G}$ is mergeable (\cref{def:mergeable}) with respect to the maps $\{\xi_{H}\}_{H \subseteq [n]}$ that take two lists of X and Z corrections $(x,z) \in (\Z_2^{|H|})^2$ and applies $\gateX^x \gateZ^z$ on the input qubits.

  The merge procedure is described diagrammatically in \cref{fig:allStatesMergeable}, and for completeness we reformulate it here. Let $n = |V|$, $H$ and $M$ be any partition of $V$. For simplicity, we assume that we reorder elements of $V$ to have elements of $H$ ordered before elements of $M$. Let $\Gamma$ be the biadjency graph between $H$ and $M$ (cf.\ \cref{def:notationGraphs}). Then, we define $\xi_\Sim$ as follows (cf.\ illustration \cref{fig:allStatesMergeable}), where the $i$-th qubit of $\ket{G}$ belongs to register $H$ (resp.\ $M$) iff $i \in H$ (resp.\ $M$).
  \begin{itemize}
  \item It applies $\gateCZ$ gates on any pair $(i,j)$ of qubits of register $H$ iff $(i,j) \in G_H$ and similarly it applies $\gateCZ$ gates on any pair $(i,j)$ of qubits of register $M$ iff $(i,j) \in G_M$.
  \item It applies Hadamard gates on all qubits of register $M$.
  \item It computes $U$, $V$, $r$ and $R$ according to \cref{lem:GaussPivot} and applies the unitary $\ket{x} \mapsto \ket{V^{-1}x}$ on register $M$ and $\ket{x} \mapsto \ket{Ux}$ on register $H$. This is always possible since $U$ and $V$ are invertible. We propose moreover in \cref{lem:physicallyImplementable} a way to implement them more efficiently, without auxiliary qubits and using only $\gateCNOT$ and swap operations.
  \item It performs $r$ Bell measurements (projection on one of the four Bell states) between the first $r$ qubits of each register. The Bell measurements are between the $i$-th qubit of register $M$ with the $i$-th qubit of register $H$, where a measurement outcome $(b_i, c_i) \in \{0,1\}^2$ means that the $i$-th pair was projected on the Bell state $\ket{0c_i}+(-1)^{b_i}\ket{1\bar{c_i}}$. The outcomes are gathered into two vector $b = (b_i)_{i \in [r]}$ and $c = (c_i)_{i \in [r]}$.
  \item It performs a measurement in the $\{H\ket{a_i}\}_{a_i \in \{0,1\}}$ basis on the $|M|-r$ remaining qubits of register $M$ (the outcomes are gathered into a vector $a$), and a measurement in the computational basis $\{\ket{d_i}\}_{d_i \in \{0,1\}}$ on the $|H|-r$ remaining qubits of register $H$ (the outcomes are gathered into a vector $d$).
  \item It computes $x \eqdef U^{-1}
    \begin{bmatrix}
      c \xor Rd\\
      \textbf{0}
    \end{bmatrix}$, and $z \eqdef \left(U^T
      \begin{bmatrix}
        b\\
        R^Tb
      \end{bmatrix}\right) \xor Gx$ where $Gx$ is the set of neighbours of $x$ as defined in \cref{lem:stabilizerDiag}, and returns the corrections $(x,z)$.
  \end{itemize}
\end{theoremE}

\begin{figure}
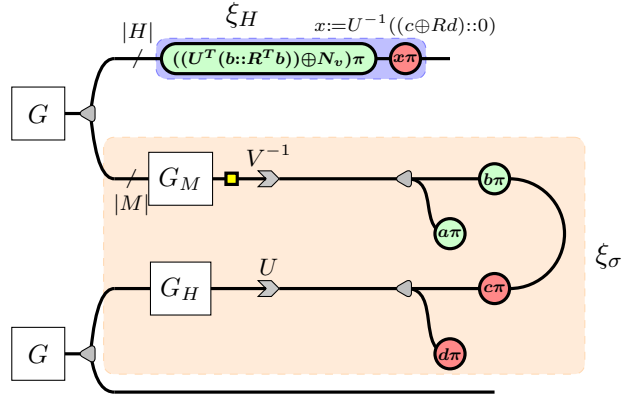

  \centering
  \begin{ZX}[myBold, execute at end picture={
      \zxNamedBox[opacity=.3,fit margins={right=5mm,top=1mm,left=3mm}][orange]{(GM)(measD)(bellA1)}{right:$\xi_\Sim$}
      \zxNamedBox[]{(endA)(xpi)}{above left:$\xi_H$}
    }]
                                                                                                                                                                                                                                                                                                                                                                    \\[3mm]
                      &                                         & [\zxwCol] \zxN{} \ar[rrr,Bn'=|H|] & [\zxWCol+1mm]           &             & [-12mm] \zxZ[a=endA]{((U^T(b{::}R^Tb)) \xor N_v)\pi} \rar & \zxX[a=xpi,label={[overlay,yshift=1mm]\scriptstyle x \eqdef U^{-1}((c \xor Rd){::}0)}]*{x\pi} \rar &                                                                \\
    \zxBox[fill=white]{G} \rar[B] & \zxDivider[BBw]{} \ar[ru,<'] \ar[rd,<.]                                                                                                                                                                                                                                                                                                     \\
                      &                                         & \zxN{} \rar[,Bn.=|M|]             & \zxBox[fill=white,a=GM]{G_M} \rar    & \zxH{} \rar & \zxMatrix/{V^{-1}} \rar                                   & \zxDivider[a=divA]{} \ar[to=measA,s] \ar[to=bellA1]                                          &                      & \zxZ[a=bellA1]*{b\pi} \ar[to=bellA2,C-] \\
                      &                                         &                                   &                     &             &                                                           &                                                                                              & \zxZ[a=measA]*{a\pi} &                                         \\
                      &                                         & \zxN{} \rar                       & \zxBox[fill=white]{G_H} \ar[rr] &             & \zxMatrix/{U} \rar                                        & \zxDivider[a=divB]{} \ar[to=measD,s] \ar[to=bellA2]                                          &                      & \zxX[a=bellA2]*{c\pi}                   \\
    \zxBox{G} \rar[B] & \zxDivider[BBw]{} \ar[ru,<'] \ar[rd,<.] &                                   &                     &             &                                                           &                                                                                              & \zxX[a=measD]*{d\pi}                                           \\
                      &                                         & \zxN{} \ar[to=endB]               &                     &             &                                                           &                                                                                              &                      & \zxN[a=endB]{}
  \end{ZX}  
  \caption{Representation in ZX-calculus of the procedure to merge two copies of a graph state into a single copy. The orange area denotes the merging map $\xi_\sigma$ while the blue one represents the corrections $\xi_H$ to apply. Note that $H$ and $M$ form a partition of the vertices of $G$, and that $G_H$, $G_M$, $\Gamma$, $U$, $V$ and $R$ are defined like in \cref{def:notationGraphs} and \cref{lem:GaussPivot}.}
  \label{fig:allStatesMergeable}
\end{figure}

\begin{proofE}[text link={Note that this lemma is at the heart of our construction. Since the proof is quite technical and heavily relies on scalable ZX-calculs, we defer the \hyperref[proof:prAtEnd\pratendcountercurrent]{full proof} to \pratendSectionlikeCref.}]
  First, we can decompose the graphs given as input to the merge procedure given in \cref{fig:allStatesMergeable} as described in \cref{lem:graphicalCutG} (note that the outcome state might depend on the measurement outcomes $a,b,c,d$, hence the notation):
  \begin{align}
    \ket{\psi_{a,b,c,d}} \eqdef
    \begin{ZX}[myBold]
                                                                                                                                                                                                                                                                                                                         \\[3mm]
      \zxBox{G_H} \rar & \zxZ{} \dar \ar[to=endA]   &                     &             & \zxZ[a=endA,pilb={overlay,yshift=-1mm,font=\fontsize{6}{10}\selectfont\boldmath}]{((U^T(b{::}R^Tb)) \xor Gx)\pi} \rar & \zxX[label={[overlay,yshift=1mm]\scriptstyle x \eqdef U^{-1}((c \xor Rd){::}0)}]*{x\pi} \rar &                                                                \\
                       & \zxMatrix'{\Gamma} \dar[H]                                                                                                                                                                                                                                                                      \\
      \zxBox{G_M} \rar & \zxZ{} \rar                & \zxBox{G_M} \rar    & \zxH{} \rar & \zxMatrix/{V^{-1}} \rar                                        & \zxDivider[a=divA]{} \ar[to=measA,s] \ar[to=bellA1]                                          &                      & \zxZ[a=bellA1]*{b\pi} \ar[to=bellA2,C-] \\
                       &                            &                     &             &                                                                &                                                                                              & \zxZ[a=measA]*{a\pi} &                                         \\
      \zxBox{G_H} \rar & \zxZ{} \dar \rar           & \zxBox{G_H} \ar[rr] &             & \zxMatrix/{U} \rar                                             & \zxDivider[a=divB]{} \ar[to=measD,s] \ar[to=bellA2]                                          &                      & \zxX[a=bellA2]*{c\pi}                   \\
                       & \zxMatrix'{\Gamma} \dar[H] &                     &             &                                                                &                                                                                              & \zxX[a=measD]*{d\pi}                                           \\
      \zxBox{G_M} \rar & \zxZ{} \ar[to=endB]        &                     &             &                                                                &                                                                                              &                      & \zxN[a=endB]{}
    \end{ZX}    
  \end{align}
  Then, we can see that:
  \begin{align}
    \begin{ZX}[myBold]
                       &                       &                &[\zxwCol] \\
      \zxBox{G_M} \rar & \zxZ[B]{} \rar \ar[u] & \zxBox{G_M} \rar & 
    \end{ZX} \eqeqLem{lem:GCommutes}
    \begin{ZX}[myBold]
                       &                  &                       & [\zxwCol] \\
      \zxBox{G_M} \rar & \zxBox{G_M} \rar & \zxZ[B]{} \rar \ar[u] & 
    \end{ZX} \eqeqDef{def:graphDiag}
    \begin{ZX}[myBold]
                     &                  &                  &                       & [\zxwCol] \\
      \zxZ[B]{} \rar & \zxBox{G_M} \rar & \zxBox{G_M} \rar & \zxZ[B]{} \rar \ar[u] & 
    \end{ZX} \eqeqLem{lem:GDestroys}
    \begin{ZX}[myBold]
                     &                       & [\zxwCol] \\
      \zxZ[B]{} \rar & \zxZ[B]{} \rar \ar[u] & 
    \end{ZX} \eqZXX{S}{I}
    \begin{ZX}[myBold]
      \zxN{} \ar[rd,<.={1-=0,1L=.7,2-=.7,2L=0}] & [\zxZeroCol+4mm] \\[\zxZeroRow+4mm]
                        & \zxN{}
    \end{ZX}\label{eq:fusionG}
  \end{align}
  The same is of course true for $G_H$ instead of $G_M$, we can therefore simplify the above graph as follows:
  \begin{align}
    \ket{\psi_{a,b,c,d}}
    &\eqeq{eq:fusionG}
      \begin{ZX}[myBold,
        execute at end picture={
          \zxNamedBox[fit margins={all=.5mm,top=2mm}][blue]{(gamma2)(measD)}{}
        },
        ]
                                                                                                                                                                                                                                                                                                                                                                                                                                                     \\[3mm]
      \zxBox{G_H} \rar & \zxZ{} \ar[to=Gamma1,<.] \ar[to=endA]   &                                               & [\zxHCol+3mm]                            & \zxZ[a=endA,pilb={overlay,yshift=-1mm,font=\fontsize{6}{10}\selectfont\boldmath}]{((U^T(b{::}R^Tb)) \xor Gx)\pi} \rar & \zxX[label={[overlay,yshift=1mm]\scriptstyle x \eqdef U^{-1}((c \xor Rd){::}0)}]*{x\pi} \rar &                                                                \\[2mm]
                       &                                         & \zxMatrix[a=Gamma1]{\Gamma} \ar[H,to=topH,<.] & \zxH[a=topH]{} \rar                      & \zxMatrix/{V^{-1}} \rar                                                                                                & \zxDivider[a=divA]{} \ar[to=measA,s] \ar[to=bellA1]                                          &                      & \zxZ[a=bellA1]*{b\pi} \ar[to=bellA2,C-] \\
                       &                                         &                                               &                                          &                                                                                                                        &                                                                                              & \zxZ[a=measA]*{a\pi} &                                         \\
                       &                                         &                                               & \zxMatrix[a=gamma2]*{\Gamma} \ar[to=myU] & \zxMatrix[a=myU]/{U} \rar                                                                                              & \zxDivider[a=divB]{} \ar[to=measD,s] \ar[to=bellA2]                                          &                      & \zxX[a=bellA2]*{c\pi}                   \\
                       &                                         &                                               &                                          &                                                                                                                        &                                                                                              & \zxX[a=measD]*{d\pi}                                           \\
      \zxBox{G_M} \rar & \zxZ{} \ar[to=gamma2,<',H] \ar[to=endB] &                                               &                                          &                                                                                                                        &                                                                                              &                      & \zxN[a=endB]{}
    \end{ZX}\label{eg:zxBeforeUnfolding}
  \end{align}
  We will first simplify the \tikz[baseline=(a.base)]\node[rounded corners,inner sep=2pt,fill=blue!50!white,opacity=.5,text opacity=1](a){colored part};, but we prove before an equivalent form of $\Gamma$ that will also be useful later: Let $U$, $V$, $r$ and $R$ be like in \cref{lem:GaussPivot}. We claim that:
  \begin{align}
    \begin{ZX}[mbr,myBold]
      \zxNL \zxMatrix{\Gamma} \zxNR
    \end{ZX} =
    \begin{ZX}[mbr=2,myBold]
           &                             &                                                 & \zxN[a=midNode]{} \ar[to=Zmerge,'>]    &                                  &                            &                    & [\zxWCol] \\
      \rar & [\zxwCol] \zxMatrix{U} \rar & \zxDivider{} \ar[to=MatA,<.] \ar[to=midNode,<'] & \zxMatrix[a=MatA]{R} \ar[to=Zmerge] & \zxX[a=Zmerge]{} \ar[to=divM,'>] & \zxDivider[a=divM]-{} \rar & \zxMatrix/{V} \rar &           \\
           &                             &                                                 & \zxX{} \ar[to=divM,.>]
    \end{ZX}\label{eq:formGamma}
  \end{align}
  This is easy to see by mechanically using the diagrammatic representation of block, identity, and zero matrices:
  \begin{align}
    \begin{ZX}[mbr=1,myBold]
      \zxNL \zxMatrix{\Gamma} \zxNR
    \end{ZX} &\eqeqLem{lem:GaussPivot}
    \begin{ZX}[mbr=1,myBold]
      \zxNL \zxMatrix{V
        \begin{bmatrix}
          I_r & R\\
          \mathbf{0} & \mathbf{0}
        \end{bmatrix}
        U} \zxNR
    \end{ZX} \eqSubeq{eq:matrixManipulationB}{a}
    \begin{ZX}[mbr=1,myBold]
      \zxNL \zxMatrix{U} \rar & \zxMatrix{\begin{bmatrix}
          I_r & R\\
          \mathbf{0} & \mathbf{0}
        \end{bmatrix}} \rar & \zxMatrix{V} \zxNR
    \end{ZX} \eqSubeq{eq:matrixManipulationA}{a}
    \begin{ZX}[mbr=2,myBold]
           & [\zxwCol]         &                                        & \zxMatrix[a=topM]{\makebox[4mm][c]{$\begin{bmatrix}I_r & R\end{bmatrix}$}} \ar[to=divM,'>]  &                    & [\zxWCol] \\
      \rar & \zxMatrix/{U} \rar & \zxZ{} \ar[to=topM,<'] \ar[to=botM,<.] &                                      & \zxDivider[a=divM]-{} \rar & \zxMatrix/{V} \rar &           \\
           &                   &                                        & \zxMatrix[a=botM]/{\mathbf{0}} \ar[to=divM,.>]
    \end{ZX} \eqSubeq{eq:defMatrix}{b}
    \begin{ZX}[mbr=2,myBold]
           & [\zxwCol]          &                                        &                & \zxMatrix[a=topM]{\makebox[0mm][c]{$\begin{bmatrix}I_r & R\end{bmatrix}$}} \ar[to=divM,'>] &                            &                    & [\zxWCol] \\
      \rar & \zxMatrix/{U} \rar & \zxZ{} \ar[to=topM,<'] \ar[to=botM,<.] &                &                                                        &                                   & \zxDivider[a=divM]-{} \rar & \zxMatrix/{V} \rar &           \\
           &                    &                                        & \zxZ[a=botM]{} &                                                        & \zxX{} \ar[to=divM,.>]
    \end{ZX}\\
    & \eqZXX{S}{I}
    \begin{ZX}[mbr=2,myBold]
      \rar & [\zxwCol] \zxMatrix/{U} \rar & \zxMatrix[a=topM]{\begin{bmatrix}I_r & R\end{bmatrix}} \ar[to=divM,'>] &                    &  & [\zxWCol] \\
           &                              &                                      & \zxDivider[a=divM]-{} \rar      & \zxMatrix/{V} \rar &              \\
           &                              & \zxX{} \ar[to=divM,.>]
    \end{ZX} \eqSubeq{eq:matrixManipulationA}{b}
    \begin{ZX}[mbr=2,myBold]
      \rar & [\zxwCol] \zxMatrix{U} \rar & \zxDivider{} \ar[to=MatA,<.] \rar & [\zxwCol] \zxMatrix{I_r} \ar[to=Zmerge,'>] &                                  &                            &                    & [\zxWCol] \\
           &                              &                                   & \zxMatrix[a=MatA]{R} \ar[to=Zmerge]        & \zxX[a=Zmerge]{} \ar[to=divM,'>] & \zxDivider[a=divM]-{} \rar & \zxMatrix/{V} \rar &           \\
           &                              &                                   & \zxX{} \ar[to=divM,.>]
    \end{ZX} \eqSubeq{eq:defMatrix}{a}
    \begin{ZX}[mbr=2,myBold]
           &                             &                                                 & \zxN[a=midNode]{} \ar[to=Zmerge,'>]    &                                  &                            &                    & [\zxWCol] \\
      \rar & [\zxwCol] \zxMatrix{U} \rar & \zxDivider{} \ar[to=MatA,<.] \ar[to=midNode,<'] & \zxMatrix[a=MatA]{R} \ar[to=Zmerge] & \zxX[a=Zmerge]{} \ar[to=divM,'>] & \zxDivider[a=divM]-{} \rar & \zxMatrix/{V} \rar &           \\
           &                             &                                                 & \zxX{} \ar[to=divM,.>]
    \end{ZX}\label{eq:defGamma}
  \end{align}
  Therefore, we have:
    \begin{align}
      \begin{ZX}[mbr,myBold,
        execute at end picture={
          \zxNamedBox[fit margins={all=.5mm,top=2mm}][blue]{(gamma2)(measD)}{}
        },
        ]
        \zxNL \zxMatrix[a=gamma2]*{\Gamma} \rar & \zxMatrix{U} \rar & \zxDivider[a=divB]{} \ar[to=measD,s] \ar[r] & \zxN{} \\
                                                &                   &                                             &  & \zxX[a=measD]*{d\pi}
      \end{ZX}  &\eqeq{eq:formGamma}
    \begin{ZX}[mbr=2,myBold]
           &                              &                                         &                                                  & \zxN[a=midNode]{} \ar[to=Zmerge,'>]  &                              &                    &                   & [\zxWCol]                                            \\
      \rar & [\zxwCol] \zxMatrix*{V} \rar & \zxDivider{} \ar[<.,to=myX] \ar[to=myZ] & \zxX[a=myZ]{} \ar[to=MatA,<.] \ar[to=midNode,<'] & \zxMatrix[a=MatA]*{R} \ar[to=Zmerge] & \zxDivider[a=Zmerge]-{} \rar & \zxMatrix*{U} \rar & \zxMatrix{U} \rar & \zxDivider[a=divB]{} \ar[to=measD,s] \ar[r] & \zxN{} \\
           &                              &                                         &                                                  & \zxX[a=myX]{}                        &                              &                    &                   &                                             & \zxX[a=measD]*{d\pi}
    \end{ZX} \eqeqLem{lem:injectiveSurjective}
    \begin{ZX}[mbr=2,myBold]
           &                              &                                         &                                                  & \zxN[a=midNode]{} \ar[to=Zmerge,'>]  &                              &                                                      \\
      \rar & [\zxwCol] \zxMatrix*{V} \rar & \zxDivider{} \ar[<.,to=myX] \ar[to=myZ] & \zxX[a=myZ]{} \ar[to=MatA,<.] \ar[to=midNode,<'] & \zxMatrix[a=MatA]*{R} \ar[to=Zmerge] & \zxDivider[a=Zmerge]-{} \rar & \zxDivider[a=divB]{} \ar[to=measD,s] \ar[r] & \zxN{} \\
           &                              &                                         &                                                  & \zxX[a=myX]{}                        &                              &                                             & \zxX[a=measD]*{d\pi}
    \end{ZX} \eqZX{R}
    \begin{ZX}[mbr=2,myBold]
           &                              &                                         &                                                  & \zxN[a=midNode]{}                   &        \\
      \rar & [\zxwCol] \zxMatrix*{V} \rar & \zxDivider{} \ar[<.,to=myX] \ar[to=myZ] & \zxX[a=myZ]{} \ar[to=MatA,<.] \ar[to=midNode,<'] & \zxMatrix[a=MatA]*{R} \ar[to=measD] & \zxX[a=measD]*{d\pi} \\
           &                              &                                         &                                                  & \zxX[a=myX]{}                       & 
    \end{ZX}\\
    &\eqSubeq{eq:matrixManipulationC}{b}
    \begin{ZX}[mbr=1,myBold]
      \rar & [\zxwCol] \zxMatrix*{V} \rar & \zxDivider{} \ar[<.,to=myX] \ar[to=myZ] & \zxX[a=myZ]{Rd\pi} \ar[r] & \zxN{} \\
           &                              &                                         &                           & \zxX[a=myX]{} & 
    \end{ZX}
  \end{align}

  By injecting this into the \tikz[baseline=(a.base)]\node[rounded corners,inner sep=2pt,fill=blue!50!white,opacity=.5,text opacity=1](a){colored part}; of \cref{eg:zxBeforeUnfolding}, we get:
  \begin{align}
    \ket{\psi_{a,b,c,d}}
    &\eqeq{eq:fusionG}
      \begin{ZX}[myBold,
        execute at end picture={
          \zxNamedBox[fit margins={all=.5mm,top=1.2mm,left=1mm}][yellow]{(Gamma1)(measA)(bellA2)}{}
        },
        ]
                                                                                                                                                                                                                                                                                                                                                                                                                                                                  \\[3mm]
      \zxBox{G_H} \rar & \zxZ{} \ar[to=Gamma1,<.] \ar[to=endA]   &                                               & [\zxHCol+3mm]                           & \zxZ[a=endA,pilb={overlay,yshift=-1mm,font=\fontsize{6}{10}\selectfont\boldmath}]{((U^T(b{::}R^Tb)) \xor Gx)\pi} \rar & \zxX[label={[overlay,yshift=1mm]\scriptstyle x \eqdef U^{-1}((c \xor Rd){::}0)}]*{x\pi} \rar &                                                                              \\[2mm]
                       &                                         & \zxMatrix[a=Gamma1]{\Gamma} \ar[H,to=topH,<.] & \zxH[a=topH]{} \rar                     & \zxMatrix/{V^{-1}} \rar                                                                                                & \zxDivider[a=divA]{} \ar[to=measA,s] \ar[to=bellA1]                                          & \zxZ[a=bellA1]*{b\pi} \ar[to=bellA2] & \zxX[a=bellA2]*{(c \xor Rd)\pi} \\
                       &                                         &                                               &                                         &                                                                                                                        &                                                                                              & \zxZ[a=measA]*{a\pi}                    &                                    \\
                       &                                         &                                               & \zxMatrix[a=gamma2]/*{V} \ar[to=divB,wc] & \zxDivider[a=divB]{} \ar[to=measD,s] \ar[to=bellA2,C-]                                                                 &                                                                                              &                                                                              \\
                       &                                         &                                               &                                         &                                                                                                                        &                                                                                              & \zxX[a=measD]*{}                                                             \\
      \zxBox{G_M} \rar & \zxZ{} \ar[to=gamma2,<',H] \ar[to=endB] &                                               &                                         &                                                                                                                        &                                                                                              &                                         & \zxN[a=endB]{}
    \end{ZX}\label{eq:simplifiedLowerPart}
  \end{align}
  We focus now on the \tikz[baseline=(a.base)]\node[rounded corners,inner sep=2pt,fill=yellow!50!white,opacity=.5,text opacity=1](a){yellow-colored}; part of this diagram. First, we can use the well-known equality $\zx{\rar[B] &[\zxwCol] \zxH[B]{} \rar[B] & \zxH[B]{} \rar[B] &[\zxwCol]} = \zx{\rar[B] &[\zxWCol]}$ that is easy to prove as follows:
  \begin{align}
    \begin{ZX}
      \rar[B] &[\zxwCol] \zxH[B]{} \rar[B] & \zxH[B]{} \rar[B] &[\zxwCol]
    \end{ZX} \eqZX{I}
    \begin{ZX}
      \rar[B] &[\zxwCol] \zxH[B]{} \rar[B] & \zxZ[B]{} \rar[B] & \zxH[B]{} \rar[B] &[\zxwCol]
    \end{ZX} \eqZX{H}
    \begin{ZX}
      \rar[B] & [\zxwCol] \zxX[B]{} \rar[B] & [\zxwCol]
    \end{ZX} \eqZX{I}
    \begin{ZX}
      \rar[B] &[\zxWCol]
    \end{ZX}\label{eq:HH}
  \end{align}
  which gives:
  \begin{align}
    &\begin{ZX}[mbr=1,myBold,
      execute at end picture={
        \zxNamedBox[fit margins={all=.5mm,top=1.2mm,left=1mm}][yellow]{(Gamma1)(measA)(bellA2)}{}
      },
      ]
      \zxN{} \rar & [\zxwCol] \zxMatrix[a=Gamma1]{\Gamma} \ar[H,to=topH,<.] & [\zxHCol+2mm] \zxH[a=topH]{} \rar & \zxMatrix/{V^{-1}} \rar & \zxDivider[a=divA]{} \ar[to=measA,s] \ar[to=bellA1] & \zxZ[a=bellA1]*{b\pi} \ar[to=bellA2] & \zxX[a=bellA2]*{(c \xor Rd)\pi} \zxNR\\ 
                  &                                                         &                                   &                         &                                                     & \zxZ[a=measA]*{a\pi}                 & 
    \end{ZX} \eqeq{eq:HH}
    \begin{ZX}[mbr=1,myBold,
      execute at end picture={
      },
      ]
      \zxN{} \rar & [\zxwCol] \zxMatrix[a=Gamma1]{\Gamma}  \rar & \zxMatrix/{V^{-1}} \rar & \zxDivider[a=divA]{} \ar[to=measA,s] \ar[to=bellA1] &[\zxwCol] \zxZ[a=bellA1]*{b\pi} \ar[to=bellA2] & \zxX[a=bellA2]*{(c \xor Rd)\pi} \zxNR\\ 
                  &                                             &                         &                                                     & \zxZ[a=measA]*{a\pi}                 & 
    \end{ZX} \eqeq{eq:defGamma}
    \begin{ZX}[mbr=2,myBold,
      execute at end picture={
      },
      ]
           &                             &                                                 & \zxN[a=midNode]{} \ar[to=Zmerge,'>] &                                  &                            &                    & [\zxwCol]               &                                                     &[\zxwCol]                                                                                 \\
      \rar & [\zxwCol] \zxMatrix{U} \rar & \zxDivider{} \ar[to=MatA,<.] \ar[to=midNode,<'] & \zxMatrix[a=MatA]{R} \ar[to=Zmerge] & \zxX[a=Zmerge]{} \ar[to=divM,'>] & \zxDivider[a=divM]-{} \rar & \zxMatrix/{V} \rar & \zxMatrix/{V^{-1}} \rar & \zxDivider[a=divA]{} \ar[to=measA,s] \ar[to=bellA1] & \zxZ[a=bellA1]*{b\pi} \ar[to=bellA2] & \zxX[a=bellA2]*{(c \xor Rd)\pi} \zxNR \\         
           &                             &                                                 & \zxX{} \ar[to=divM,.>]              &                                  &                            &                    &                         &                                                     & \zxZ[a=measA]*{a\pi}                 &                                          \\
    \end{ZX}\nonumber                                                                                                                                                                                                                                                                                                                                                                   \\
    & \eqSubeq{eq:matrixManipulationB}{a}
    \begin{ZX}[mbr=2,myBold,
      execute at end picture={
      },
      ]
           &                             &                                                 & \zxN[a=midNode]{} \ar[to=Zmerge,'>] &                                  &                                                                                                                                                                    \\
      \rar & [\zxwCol] \zxMatrix{U} \rar & \zxDivider{} \ar[to=MatA,<.] \ar[to=midNode,<'] & \zxMatrix[a=MatA]{R} \ar[to=Zmerge] & \zxX[a=Zmerge]{} \ar[to=divM,'>] & \zxDivider[a=divM]-{} \rar & \zxDivider[a=divA]{} \ar[to=measA,s] \ar[to=bellA1] &[\zxwCol] \zxZ[a=bellA1]*{b\pi} \ar[to=bellA2] & \zxX[a=bellA2]*{(c \xor Rd)\pi} \zxNR \\         
           &                             &                                                 & \zxX{} \ar[to=divM,.>]              &                                  &                            &                                                     & \zxZ[a=measA]*{a\pi} & 
    \end{ZX} \eqZX{R}
    \begin{ZX}[mbr=2,myBold,
      execute at end picture={
        \node[right=5mm of measA,align=center,font=\tiny](scal){Scalar we\\can remove};
        \draw[->] (scal.west) to ([xshift=1mm]measA.east);
      },
      ]
           &                             &                                                 & \zxN[a=midNode]{} \ar[to=Zmerge,'>] &                                  &                                                                                 \\
      \rar & [\zxwCol] \zxMatrix{U} \rar & \zxDivider{} \ar[to=MatA,<.] \ar[to=midNode,<'] & \zxMatrix[a=MatA]{R} \ar[to=Zmerge] & \zxX[a=Zmerge]{}  \ar[to=bellA1] & \zxZ[a=bellA1]*{b\pi} \ar[to=bellA2] & \zxX[a=bellA2]*{(c \xor Rd)\pi} \zxNR \\         
           &                             &                                                 &                                     & \zxX{} \ar[to=measA,s]           & \zxZ[a=measA]*{a\pi}                 & 
    \end{ZX} \eqZX{Pi}
    \begin{ZX}[mbr=2,myBold,
      execute at end picture={
      },
      ]
           &                             &                                                 & \zxZ[a=midNode]*{b\pi} \ar[to=Zmerge,'>] &                                      &                                   &                                          \\
      \rar & [\zxwCol] \zxMatrix{U} \rar & \zxDivider{} \ar[to=MatA,<.] \ar[to=midNode,<'] & \zxMatrix[a=MatA]{R} \ar[to=bellA1]      & \zxZ[a=bellA1]*{b\pi} \ar[to=Zmerge] & \zxX[a=Zmerge]{}   \ar[to=bellA2] & \zxX[a=bellA2]*{(c \xor Rd)\pi} \zxNR \\         
    \end{ZX} \nonumber\\
    &\eqSubeq{eq:matrixManipulationB}{b}
    \begin{ZX}[mbr=2,myBold,
      execute at end picture={
      },
      ]
           &                             &                                                   & \zxZ[a=midNode]*{b\pi} \ar[to=Zmerge,'>] &                                      &                                   &                                          \\
      \rar & [\zxwCol] \zxMatrix{U} \rar & \zxDivider{} \ar[to=bellA1,<.] \ar[to=midNode,<'] & \zxZ[a=bellA1]*{R^Tb\pi} \ar[to=MatA]       & \zxMatrix[a=MatA]{R}  \ar[to=Zmerge] & \zxX[a=Zmerge]{}   \ar[to=bellA2] & \zxX[a=bellA2]*{(c \xor Rd)\pi} \zxNR \\         
    \end{ZX} \eqZX{Z}
    \begin{ZX}[mbr=2,myBold,
      execute at end picture={
      },
      ]
           &                             &                                         &                                                                & \zxN[a=midNode]{} \ar[to=Zmerge,'>]  &                                   &  &                                       \\
      \rar & [\zxwCol] \zxMatrix{U} \rar & \zxZ[a=bellA1]{(b {::} R^Tb)\pi} \rar & \zxDivider{} \ar[to=bellA1,<.] \ar[to=midNode,<'] \ar[to=MatA] & \zxMatrix[a=MatA]{R}  \ar[to=Zmerge] & \zxX[a=Zmerge]{}   \ar[to=bellA2] & \zxX[a=bellA2]*{(c \xor Rd)\pi} \zxNR \\         
    \end{ZX} \nonumber\\&\eqSubeq{eq:matrixManipulationB}{b}
    \begin{ZX}[mbr=2,myBold,
      execute at end picture={
      },
      ]
           &                                        &                             &                                                                & \zxN[a=midNode]{} \ar[to=Zmerge,'>]  &                                   &  &                                       \\
      \rar & \zxZ[a=bellA1]{U^T(b {::} R^Tb)\pi} & [\zxwCol] \zxMatrix{U} \rar & \zxDivider{} \ar[to=bellA1,<.] \ar[to=midNode,<'] \ar[to=MatA] & \zxMatrix[a=MatA]{R}  \ar[to=Zmerge] & \zxX[a=Zmerge]{}   \ar[to=bellA2] & \zxX[a=bellA2]*{(c \xor Rd)\pi} \zxNR \\         
    \end{ZX}\eqZXX{S}{I}
    \begin{ZX}[mbr=2,myBold,
      execute at end picture={
      },
      ]
           &                                        &                             &                                                                & \zxX[a=midNode]{(c \xor Rd)\pi} \ar[to=Zmerge,'>]  &                       &[\zxwCol] \\
      \rar & \zxZ[a=bellA1]{U^T(b {::} R^Tb)\pi} & [\zxwCol] \zxMatrix{U} \rar & \zxDivider{} \ar[to=bellA1,<.] \ar[to=midNode,<'] \ar[to=MatA,X] & \zxMatrix[a=MatA]{R}  \ar[to=Zmerge] & \zxX[a=Zmerge]{} \rar & \\         
    \end{ZX}\nonumber\\&\eqZX{Z}
    \begin{ZX}[mbr=2,myBold,
      execute at end picture={
      },
      ]
           &                                        &                             &                                    &                                                                & \zxN[a=midNode]{} \ar[to=Zmerge,'>]  &                       & [\zxwCol] \\
      \rar & \zxZ[a=bellA1]{U^T(b {::} R^Tb)\pi} & [\zxwCol] \zxMatrix{U} \rar & \zxX{((c \xor Rd){::} 0)\pi} \rar & \zxDivider{} \ar[to=bellA1,<.] \ar[to=midNode,<'] \ar[to=MatA] & \zxMatrix[a=MatA]{R}  \ar[to=Zmerge] & \zxX[a=Zmerge]{} \rar &           \\         
    \end{ZX}\eqeq{eq:bijectionA}
    \begin{ZX}[mbr=2,myBold,
      execute at end picture={
      },
      ]
           &                                        &                             &                                    &                                                                & \zxN[a=midNode]{} \ar[to=Zmerge,'>]  &                       & [\zxwCol] \\
      \rar & \zxZ[a=bellA1]{U^T(b {::} R^Tb)\pi} & [\zxwCol] \zxMatrix*{U^{-1}} \rar & \zxX{((c \xor Rd){::} 0)\pi} \rar & \zxDivider{} \ar[to=bellA1,<.] \ar[to=midNode,<'] \ar[to=MatA] & \zxMatrix[a=MatA]{R}  \ar[to=Zmerge] & \zxX[a=Zmerge]{} \rar &           \\         
    \end{ZX}\nonumber\\&\eqSubeq{eq:matrixManipulationC}{a}
    \begin{ZX}[mbr=2,myBold,
      execute at end picture={
      },
      ]
           &                                             &                                         &                          &                                                                & \zxN[a=midNode]{} \ar[to=Zmerge,'>]  &                       & [\zxwCol] \\
      \rar & \zxZ[a=bellA1]{U^T(b {::} R^Tb)\pi} \rar & \zxX{U^{-1}((c \xor Rd){::} 0)\pi} \rar & \zxMatrix*{U^{-1}}  \rar & \zxDivider{} \ar[to=bellA1,<.] \ar[to=midNode,<'] \ar[to=MatA] & \zxMatrix[a=MatA]{R}  \ar[to=Zmerge] & \zxX[a=Zmerge]{} \rar &           \\         
    \end{ZX}\eqeq{eq:bijectionA}
    \begin{ZX}[mbr=2,myBold,
      execute at end picture={
      },
      ]
           &                                             &                                         &                          &                                                                & \zxN[a=midNode]{} \ar[to=Zmerge,'>]  &                       & [\zxwCol] \\
      \rar & \zxZ[a=bellA1]{U^T(b {::} R^Tb)\pi} \rar & \zxX{U^{-1}((c \xor Rd){::} 0)\pi} \rar & \zxMatrix{U}  \rar & \zxDivider{} \ar[to=bellA1,<.] \ar[to=midNode,<'] \ar[to=MatA] & \zxMatrix[a=MatA]{R}  \ar[to=Zmerge] & \zxX[a=Zmerge]{} \rar &           \\         
    \end{ZX}\label{eq:rewriteTopPart}
  \end{align}
  We inject this back into the \tikz[baseline=(a.base)]\node[rounded corners,inner sep=2pt,fill=yellow!50!white,opacity=.5,text opacity=1](a){yellow-colored}; part of \cref{eq:simplifiedLowerPart}, after defining $x \eqdef U^{-1}((c \xor Rd){::}0)$:
  \begin{align}
    \ket{\psi_{a,b,c,d}}
    &\eqeq{eq:rewriteTopPart}
      \begin{ZX}[myBold,
        execute at end picture={
        },
        ]
                                                                                                                                                                                                                                                                                                                                                                                                                                                    \\[3mm]
       \zxBox{G_H} \rar  & \zxZ{} \ar[to=bellA1,<.] \ar[to=endA]   &                                          & [\zxHCol+3mm]                            & \zxZ[a=endA,pilb={overlay,yshift=-1mm,font=\fontsize{6}{10}\selectfont\boldmath}]{((U^T(b{::}R^Tb)) \xor Gx)\pi} \rar & \zxX[label={[overlay,yshift=1mm]\scriptstyle x \eqdef U^{-1}((c \xor Rd){::}0)}]*{x\pi} \rar &                                                                  \\[2mm]
                         &                                         &                                          &                                          &                                                                                                                        &                                                                                              & \zxN[a=midNode]{} \ar[to=Zmerge,'>]                              \\
                         &                                         & \zxZ[a=bellA1]{U^T(b {::} R^Tb)\pi} \rar & \zxX{x\pi} \rar  & \zxMatrix{U}  \rar                                                                                                     & \zxDivider{} \ar[to=bellA1,<.] \ar[to=midNode,<'] \ar[to=MatA]                               & \zxMatrix[a=MatA]{R}  \ar[to=Zmerge] & \zxX[a=Zmerge,a=bellA2]{} \\
                         &                                         &                                          & \zxMatrix[a=gamma2]-/*{V} \ar[to=divB,wc] & \zxDivider[a=divB]{} \ar[to=measD,s] \ar[to=bellA2,C-]                                                                 &                                                                                              &                                                                  \\
                         &                                         &                                          &                                          &                                                                                                                        &                                                                                              & \zxX[a=measD]*{}                                                 \\
        \zxBox{G_M} \rar & \zxZ{} \ar[to=gamma2,<',H] \ar[to=endB] &                                          &                                          &                                                                                                                        &                                                                                              &                                      & \zxN[a=endB]{}
      \end{ZX}\eqeq{eq:formGamma}
      \begin{ZX}[mbr=3,myBold,
        execute at end picture={
        },
        ]
                                                                                                                                                                                                                                                                                                                \\[3mm]
       \zxBox{G_H} \rar  & \zxZ{} \ar[to=bellA1] \ar[to=endA]      &  & [\zxHCol+3mm] & \zxZ[a=endA,pilb={overlay,yshift=-1mm,font=\fontsize{6}{10}\selectfont\boldmath}]{((U^T(b{::}R^Tb)) \xor Gx)\pi} \rar & \zxX[label={[overlay,yshift=1mm]\scriptstyle x \eqdef U^{-1}((c \xor Rd){::}0)}]*{x\pi} \rar &[6mm] \\[2mm]
                         & \zxZ[a=bellA1]{U^T(b {::} R^Tb)\pi} \ar[to=Xcorr]                                                                                                                                                                                                                                    \\
                         & \zxX[a=Xcorr]{x\pi} \ar[to=gamma3]                                                                                                                                                                                                                           \\
                         & \zxMatrix[a=gamma3]'{\Gamma}                                                                                                                                                                                                                                                          \\
        \zxBox{G_M} \rar & \zxZ{} \ar[to=gamma3,H] \ar[to=endB] &  &               &                                                                                                                        &                                                                                              &  & \zxN[a=endB]{}
      \end{ZX} \nonumber\\&\eqZXX{S}{Pi}
      \begin{ZX}[mbr=3,myBold,
        execute at end picture={
        },
        ]
                                                                                                                                                                                                             \\[3mm]
       \zxBox{G_H} \rar          & \zxZ[a=endA]{Gx\pi} \rar & \zxX[label={[overlay,yshift=1mm]\scriptstyle x \eqdef U^{-1}((c \xor Rd){::}0)}]*{x\pi} \rar & \zxZ{} \ar[to=gamma3]  \rar                  & \\
                                 &                           &                                                                                              & \zxMatrix[a=gamma3]'{\Gamma}                   \\
        \zxBox{G_M} \ar[to=botZ] &                           &                                                                                              & \zxZ[a=botZ]{} \ar[to=gamma3,H] \ar[to=endB] & \zxN[a=endB]{}
      \end{ZX}\eqeq{eq:stabilizerDiag}
      \begin{ZX}[mbr=2,myBold,
        execute at end picture={
        },
        ]
       \zxBox{G_H} \rar          & \zxZ{} \ar[to=gamma3]  \rar                  & \\
                                 & \zxMatrix[a=gamma3]'{\Gamma}                   \\
        \zxBox{G_M} \ar[to=botZ] & \zxZ[a=botZ]{} \ar[to=gamma3,H] \ar[to=endB] & \zxN[a=endB]{}
      \end{ZX} \eqeqLem{lem:graphicalCutG} \zx[myBold]{\zxBox{G}\zxNR} = \ket{G}
  \end{align}
  Which concludes the proof.
\end{proofE}

Since the above theorem is true for any graph state, it is also true for \GHZ{} states. However, the merging operation can be significantly simplified in that setting:
\begin{corollaryE}[\GHZ{} states are mergeable][end,restate]\label{cor:ghzstatesAreMergeable}
  Any \GHZ{} state of size $|n|$ (each qubit being a separate register) is mergeable (\cref{def:mergeable}) with respect to the collection of quantum maps $\{\xi_H\}_{H \subseteq [n]}$, where $\xi_H$ takes two bits $(x,z) \in \{0,1\}^2$ as input, applies $\gateZ^{z}$ of the first qubit, and $\gateX^{x}$ on all its input qubits (if $|H|$ is empty, it does not do anything).
\end{corollaryE}
\begin{proofE}
  While this is a direct consequence of \cref{thm:anygraphismergeable}, we provide here a simpler proof specific to GHZ states. First, we note that if $|H|$ is empty, \cref{eq:mergeable} is trivially achieved by defining $\xi_\Sim$ as the map that measures completely its input state, and outputs, say, $x=0$ and $z=0$. Since neither $\xi_\Sim$ nor $\xi_H$ touch the second state at all, this outputs the second state hence achieving \cref{eq:mergeable}.

  If $H = [n]$, on the other hand, $\xi_\Sim$ has only access to the second state. If we define $\xi_\Sim$ that measures completely its input state, and outputs $x = 0$, $z = 0$ to $\xi_H$, then $\xi_H$ will left the first state intact, therefore achieving \cref{eq:mergeable}.
  
  If $|H|$ is not empty, then we define $\xi_\Sim$ as follows:
  \begin{itemize}
  \item $\xi_\Sim$ performs a Bell measurement (i.e.\ a projection\footnote{This measurement can be done via a CNOT gate, and measuring the first qubit in the Hadamard basis to get $z_0$ and the second qubit in the computational basis to get $x$.} on one of the four Bell states $\ket{0x}+(-1)^{z_0}\ket{1\bar{x}}$, where $(x,z_0) \in \{0,1\}^2$) between the last qubit of the registers in $R_{\bar{H}}$, and the first qubit of $R_H$, getting outcomes $(x,z_0)$.
  \item Then $\xi_\Sim$ measures all remaining qubits in the Hadamard basis, getting outcomes $\{z_i\}_{i \in \{1,\dots,n-2\}}$
  \item Finally, $\xi_\Sim$ outputs $x$ and $z \eqdef \xor_{i \in \{0,n-2\}} z_i$.
  \end{itemize}
  We prove now that this $\xi_\Sim$ quantum maps achieves \cref{eq:mergeable} using ZX calculus:
  \begin{align}
    \begin{ZX}[
      execute at end picture={
        \node[
          fit=(z0)(zn2)(zne)(z1),
          node on layer=background,
          rounded corners,
          draw=orange,
          dashed,
          fill=orange!50!white,
          opacity=.3,
          "{\scalebox{.9}{$\xi_\Sim$, outputs $(x, \xor_i z_i \pi)$}}" {right, inner sep=1mm}
        ]{};
        \node[
          fit=(xpi2)(xorz),
          node on layer=background,
          rounded corners,
          draw=blue,
          dashed,
          fill=blue!50!white,
          opacity=.3,
          "{\scalebox{.9}{$\xi_E$}}" {right, inner sep=1mm}
        ]{};
      },
      ]
      \zxN{}                                                            & [\zxwCol] \zxN{} \ar[d,3 vdotsr=\scalebox{.7}{$|E|$ times}] \ar[rr] &  & \zxZ[a=xorz]{\xor_i z_i \pi} \rar & \zxX{x\pi}\dar[3 vdots] \rar & [\zxwCol] \zxN{} \\[\zxDotsRow]
      \zxN{}                                                            & \zxN{} \ar[rrr]                         &  &                           & \zxX[a=xpi2]{x\pi} \rar              & \zxN{}           \\
      \zxZ{} \ar[ur,<'] \ar[uur,<'] \ar[dr,<.] \ar[ddr,<.] \ar[dddr,<.] &                                                                                                                          \\
      \zxN{}                                                            & \zxN{}\ar[d,3 vdots] \rar               & \zxZ[a=z1]{z_{1}\pi}                                                                 \\
      \zxN{}                                                            & \zxN{} \rar                             & \zxZ[a=zne]{z_{n-|E|-1}\pi}                                                           \\
      \zxN{}                                                            & \zxX{x\pi} \ar[d,C-]                                                                                                     \\
      \zxN{}                                                            & \zxZ[a=z0]{z_0\pi}                                                                                                             \\
      \zxN{}                                                            & \zxN{}\ar[d,3 vdots] \rar               & \zxZ{z_{n-|E|}\pi}                                                             \\
      \zxN{}                                                            & \zxN{} \rar                             & \zxZ[a=zn2]{z_{n-2}\pi}                                                               \\
      \zxZ{} \ar[ur,<'] \ar[uur,<'] \ar[dr,<.] \ar[uuur,<'] \ar[ddr,<.] &                                                                                                                          \\
      \zxN{}                                                            & \zxN{}\dar[3 vdotsr=\scalebox{.7}{$n-|E|$ times}] \ar[rrrr]           &  &                           &                              & \zxN{}           \\[\zxWRow+1mm]
      \zxN{}                                                            & \zxN{}    \ar[rrrr]                     &  &                           &                              & \zxN{}           \\
    \end{ZX}
    &=
    \begin{ZX}
      \zxN{}                                                 & [\zxwCol] \zxN{} \ar[d,3 vdotsr=\scalebox{.7}{$|E|$ times}] \ar[rr] &  & \zxZ[a=xorz]{\xor_i z_i \pi} \rar & \zxX{x\pi}\dar[3 vdots] \rar & [\zxwCol] \zxN{} \\[\zxDotsRow]
      \zxN{}                                                 & \zxN{} \ar[rrr]                                                     &  &                                   & \zxX[a=xpi2]{x\pi} \rar      & \zxN{}           \\
      \zxZ{\xor_i z_i \pi} \ar[ur,<'] \ar[uur,<'] \ar[dr,N] &                                                                                                                                                              \\
      \zxN{}                                                 & \zxX{x\pi}\\
      \zxZ{} \ar[dr,<.] \ar[ur,N] \ar[ddr,<.]               &                                                                                                                                                              \\
      \zxN{}                                                 & \zxN{}\dar[3 vdotsr=\scalebox{.7}{$n-|E|$ times}] \ar[rrrr]         &  &                                   &                              & \zxN{}           \\[\zxWRow+1mm]
      \zxN{}                                                 & \zxN{}    \ar[rrrr]                                                 &  &                                   &                              & \zxN{}           \\
    \end{ZX}  \\
    &=
    \begin{ZX}
      \zxN{}                                                 & [\zxwCol] \zxN{} \ar[d,3 vdotsr=\scalebox{.7}{$|E|$ times}] \ar[rr] &  & \zxX{x\pi}\dar[3 vdots] \rar & [\zxwCol] \zxN{} \\[\zxDotsRow]
      \zxN{}                                                 & \zxN{} \ar[rrr]                                                     &  &                                   & \zxX[a=xpi2]{x\pi} \rar      & \zxN{}           \\
      \zxZ{} \ar[ur,<'] \ar[uur,<'] \ar[dr,N] &                                                                                                                                                              \\
      \zxN{}                                                 & \zxX{x\pi}\\
      \zxZ{} \ar[dr,<.] \ar[ur,N] \ar[ddr,<.]               &                                                                                                                                                              \\
      \zxN{}                                                 & \zxN{}\dar[3 vdotsr=\scalebox{.7}{$n-|E|$ times}] \ar[rrrr]         &  &                                   &                              & \zxN{}           \\[\zxWRow+1mm]
      \zxN{}                                                 & \zxN{}    \ar[rrrr]                                                 &  &                                   &                              & \zxN{}           \\
    \end{ZX}
    =
    \begin{ZX}
      \zxN{}                       & \zxN{} \ar[dd,3 vdotsr=\scalebox{.7}{$n$ times}] \\
      \zxZ{} \ar[ur,<'] \ar[dr,<.] &        \\
      \zxN{}                       & \zxN{} \\
    \end{ZX}  
  \end{align}
\end{proofE}

\paragraph{Security proof of \cref{thm:mainThm}.} Building on this corollary, we prove now the security part of \cref{thm:mainThm}, schematised in \cref{fig:DishonestTOUT}. The simulators that we will define are informally drawn in \cref{fig:Simulators}.

\begin{figure}[!ht]
  \centering
  \begin{subfigure}[b]{0.3\textwidth}
    \centering
    \includegraphics[width=\textwidth]{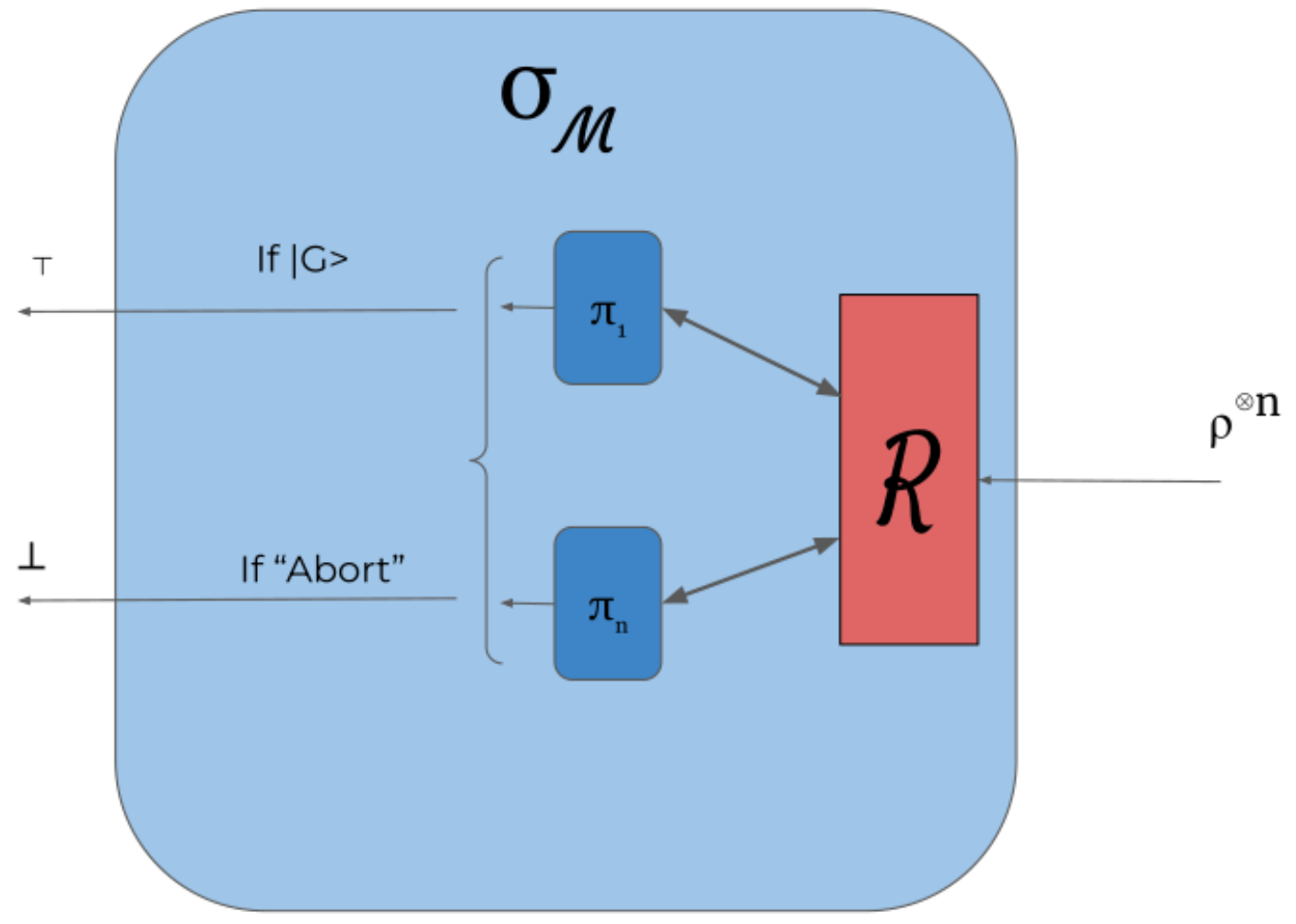}
    \caption{Special case where the subset of corrupted parties $M$ contains only the source (i.e.\ $M = \{\source\}$).}
    \label{fig:SigmaS}
  \end{subfigure}
  \hfill
  \begin{subfigure}[b]{0.3\textwidth}
    \centering
    \includegraphics[width=\textwidth]{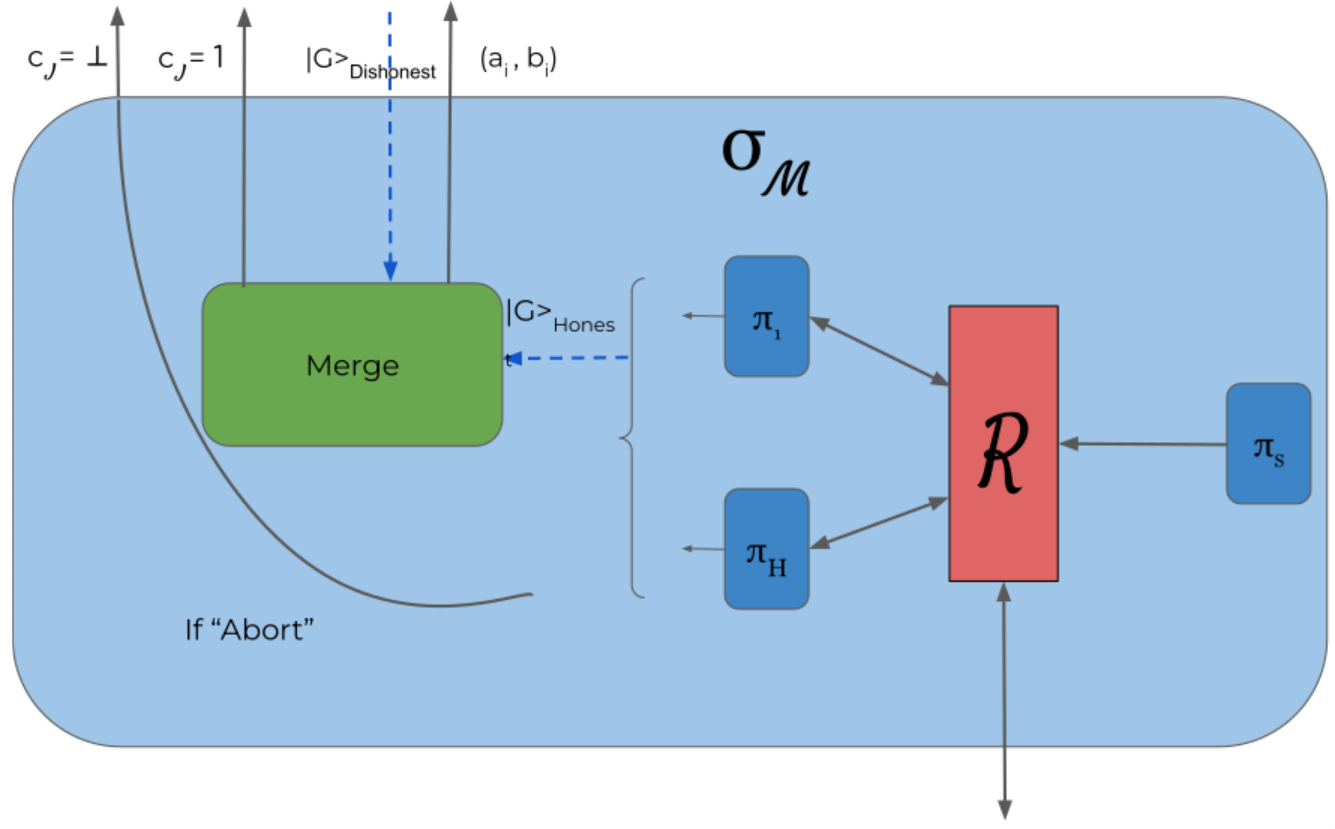}
    \caption{Special case where the subset of corrupted parties $M$ contains only some parties (excluding the source).}
    \label{fig:SigmaJ}
  \end{subfigure}
  \hfill
  \begin{subfigure}[b]{0.3\textwidth}
    \centering
    \includegraphics[width=\textwidth]{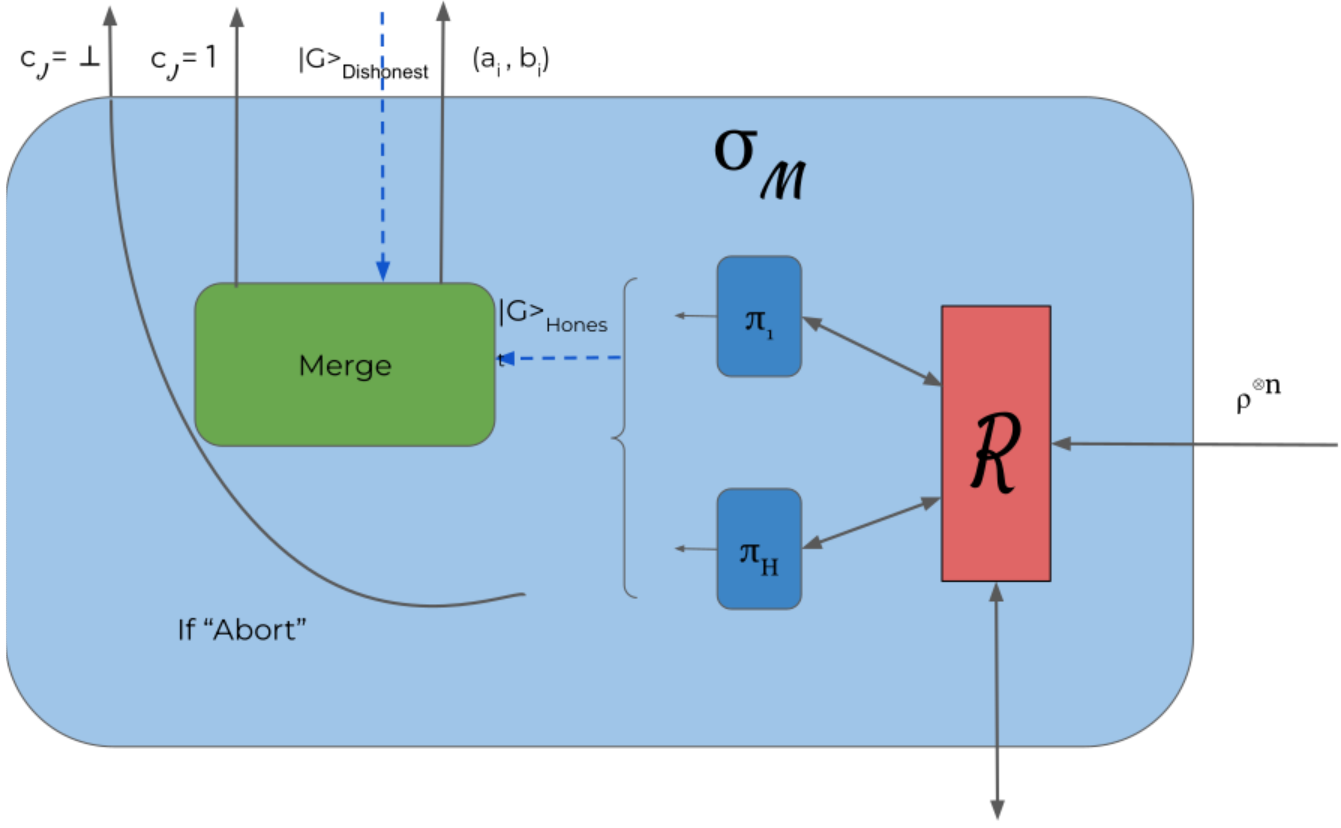}
    \caption{Special case where the subset of corrupted parties $M$ contains the source and a coalition of dishonest parties.}
    \label{fig:SigmaD}
  \end{subfigure}
  \caption{Simulators for (a) a dishonest source, (b) a subset of dishonest parties and (c) a coalition of dishonest parties and source. Note that even if we describe three simulators depending on the set of corrupted resources for clarity, in the proof we directly describe the generic case where the set $M$ of corrupted parties can be arbitrary.}
  \label{fig:Simulators}
\end{figure}

\begin{lemmaE}[{Security proof of \cref{thm:mainThm}}][end, restate, text link section full proof]
  Let $M \subseteq [n] \cup \{\source\}$ be an arbitrary subset of corrupted parties, and $H = [n] \cup \{\source\} \setminus M$ be the honest parties. Then there exists a simulator $\Sim_M$ such that $\pi_H \cR \approx \filter_{H} \Verif^f \Sim_M$ for each possible subset $M$.
\end{lemmaE}
\begin{proof}[Proof sketch]
  This proof is relatively technical and is therefore deferred to appendix. Yet, the informal idea is to use the fact that the state is mergeable. More precisely, the simulator can simulate the protocol run by honest parties, forwarding all messages from/to the distinguisher, to get the state outputted by the honest parties. If no party aborted, then the state should be close enough to $\ket{G}^H$ (part of the technicality is to write this properly since part of this state is owned by the adversary). In that case, the simulator will then get from $\Verif^f$ another copy of $\ket{G}^M$, and use the fact that this state is mergeable in order to merge these two copies of $\ket{G}$ into a single state close to $\ket{G}$ up to some corrections, shared between the distinguisher and the functionality. The merging operation will provide these corrections to the simulator, that can forward them to the ideal resource. The resource will then apply these corrections to ensure that the shared state is now close to $\ket{G}$, before outputting its part of $\ket{G}$ to the distinguisher.
\end{proof}
\begin{proofE}
  Let $M \subseteq [n] \cup \{\source\}$ be the subset of corrupted parties, and $H = [n] \cup \{\source\} \setminus M$ be the honest parties. We need to prove that there exists a simulator $\Sim_M$ such that $\pi_H \cR \approx \filter_{H} \Verif^f \Sim_M$ for each possible subset $M$.
  \begin{remark}
    Note that for simplicity, we consider $\Sim_M$ to be global (i.e.\ it is a single entity able to communicate will all interfaces), while the constructive cryptography framework typically expect simulators to be local\footnote{Note that some framework do not make that choice, like in universal composability.}: this is without loss of generality since we can easily turn it back into a set of local simulators, one simulator performing the operations of $\Sim_M$, where the input (resp.\ outputs) are obtained (resp.\ forwarded) from (resp.\ to) the right interface by mean of the channel $\cC$ included in $\Verif^f$, using the appropriate simulator connected to this interface as a proxy.
  \end{remark}

Since it is the most general case, we will focus on proving security for an arbitrary coalition of dishonest parties and source (\cref{fig:SigmaD}). The other cases follow directly from this one. The simulator $\Sim_M$ can informally be summarized as follows:
\begin{align}
  \begin{ZX}<circuit,forward=\cV,forward=\Verif,fw=\cR,fw=\filter>[circuit, thick lines,zx column sep=6mm,mbr=2,
    /utils/exec={%
      \def\blockFwIfFx{
        \zxGate[a=fna]{\parbox{1.5cm}{\centering\tiny Forward if $c_i$'s $\neq$ abort and $f(x) = \top$, else $\ket{\bot}^{H}$}}
        \rar[latex-,Bn'Args={H}{pos=.60},thick]
      }%
      \def\blockXiSigma{%
        \zxGate{\xi_\sigma}
        \zxContName[orange]{(cR)}{below:$\sigma_M$}
        \rar[latex-,Bn'Args={H}{pos=.6}, thick]%
      }%
      \def\blockInterfaceParties{%
        \zxInputMulti{1}{\parbox{1.8cm}{\centering\tiny Interface of honests parties}}
        \rar[latex-,Bn'Args={H}{pos=.50},thick]%
      }%
      \def\blockVerif{%
        \zxContName{(fna)(G)}{below:$\Verif^f$}%
      }%
      \def\blockG{%
        \zxGate[a=G]{\ket{G}}
        \ar[rd,|--,-latex,start anchor={[xshift=1mm]south}, Bn'Args={M}{pos=.80},thick]
        \ar[ld,|--,-latex,start anchor={[xshift=-1mm]south},Bn'Args={H}{pos=.75},thick]
        \ar[rrd,'>,start anchor=east, end anchor=north,latex-, "c_i = \bot\text{\tiny\ if abort}"]
      }%
      \def\blockFwIfNoAbort{%
        \zxGate{\parbox{1.2cm}{\centering\tiny Forward if \\no abort}}
        \rar[latex-,Bn'Args={H}{pos=.6}, thick]%
      }%
      \def\blockPiH{%
        \zxGate{\pi_H}
        \rar[latex-latex]%
      }%
      \def\blockCr{%
        \zxGate[a=cR]{\cR}
        \rar[latex-latex]%
      }
      \def\blockMultipleParties{%
        \zxOutputMulti{1}{\parbox{1.5cm}{\centering\tiny Interface of corrupted parties}}%
      }%
      \def\blockXiH{%
        \zxGate[a=xi]{\xi_H} \ar[rr,latex-,cl,shift left=-2mm,"{x}" {below}]%
      }%
      \def\blockFilter{%
        \zxGate{\filter_H}%
        \ar[to=G,"c_i = 0",-latex]
      }%
    },
    ]
    \blockFilter           &              & \blockVerif & \blockG &               &                   &           &          & \\[\zxZeroRow-1mm]
    \blockInterfaceParties & \blockFwIfFx & \blockXiH   &         & \blockXiSigma & \blockFwIfNoAbort & \blockPiH & \blockCr & \blockMultipleParties
  \end{ZX}
\end{align}
More formally, $\Sim_M$ is defined as:
\begin{enumerate}
\item $\Sim_M$ will first locally simulate $\pi_H \cR$, where the interfaces of the malicious parties are directly connected to the distinguisher.
\item If an honest party in the local simulation $\pi_H$ aborts within $\sigma_M$, the simulator will send $\bot$ to the ideal functionality and abort. Otherwise, we denote as $\rho^{R_H}$ the state outputted at the end of the protocol by the honest parties in $H \setminus \{\source\}$ (if this set is empty, $\rho^{R_H}$ is just the empty state).
\item Then, the simulator will send $c_i = 1$ to the functionality for all $i \in M \setminus \{\source\}$ to obtain a quantum state $\ket{G}^{R_M}$ (similarly, if $M \setminus \{\source\}$ is empty, the simulator does not send any message and defines $\ket{G}^{R_M}$ as the empty state).
\item The simulator runs the merging map $\xi_\Sim$ described in \cref{thm:anygraphismergeable} applied to $\rho^{R_H}$ and $\ket{G}^{R_M}$ (notice that $\ket{G}^{R_H}$ is replaced with $\rho^{R_H}$), and gets $(x,y) \in (\{0,1\}^{|H|})^2$.
\item The simulator sends $(x,y)$ on an arbitrary malicious interface in $M \setminus \{\source\}$ (if $M \setminus \{\source\}$ is empty, i.e.\ only the source may be malicious, then the simulator does not do anything at that step).
\end{enumerate}

We will prove now that $\pi_H \cR \approx_{\epsilon'} \filter_{H} \Verif^f \Sim_M$ for some $\epsilon'$ defined later, by defining a series of hybrid systems close to each others. We recall that $\pi_H \cR$ and $\filter_{H} \Verif^f \Sim_M$ are the resources corresponding respectively to the concrete protocol and ideal protocol with a subset $H$ of honest parties and $M=\bar{H}$ of dishonest parties.

The first step will be to simplify the system combining the functionality and simulator. We first remark that verifying if $f(x) = \top$ is useless since $\xi_\sigma$ always output $x$ such that $f(x) = \top$ (we defined $f$ exactly to have this property). The second remark we can make is that it is much simpler to analyse this system if we group the operations differently in order to remove any interaction between the functionality and the simulator. So informally, we will group the operations related to the merging operation together instead of having them shared between the simulator and the functionality, while we will check if we need to abort only at the very end, giving this informal picture:

\begin{align}
  \begin{ZX}<circuit,forward=\cV,forward=\Verif,fw=\cR,fw=\Merge,fw=\CheckState>[circuit, thick lines,zx column sep=6mm,mbr=2,
    /utils/exec={%
      \def\blockFwIfFx{
        \zxGate[a=fna]{\parbox{2cm}{\centering\tiny Output $\ket{\bot}^{H}$ if input contains $\bot$, else forward}}
        \zxContName{}{below:\parbox{5cm}{\centering\small$\CheckState$}}%
        \rar[latex-,Bn'Args={H}{pos=.60},thick]
      }%
      \def\blockXiSigma{%
        \zxGate[a=xiSigma]{\xi_\sigma}
        \rar[latex-,Bn'Args={H}{pos=.6}, thick]%
      }%
      \def\blockInterfaceParties{%
        \zxInputMulti{1}{\parbox{1.8cm}{\centering\tiny Interface of honests parties}}
        \rar[latex-,Bn'Args={H}{pos=.50},thick]%
      }%
      \def\blockMerge{%
        \zxContName{(G)(xiSigma)(xi)}{below:\parbox{5cm}{\centering\small$\Merge$ \\\tiny(if input contains $\bot$, do identity)}}%
      }%
      \def\blockG{%
        \zxGate[a=G]{\ket{G}}
        \ar[rd,|--,-latex,start anchor={[xshift=1mm]south}, Bn'Args={M}{pos=.80},thick]
        \ar[ld,|--,-latex,start anchor={[xshift=-1mm]south},Bn'Args={H}{pos=.75},thick]
      }%
      \def\blockPiH{%
        \zxGate{\pi_H}
        \rar[latex-latex]%
      }%
      \def\blockCr{%
        \zxGate[a=cR]{\cR}
        \rar[latex-latex]%
      }
      \def\blockMultipleParties{%
        \zxOutputMulti{1}{\parbox{1.5cm}{\centering\tiny Interface of corrupted parties}}%
      }%
      \def\blockXiH{%
        \zxGate[a=xi]{\xi_H} \ar[rr,latex-,cl,shift left=-2mm,"{x}" {below}]%
      }%
    },
    ]
                           &              & \blockMerge & \blockG &               &           &          & \\[\zxZeroRow-1mm]
    \blockInterfaceParties & \blockFwIfFx & \blockXiH   &         & \blockXiSigma & \blockPiH & \blockCr & \blockMultipleParties
  \end{ZX}
\end{align}

More formally, let $\Merge$ be the map working on the registers $R_H$ outputted by the honest parties that applies the identity if one of its input qubit is $\ket{\bot}$ (this can be done via a partial measurement like in the step $2$ of the definition of the simulator) and that applies otherwise the merging operation $\xi_H (I_{|H|} \otimes \xi_\Sim) (\ket{G}^{R_H,R_{\bar{H}}} \otimes I_{|H|})$ that merges the state in $R_H$ with a newly created state $\ket{G}$ after applying the corrections. With this definition, the system representing the simulator and the functionality and filters is strictly indistinguishable from this system:

\begin{itemize}
\item First we simulate locally $\pi_H \cR$ with the distinguisher's input to obtain $\rho_H$ on register $R_H$.
\item Then, we apply $\Merge$.
\item Then, we do a partial measurement of the state obtained at the previous step, checking if one input is $\ket{\bot}$. If yes, we replace the state with $\ket{\bot}^{\otimes |H|}$, otherwise we apply the identity (we call this operation $\CheckState$).
\item Finally, for each $i \in H \setminus \{\source\}$ we send the $i$-th qubit to the party $i$.
\end{itemize}
This hybrid system is obviously indistinguishable from $\filter_{H} \Verif^f \Sim_M$ as the system is only an identical simplification of the original system (again, $f(x)$ is always equal to $\top$, and otherwise in both cases we abort if any input contains $\ket{\bot}$ and apply the merging operation otherwise).

We remark now that it is nearly equal to $\pi_H \cR$, except for the application of $\Merge$ and $\CheckState$ at the end. To prove that these two operations do not significantly change the state obtained after the first step, we consider the quantum state obtained at the end of the first step, i.e.\ when getting the outcomes of $\pi_H$. Without loss of generality, we can assume that we run a purified version of the distinguisher and protocol, or, equivalently, that we consider the averaged state (averaging over all randomness involved in the protocol and in the distinguisher) where the purification of the density state is kept on a register kept by the environment\footnote{This purification is only used to simplify computations as this way we do not need to consider each run separately. Note that this cannot decrease the distinguishing probability of the distinguisher to purify the protocol, since we can purify any protocol and distinguisher by simply replacing any sampling operation by a measurement of a $\ket{+}$, and any measurement in the computational basis can be replaced by a $\gateCNOT$ on an auxiliary qubit kept by the environment, where the target state will then contain the result of the measurement. One can easily see that this is equivalent since we can consider that the state given to the environment is traced out.}. We can also show that the dimension of the purified system is at least $2(n+1)$ (if needed we can add $\ket{0}$ states on the distinguisher's state, this dimension is just chosen large enough so that any later purification has at most this dimension). We call $\ket{\psi}^{R_\Sim, R_E}$ the joint state between the simulator and the environment and distinguisher after receiving the outputs of $\pi_H$:

\begin{align}
  \begin{ZX}<circuit,forward=\cV,forward=\Verif,fw=\cR,fw=\Merge,fw=\CheckState>[circuit, thick lines,zx column sep=6mm,mbr=2,
    /utils/exec={%
      \def\blockFwIfFx{
        \zxGate[a=fna]{\parbox{2cm}{\centering\tiny Output $\ket{\bot}^{H}$ if input contains $\bot$, else forward}}
        \zxContName{}{below:\parbox{5cm}{\centering\small$\CheckState$}}%
        \rar[latex-,Bn'Args={H}{pos=.60},thick]
      }%
      \def\blockXiSigma{%
        \zxGate[a=xiSigma]{\xi_\sigma}
        \rar[latex-,Bn'Args={H}{pos=.5}, thick]%
      }%
      \def\blockInterfaceParties{%
        \zxInputMulti{1}{\parbox{1.8cm}{\centering\tiny Interface of honests parties}}
        \rar[latex-,Bn'Args={H}{pos=.50},thick]%
      }%
      \def\blockMerge{%
        \zxContName{(G)(xiSigma)(xi)}{below:\parbox{5cm}{\centering\small$\Merge$ \\\tiny(if input contains $\bot$, do identity)}}%
      }%
      \def\blockG{%
        \zxGate[a=G]{\ket{G}}
        \ar[rd,|--,-latex,start anchor={[xshift=1mm]south}, Bn'Args={M}{pos=.80},thick]
        \ar[ld,|--,-latex,start anchor={[xshift=-1mm]south},Bn'Args={H}{pos=.75},thick]
      }%
      \def\blockPiH{%
        \zxGate{\pi_H}
        \rar[latex-latex]%
      }%
      \def\blockCr{%
        \zxGate[a=cR]{\cR}
        \rar[latex-latex]%
      }
      \def\blockMultipleParties{%
        \zxOutputMulti{1}{\parbox{1.5cm}{\centering\tiny Interface of corrupted parties}}%
      }%
      \def\blockXiH{%
        \zxGate[a=xi]{\xi_H} \ar[rr,latex-,cl,shift left=-2mm,"{x}" {below}]%
      }%
      \def\blockSlice{%
        \zxSlice[][][.65]{$\ket{\psi}^{R_H,R_E}$                                                                                           \\[1mm]($R_E$ owned by distinguisher and environment)}%
      }%
      \def\blockRecognizeReal{%
        \zxVInputMulti[enlarge brace=5mm]{2}{\parbox{2.5cm}{\centering\tiny We recognize here the real world}}%
      }%
    },
    ]                      &              &             &         & \blockSlice   &                     &                                  \\[\zxZeroRow+1mm]
                           &              & \blockMerge & \blockG &               &                     &          &                       \\[\zxZeroRow-1mm]
    \blockInterfaceParties & \blockFwIfFx & \blockXiH   &         & \blockXiSigma & \blockPiH           & \blockCr & \blockMultipleParties \\
                           &              &             &         &               & \blockRecognizeReal & 
  \end{ZX}
\end{align}

  Since $\ket{\psi}^{R_H, R_E}$ is the state obtained in the real world, and because all the operations performed by the distinguisher are the same in both the ideal and the concrete worlds, it is enough to prove that $\TD(((\CheckState \circ \Merge)^{R_H} \otimes I^{R_E})\ket{\psi}^{R_H, R_E}),\ket{\psi}^{R_H, R_E})$ is small to conclude that the distinguisher cannot distinguish the two worlds. Indeed, the distinguisher\footnote{To be more precise, we could use the second part of the distinguisher that runs after obtaining $\ket{\psi}^{R_H,R_E}$ to distinguish these two states.} could otherwise distinguish two states close in trace distance which is impossible by the laws of physics. First, let $\rho \eqdef \Tr_E(\ket{\psi}^{R_H,R_E})$. Then, since $\Pi$ is secure (\cref{thm:mainThm}), by assumption there exists $p$ such that:
  \begin{align}
    F\left(\rho, \sigma\right) \geq 1-\eps(\lambda)\label{eq:Fbound}
  \end{align}
  where $\sigma$ is defined like in \cref{def:graphStateVerifProtoc}. However, we can remark that $\sigma$ admits the following purification:
  \begin{align}
    \ket{G_p} \eqdef \sqrt{p} \ket{G}\ket{0}\ket{0^k} + \sqrt{(1-p)} \ket{\bot^n}\ket{1}\ket{0^k}
  \end{align}
  where $k$ is chosen such that the dimension of $\ket{G_p}$ equals the dimension of $\ket{\psi^{R_H, R_E}}$, since $\sigma = \Tr_{[n]\setminus H}\ketbra{G_p}{G_p}$.
  
  But, by Uhlmann's theorem (see, e.g.\ \cite[Thm.~9.4]{NC10_QuantumComputationQuantum}), there exists two purifications $\ket{\phi_\sigma}$ and $\ket{\phi_\rho}$ of, respectively, $\sigma$ and $\rho$, such that
  \begin{align}
    F\left(\rho, \sigma\right) = |\braket{\phi_\sigma}{\phi_\rho}|\label{eq:Fisbracketphi}
  \end{align}
  Without loss of generality, we can append $\ket{0}$'s to $\ket{\phi_\sigma}$ and $\ket{\phi_\rho}$ to ensure their dimension is equal to those of $\ket{\psi}^{R_H, R_E}$ while maintaining the fact that $F\left(\rho, \sigma\right) = |\braket{\phi_\sigma}{\phi_\rho}|$. Moreover, because $\ket{\phi_\rho}$ and $\ket{\psi}^{R_H, R_E}$ (resp.\ $\ket{\phi_\sigma}$ and $\ket{G_p}$) have the same reduced density matrix\footnote{It is a standard exercise to show, using the Schmidt decomposition, that two states with the same dimension and the same purification are equal up to a unitary applied on the purification register.}, there exists $U_\rho$ (resp.\ $U_\sigma$) such that $\ket{\psi}^{R_H,R_E} = (I^{R_H} \otimes U_\rho) \ket{\phi_\rho}$ (resp.\ $\ket{G_p} = (I \otimes U_\sigma) \ket{\phi_\sigma}$). Therefore:
  \begin{align}
    \begin{split}
      &\TD(((\CheckState \circ \Merge)^{R_H} \otimes I^{R_E})\ket{\psi}^{R_H, R_E},\ket{\psi}^{R_H, R_E})
    \end{split}\\
    \begin{split}
      &= \TD(((\CheckState \circ \Merge)^{R_H} \otimes U_\rho^{R_E})\ket{\phi_\rho},(I^{R_H} \otimes U_\rho^{R_E}) \ket{\phi_\rho})
    \end{split}\\
    \intertext{Then, using the triangle inequality twice we get:}
    \begin{split}
      &\leq
        \TD(((\CheckState \circ \Merge)^{R_H} \otimes U_\rho^{R_E})\ket{\phi_\rho},((\CheckState \circ \Merge)^{R_H} \otimes U_\rho^{R_E})\ket{\phi_\sigma})\\
      &\quad + \TD(((\CheckState \circ \Merge)^{R_H} \otimes U_\rho^{R_E})\ket{\phi_\sigma}, (I^{R_H} \otimes U_\rho^{R_E}) \ket{\phi_\sigma})\\
      &\quad + \TD((I^{R_H} \otimes U_\rho^{R_E}) \ket{\phi_\sigma}, (I^{R_H} \otimes U_\rho^{R_E}) \ket{\phi_\rho})
    \end{split}\\
    \intertext{Then, using the fact that TD is symmetric, cannot be increased with post-processing, and is left unchanged when adding/removing/replacing a unitary on both inner terms (needed to replace $U_\rho$ with $U_\sigma$), we can simplify it as:}
    \begin{split}
      &\leq
        2\TD(\ket{\phi_\rho},\ket{\phi_\sigma}) + \TD(((\CheckState \circ \Merge)^{R_H} \otimes U_\sigma^{R_E})\ket{\phi_\sigma}, (I^{R_H} \otimes U_\sigma^{R_E})\ket{\phi_\sigma})
    \end{split}\\
    \begin{split}
      &=
        2\TD(\ket{\phi_\rho},\ket{\phi_\sigma}) + \TD(((\CheckState \circ \Merge)^{R_H} \otimes I^{R_E})\ket{G_p}, \ket{G_p})
    \end{split}\label{eq:simplifiedorigstuff}
  \end{align}
  However, by the definition of the trace distance on pure state and by definition of $\ket{\phi_\rho}$ and $\ket{\phi_\sigma}$, we have:
  \begin{align}
    \TD(\ket{\phi_\rho},\ket{\phi_\sigma})
    &= \sqrt{1-|\braket{\phi_\rho}{\phi_\sigma}|^2}
      \eqeq{eq:Fisbracketphi} \sqrt{1-F\left(\rho, \sigma\right)^2}
      \eqeq[\leq]{eq:Fbound} \sqrt{1-|1-\eps|^2}
      = \sqrt{1-(1 + \eps^2 - 2\eps)}\\
    &= \sqrt{2\eps - \eps^2}\label{eq:valueoftd}
  \end{align}
  Moreover, we claim that $((\CheckState \circ \Merge)^{R_H} \otimes I^{R_E})\ket{G_p} = \ket{G_p}$ and that therefore:
  \begin{align}
    \TD(((\CheckState \circ \Merge)^{R_H} \otimes I^{R_E})\ket{G_p}, \ket{G_p}) = 0\label{eq:tdisnull}
  \end{align}
  This can be seen by starting from the definition of $\ket{G_p}$ (the proof of this claim ends at \cref{eq:endOfSubclaim}):
  \begin{align}
    ((\CheckState \circ \Merge)^{R_H} \otimes I^{R_E})\ket{G_p}
    &= ((\CheckState \circ \Merge)^{R_H} \otimes I^{R_E}) (\sqrt{p} \ket{G}\ket{0}\ket{0^k} + \sqrt{1-p} \ket{\bot^n}\ket{1}\ket{0^k})
  \end{align}
  Then, since $\Merge$ behave as identity when the state contains $\bot$, and similarly $\CheckState$ is identity if the state does not contain $\bot$, this can be rewritten as:
  \begin{align}
    ((\CheckState \circ \Merge)^{R_H} \otimes I^{R_E})\ket{G_p} &= \sqrt{p} (\Merge^{R_H} \otimes I^{R_E})\ket{G}\ket{0}\ket{0^k} + \sqrt{1-p} (\CheckState^{R_H} \otimes I^{R_E})\ket{\bot^n}\ket{1}\ket{0^k}
  \end{align}
  Finally $\CheckState \ket{\bot^{|H|}} = \ket{\bot^{|H|}}$ and since any graph state is mergeable (\cref{def:mergeable} and \cref{thm:anygraphismergeable}), we have:
  \begin{align}
    ((\CheckState \circ \Merge)^{R_H} \otimes I^{R_E})\ket{G_p} &= \sqrt{p} \ket{G}\ket{0}\ket{0^k} + \sqrt{1-p} \ket{\bot^n}\ket{1}\ket{0^k} = \ket{G_p}\label{eq:endOfSubclaim}
  \end{align}
  finishing the proof of our claim $((\CheckState \circ \Merge)^{R_H} \otimes I^{R_E})\ket{G_p} = \ket{G_p}$.
  
  Now, we combine \cref{eq:tdisnull} and \cref{eq:valueoftd} into \cref{eq:simplifiedorigstuff} to get:
  \begin{align}
    \TD(((\CheckState \circ \Merge)^{R_H} \otimes I^{R_E})\ket{\psi}^{R_H, R_E},\ket{\psi}^{R_H, R_E}) \leq 2\sqrt{2\eps - \eps^2}
  \end{align}
  As discussed previously, the advantage in distinguishing these two states directly gives an upper bound on the advantage of distinguishing the ideal world from the real world (since these two states are actually the states held by the distinguisher at the end of its interaction with the ideal resource or the real protocol). But it is a well known fact that the best probability of distinguishing $\ket{\phi}$ from $\ket{\psi}$ is $\frac{1}{2} (1+\TD(\ket{\phi},\ket{\psi}))$, i.e.\ the advantage in distinguishing $\ket{\psi}$ from $\ket{\phi}$ is actually $\TD(\ket{\phi},\ket{\psi})$, so if we define:
  \begin{align}
    \eps' \eqdef 2\sqrt{2\eps - \eps^2}
  \end{align}
  then we have $\pi_H \cR \approx_{\eps'}\filter_{H} \Verif^f \Sim_M$. For any subset of corrupted party $M$, we are able to construct such a simulator $\Sim_M$. Following \cref{def:securityDef}, $\Pi$ $\eps'$-realizes the functionality $\Verif[\ket{G}]^f$, concluding our proof.
\end{proofE}

Note that while we focus in this proof only on graph states, the result in this section also easily extend to any mergeable state.


\section{Realizing the ideal resource without corrections}\label{sec:withoutCorr}
\textEnd{\subsection{Proofs of \cref{sec:withoutCorr}}}

We showed in \cref{thm:mainThm} that all verification protocols, provided some basic security properties, realize without any change the resource $\Verif^f$. However, this resource does allow the adversary to apply some corrections on the state of the honest parties. While we believe that such corrections are harmless for the security as they can already, to a certain degree, be created by applying some operations on the state owned by the malicious party, one might prefer a ``cleaner'' interface for the resource, at the cost of having a slightly more involved protocol. In this section, we show that any protocol realizing $\Verif^f$ can be turned into a new protocol that realizes a new resource $\Verif$ (note that the dependency on $f$ is gone). 

The modification consists in applying a random stabilizer of $G$ on their share of the graph state at the end of the protocol. We abstract the sampling and distribution of this stabilizer in a separate, additional, resource. In practice, this operation can for instance be done via a coin flipping protocol.

We formally describe the resource $\Verif$ in \cref{protoc:idealResourceVerifNoCorrections}. Apart from allowing malicious adversaries to abort and to get their share of the quantum state before honest parties, it simply sends $\ket{G}$ to all parties. We also define the filters $\filter_i$ for each party $i$ that sends $c_i = \top$ and we show an informal representation of the filtered resource $\bot_{[n]\cup \source}\Verif$ in \cref{fig:CuteFiltered}. \\

\begin{resource}{Ideal resource $\Verif$}\label{protoc:idealResourceVerifNoCorrections}
  \begin{enumerate}
  \item Create a quantum state $\ket{G}$.
  \item Receive for each party $i$ a bit $c_i \in \{\top,\bot\}$. If any $c_i = \bot$, send the $i$-th qubit of $\ket{G}$ to party $i$, and wait for another abort bit $c_i' \in \{\top,\bot\}$: if any $c_i' = \bot$, abort by sending $\bot$ on all interfaces; otherwise, for each $i$ such that $c_i = \top$, send the $i$-th qubit of $\ket{G}$ to party $i$.
  \end{enumerate}
 Additionally, we include in $\Verif$ an additional communication channel $\cC$ that can forward any\===possibly quantum\===message between any party (see \cref{rq:GlobalSimu}).
\end{resource}

 \begin{figure}[!ht]
     \centering
     \includegraphics[width=10cm]{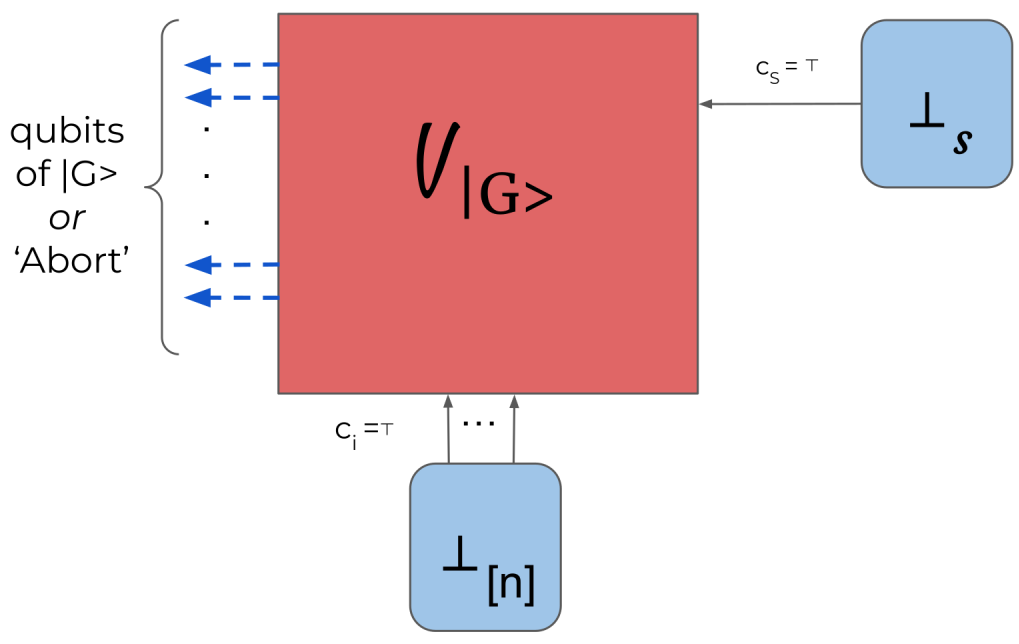}
     \caption{Ideal filtered resource $\pi_{[n]\cup \source}\Verif$ representing a more "clean" ideal graph state verification resource than in the previous section.}
     \label{fig:CuteFiltered}
 \end{figure}

We define in \cref{protoc:coinFlip} the ideal functionality that samples a random $x$, abstracting a common randomness generation resource. Note that one might want a simpler version of \cref{protoc:coinFlip} where the random $x$ is simply sent to all parties without considering any abort. In practice such protocols are impossible to obtain without a trusted third party, as a malicious adversary can usually first check the value of the random bit, and abort before letting the other party aware of this value. For this reason, we need to add an exchange of messages to allow the adversary to abort without letting the other parties know the value of the random bit.

\begin{resource}{Ideal resource $\CoinFlip$\label{protoc:coinFlip}}
  \begin{enumerate}
  \item Sample a random bit string $x \sample \{0,1\}^n$
  \item For each party $i \in [n]$, receive $c_i \in \{\top,\bot\}$. If any $c_i = \bot$, send $x$ to party $i$, and wait for another abort bit $c_i' \in \{\top,\bot\}$: if $c_i' = \bot$, abort by sending $\bot$ on all interfaces, and otherwise send $x$ to all parties.
  \end{enumerate}
  We define the associated filter that, connected to interface $i$, sets $c_i = \top$ and forwards the $x$ to the outer interface.
\end{resource}

We define in \cref{protoc:realizeVerifNoCorr} the protocol $\{\tau_i\}_{i \in [n]}$ that realizes $\Verif$ from $\Verif^f$ and $\CoinFlip$. %
\begin{protocol}{$\{\tau_i\}_{i \in [n]}$ Realizing $\Verif$ from $\Verif^f$ and $\CoinFlip$\label{protoc:realizeVerifNoCorr}}
  \begin{enumerate}
  \item For each honest party $i\in H$, $\tau_i$ receives from $\bot_{H} \Verif^f$ a share of a quantum state $\ket{G}$ or an abort message (in which case they output $\bot$ and abort).
  \item Each honest party $\{\tau_i\}_{i\in H}$ asks from $\CoinFlip$ a message $x$ or an abort bit (in which case they output $\bot$ and abort).
  \item Each honest party $\{\tau_i\}_{i\in H}$ applies the stabilizer $\gateX^{x_i}\gateZ^{(Gx)_i}$ on their qubit, and outputs the resulting qubit.
  \end{enumerate}
\end{protocol}
We prove now that \cref{protoc:realizeVerifNoCorr} realizes $\Verif$.

\begin{theoremE}[][end, restate,text link section full proof]\label{thm:verifNoCorr}
  \cref{protoc:realizeVerifNoCorr} realizes $\Verif$.
\end{theoremE}
\needspace{3cm}
\begin{proof}[Proof sketch]
  The full proof is in appendix, but the main idea of the security proof is as follows: first, the simulator will send to the distinguisher the state sent by $\Verif$ after partially applying on it a random stabilizer $x$. Then, the simulator will receive back a set of corrections from the distinguisher to apply on the side of the honest parties: since the simulator cannot apply these corrections as the qubits are on the side of the functionality, we use the fact that for stabilizer states, we can instead apply these corrections on the side of the adversary. Unfortunately, the simulator has also no access to the quantum register of the adversary: instead, the simulator adds this additional correction to $x$, and sends this to the distinguisher, pretending it was the random stabilizer sampled by $\CoinFlip$. We provide an extensive analysis of this simulator in the  \hyperref[proof:prAtEnd\pratendcountercurrent]{full proof} in \pratendSectionlikeCref.
\end{proof}
\begin{proofE}[text link=]~\\ As before, we will follow the Security \cref{def:securityDef} and find simulators for each possible dishonest behaviour. We show the general case in \cref{fig:Thm41}.

\begin{figure}[!ht]
  \centering
  \begin{subfigure}[b]{0.49\textwidth}
    \centering
    \includegraphics[width=\textwidth]{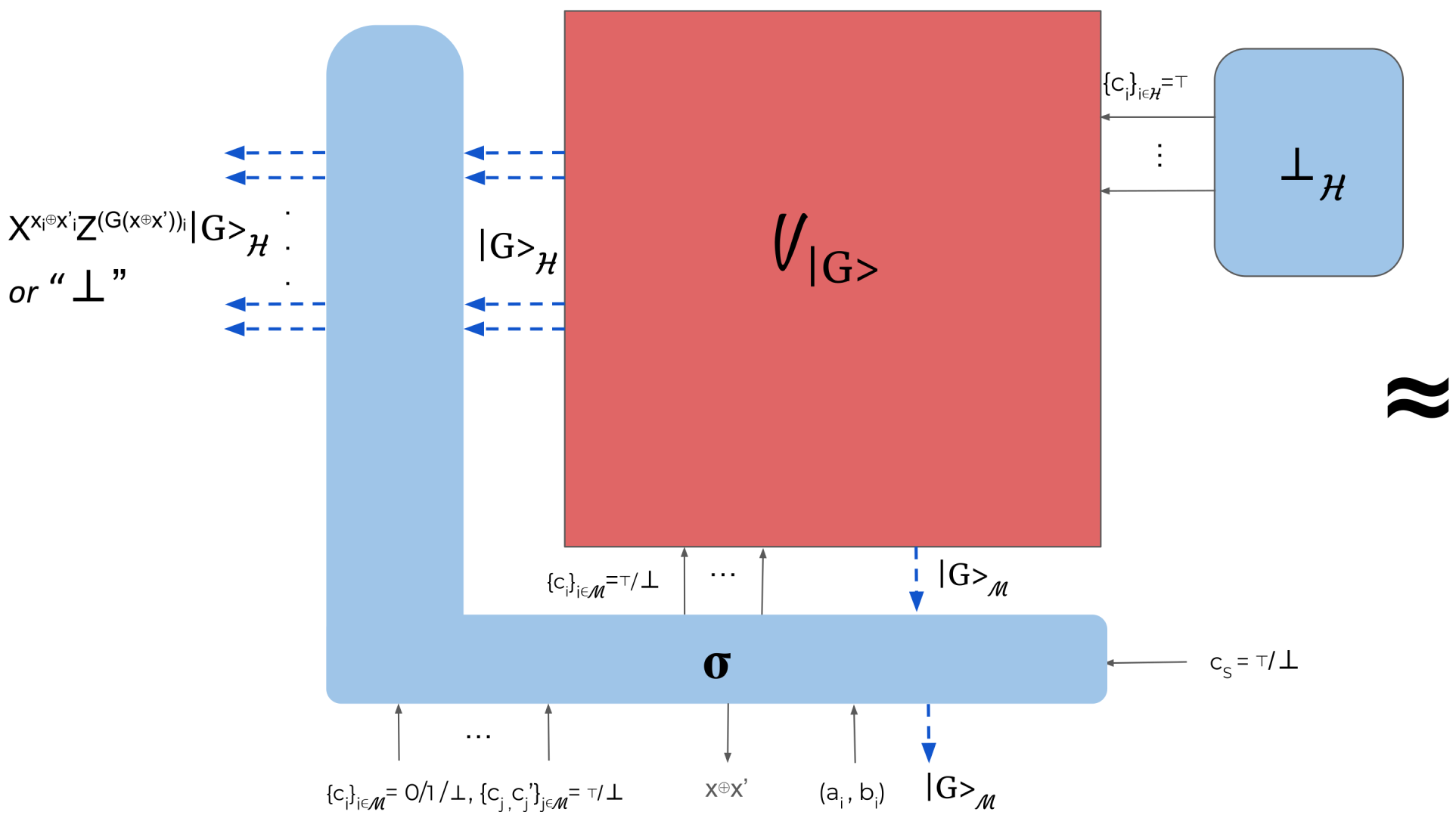}
  \end{subfigure}
  \begin{subfigure}[b]{0.49\textwidth}
    \centering
    \includegraphics[width=\textwidth]{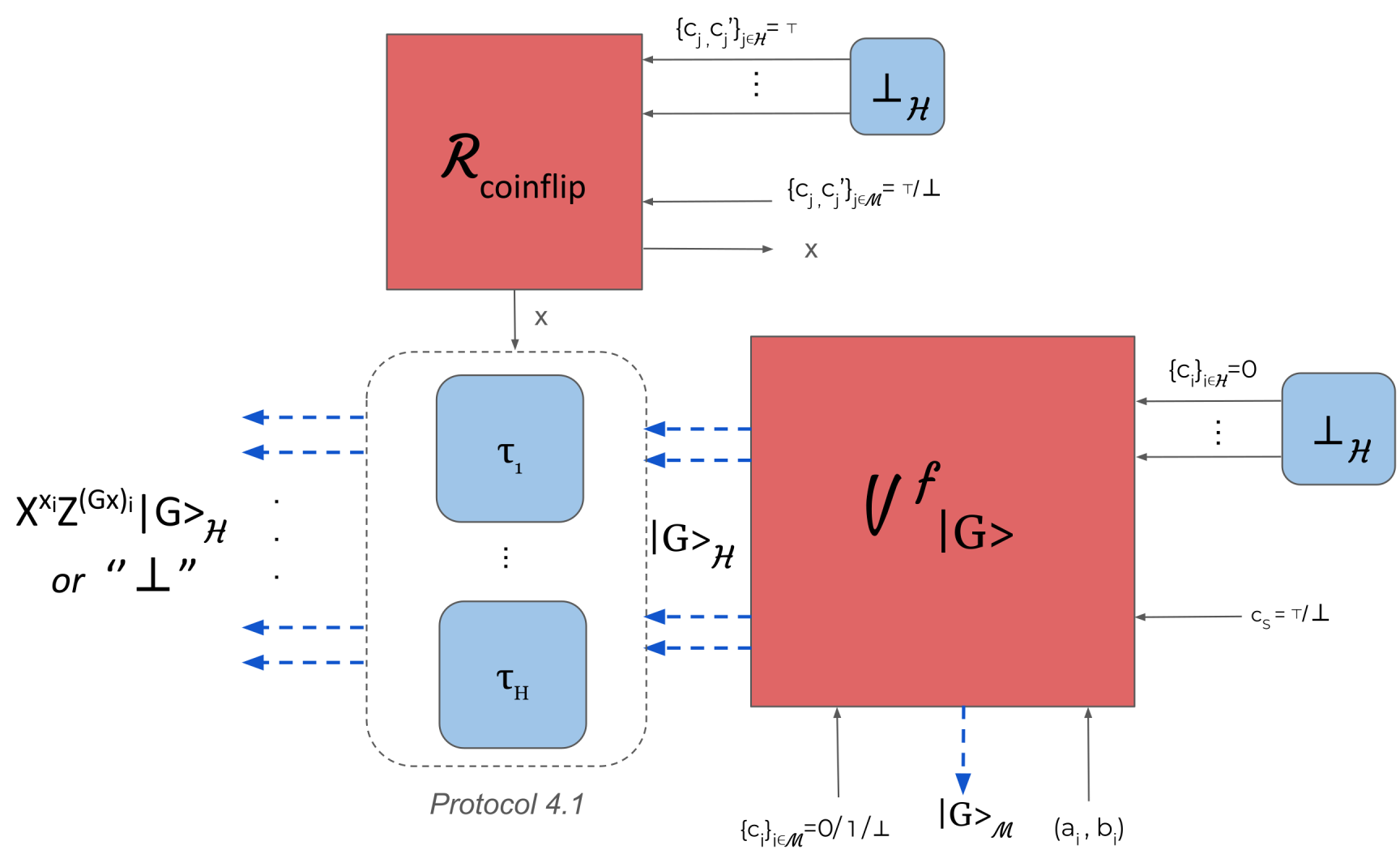}
  \end{subfigure}
   \caption{General case of the security proof of \cref{thm:verifNoCorr}}
  \label{fig:Thm41}
\end{figure}
  \paragraph{Case 1: Correctness.}
  We first need to prove correctness, i.e.\ $ \bot_{[n] \cup \source} \Verif{}\approx \tau_{[n]} \filter_{[n]} (\Verif^f \| \CoinFlip) \filter_{\source} $. If all parties are honest, neither $\Verif^f$ nor $\CoinFlip$ will abort, so all parties will receive a part of $\ket{G}$ with the same $x$. Then, since each party $i$ applies $\gateX^{x_i}\gateZ^{(Gx)_i}$, i.e.\ a part of a stabilizer, the overall resulting state is $\gateX^{x}\gateZ^{Gx}\ket{G} \eqeqLem{lem:stabilizerDiag} \ket{G}$. This is exactly the state obtained in $\bot_{[n] \cup \source} \Verif$. Hence, there exists no distinguisher able to differentiates the two resources, concluding the proof of correctness.

  \paragraph{Case 2: Security.}

  To prove the security of the protocol, we want to show that for any partition $(H,M)$ of  $[n] \cup \{\source\}$, there exists a simulator $\sigma_M$ such that $\bot_{H}\sigma_M \Verif \approx \tau_{H} \bot_H (\Verif^f \| \CoinFlip)$. To match the usual Abstract cryptography terminology, we will denote the left-hand side $\bot_{H}\sigma \Verif$ as the ideal world and the right-hand side $\tau_{H} \bot_H (\Verif^f \| \CoinFlip)$ as the real world. As before, using the classical channel allowing communication between malicious parties in $\Verif^f$, we can consider all local simulators as a single simulator. Let us define the simulator $\sigma$ formally as in \cref{protoc:simulatortocute}. \\
  
  \begin{protocol}{Simulator $\sigma$\label{protoc:simulatortocute}}   
    \begin{enumerate}
    \item First, $\sigma$ sends $c_i = \bot$ to $\Verif$ in order to receive a share $\ket{G}_M$ of $\ket{G}$ for all malicious parties in the set $M$ of corrupted parties.
    \item Then, $\sigma$ receives $c_i$ from each malicious party $i$ (in the real world these messages are sent to $\Verif^f$ by the distinguisher). If any $c_i$ is equal to $\bot$, then $\sigma$ also aborts after sending $c_i' = \bot$ to $\Verif$ to make it abort as well.
    \item $\sigma$ samples a random stabilizer $x \sample \{0,1\}^n$, applies $\gateX^{x_{M'}}\gateZ^{(G_M' x)_{M'}}$ where $M' \eqdef \{i \mid c_i = 1\}$ on the received $\ket{G}_M$ and sends the resulting qubits of parties in $M'$ to the distinguisher.
    \item Then, $\sigma$ receives a set of corrections $(a_i, b_i) \in (\{0,1\}^{|H|})^2$ for $i \in M'$, computes $x'_H \eqdef \xor_i a_i$  and $z'_H \eqdef \xor_i b_i$, and checks that $f_G(M', x'_H, z'_H) = \top$ (otherwise it sends $c_i' = \bot$ to $\Verif$ for all $i$ and aborts).
    \item Since $f_G(M', x'_H, z'_H) = \top$, let $b$ be like in \cref{thm:mainThm}, and let us define $x'_M \eqdef (V^T)^{-1}
      \begin{bmatrix}
        b\\
        \mathbf{0}
      \end{bmatrix}
      $, and $x' \eqdef
      \begin{bmatrix}
        x'_H\\
        x'_M
      \end{bmatrix}
      $.
    \item If the set of corrupted parties contains no party in $[n]$, the simulator can stop. Otherwise, it will receive bits $c_i''$ (sent to $\CoinFlip$ in the real world and called $c_i$ there). It will then send $x \xor x'$ to parties that sent $c_i'' = \bot$, wait for a bit from each party (sent to $\CoinFlip$ in the real world and denoted $c_i'$), and abort if one of these party sent $\bot$, by sending $c_i' = \bot$ to the ideal functionality for all $i \in M$. Finally, it broadcasts $x$ to all parties such that $c_i = 1$.
    \item For each party $i$ such that $c_i = 0$, $\sigma$ will apply on the $i$-th qubit (that remains from the second step) the operation $\gateX^{x_i \xor x_i'}\gateZ^{(G (x \xor x'))_i}$, and output the resulting qubit to the distinguisher.
    \item Finally, it outputs $c_i' = \top$ to $\Verif$ in order to let the functionality broadcast $\ket{G}_{\bar{M}}$ to all honest parties.
    \end{enumerate}

  \end{protocol}
  By construction, this simulator follows exactly the abort pattern of the real world protocol, and the exchanged messages are identical. So without loss of generality, we can assume that the distinguisher never aborts\footnote{We can always turn a distinguisher that aborts into a distinguisher that does not abort while increasing the probability of distinguishing, for instance by intercepting all abort messages from the original distinguisher, and continuing instead the protocol with arbitrary non-aborting inputs while sending to the original distinguisher the messages that would have been sent in both the ideal and the real protocol.}, and in particular sends valid corrections. Similarly, we can assume that the distinguisher is always sending $c_i = 1$ since the case $c_i = 0$ can easily be simulated by the distinguisher given the transcript for $c_i = 1$ since it contains more information. Under this assumption, both the real and ideal worlds can significantly be simplified:
  \begin{itemize}
  \item In the real world, the distinguisher receives first $\ket{G}_M$, then send some (valid) corrections defining $x'_H = \xor_i a_i$ and $z'_H \eqdef \xor_i b_i$, and finally receives the remaining qubits $\gateX^{x_H \xor x_H'} \gateZ^{z'_H \xor (G x)_H}\ket{G}_H$ together with $x$ that was sampled uniformly at random.
  \item In the ideal world, the distinguisher receives first $\gateX^{x_M}\gateZ^{(G x)_M}\ket{G}_M$ where $x$ is sampled uniformly at random, then it sends some (valid) corrections defining $x'_H = \xor_i a_i$ and $z'_H \eqdef \xor_i b_i$, and finally receives the remaining qubits $\ket{G}_H$ together with $x \xor x'$ where $x'_M \eqdef (V^T)^{-1}
    \begin{bmatrix}
      b\\
      \mathbf{0}
    \end{bmatrix}
    $, and $x' \eqdef
    \begin{bmatrix}
      x'_H\\
      x'_M
    \end{bmatrix}
    $.
  \end{itemize}
  First, we can realize that since $\gateX^{x_M}\gateZ^{(G x)_M}\ket{G} = \gateX^{x_M}\gateZ^{(G x)_M}(\gateX^{x }\gateZ^{G x} \ket{G}) = \gateX^{x_H}\gateZ^{(G x)_H}\ket{G}$, the ideal world is indistinguishable from a world where the distinguisher receives $\ket{G}_M$ while the functionality applies $\gateX^{x_H}\gateZ^{(G x)_H}$ on the remaining qubits. This gives this equivalent hybrid system:
  \begin{itemize}
  \item The distinguisher receives first $\ket{G}_M$ then it sends some (valid) corrections, defining $x'_H = \xor_i a_i$ and $z'_H \eqdef \xor_i b_i$, and finally receives the remaining qubits $\gateX^{x_H}\gateZ^{(G x)_H}\ket{G}_H$ (where $x$ is sampled uniformly at random) together with $x \xor x'$ where $x'_M \eqdef (V^T)^{-1}
    \begin{bmatrix}
      b\\
      \mathbf{0}
    \end{bmatrix}
    $, and $x' \eqdef
    \begin{bmatrix}
      x'_H\\
      x'_M
    \end{bmatrix}
    $.
  \end{itemize}
  Then, we can define $\hat{x} \eqdef x \xor x'$. Since $x$ is independent of $x'$, the probability of sampling any $\hat{x}$ is equal to the probability of sampling any $x$, we can therefore sample $x'$ instead of $x$, and use the fact that $x = \hat{x} \xor x'$. So, the above system can be turned into the following indistinguishable system:
  \begin{itemize}
  \item The distinguisher receives first $\ket{G}_M$ then it sends some (valid) corrections defining $x'_H = \xor_i a_i$ and $z'_H \eqdef \xor_i b_i$, and finally receives the remaining qubits $\gateX^{\hat{x}_H \xor x'_H}\gateZ^{(G (\hat{x} \xor x'))_H}\ket{G}_H$ (where $\hat{x}$ is sampled uniformly at random) together with $\hat{x}$ where $x'_M \eqdef (V^T)^{-1}
    \begin{bmatrix}
      b\\
      \mathbf{0}
    \end{bmatrix}
    $, and $x' \eqdef
    \begin{bmatrix}
      x'_H\\
      x'_M
    \end{bmatrix}
    $.
  \end{itemize}
  Finally, we have:
  \begin{align}
    (G x')_H
    &= \begin{bmatrix}
         G_H & \Gamma^T
       \end{bmatrix} x' =
      G_H x'_H \xor U^T
      \begin{bmatrix}
        I_r & 0\\
        R^T & 0
      \end{bmatrix}V^T x'_M
      = G_H x'_H \xor \left(U^T
      \begin{bmatrix}
        I_r & 0\\
        R^T & 0
      \end{bmatrix}V^T\right)\left((V^T)^{-1}
      \begin{bmatrix}
        b\\
        \mathbf{0}
      \end{bmatrix}\right) \\
    &=
      G_H x'_H \xor
      U^T \begin{bmatrix}
            b\\
            R^T b
          \end{bmatrix}
      \eqeqDef{thm:mainThm}
      G_H x'_H \xor
      U^T (U^T)^{-1} (z'_H \xor G_H x'_H) =
      z'_H
  \end{align}
  Therefore, we get $\gateX^{\hat{x}_H \xor x'_H}\gateZ^{(G (\hat{x} \xor x'))_H}\ket{G}_H = \gateX^{\hat{x}_H \xor x'_H}\gateZ^{z'_H \xor (G \hat{x})_H}\ket{G}_H$. Hence, this last system is actually equal to the real world, which concludes the indistinguishability proof.
\end{proofE}

\begin{corollaryE}[][end, restate]\label{cor:mainThmWithoutCorrections}
  Assuming the existence of a protocol for graph state verification fulfilling properties described in \cref{thm:mainThm} and a coin flipping protocol realizing \CoinFlip, there exists a protocol realizing $\Verif$.
\end{corollaryE}
\begin{proofE}
  This is a direct consequence of \cref{thm:mainThm} and \cref{thm:verifNoCorr}, where we run in \cref{protoc:realizeVerifNoCorr} the graph state verification instead of $\Verif^f$ and the coin flipping protocol instead of \CoinFlip. 
  
  More precisely, in \cref{thm:mainThm}, we first proved the equivalence $\pi_{[n] \cup \source} \cR \approx \filter_{[n] \cup \source} \Verif^f$ between a concrete verification protocol and an ideal, but gruesome, resource. Then, in \cref{thm:verifNoCorr}, we explicited the protocol $\{\tau_i\}_{i \in [n]}$, that applies a random stabilizer to the output state of the verification protocol, to construct a simpler functionality $\Verif$ from this gruesome resource alongside with a coin flipping resource. In particular, we proved the equivalence $ \bot_{[n] \cup \source} \Verif{}\approx \tau_{[n]} \filter_{[n]} (\Verif^f \| \CoinFlip) \filter_{\source} $.
  
  Let $\pi'_{[n]}$ be a concrete protocol securely realising the ideal functionality $\CoinFlip$ using resource $\cR'$. We have, in particular, that $\pi'_{[n]}\cR' \approx \bot_{[n]}\CoinFlip$. By composability of resources within the AC framework, we get the correctness equivalence :
  \begin{equation}
  \pi_{[n] \cup \source}\circ\tau_{[n]}  (\cR \| \pi'_{[n]}\cR') \approx \bot_{[n] \cup \source} \Verif{} . \end{equation}
  
  For all subset of malicious parties $M$, we can similarly appropriately compose the simulators from the proofs of the secure construction of $\CoinFlip$, $\Verif^f$ and $\Verif$ to obtain simulators $\sigma_M$. We can use these simulators and the composability of AC to similarly prove all the equivalences necessary in the Security \cref{def:securityDef}. This proves the secure construction of the ideal functionality $\Verif$ by a graph state verification protocol fulfilling properties described in \cref{thm:mainThm} and a coin flipping protocol realizing \CoinFlip.
\end{proofE}

\needspace{5cm}
\section{Use cases}
\label{sec:UseCases}
\textEnd{\subsection{Proofs of \cref{sec:UseCases}}}

Our result readily applies to some already existing protocols in the literature. In the following we show two such use cases and discuss their implication and improvement over previous works.

\subsection{Generic translation tools}

Before studying specific protocols, it is handy to first derive some generic theorems in order to translate between the notations used in existing graph state verification protocols and the notion of distance used in \cref{def:graphStateVerifProtoc}. Indeed, while our notion of distance uses a single parameter $\eps$ quantifying the average distance to the ideal graph state (which is handy since in AC we have a single parameter $\eps'$ used to denote the distance between two systems $A \approx_{\eps'} A$), existing works consider instead two parameters, namely the probability $\delta$ of being $\eta$-close to the ideal graph state.

\begin{remark}\label{rq:simultaneousAbortion}
  We also emphasize that most existing protocols implicitly assume that either all parties abort or that they all accept the final state together. We will refer to that property as \emph{simultaneous abortion}. While \cref{def:graphStateVerifProtoc} does not strictly enforce this behavior, any protocol that does not fulfill simultaneous abortion with high probability will also have low guarantee in term of security. Said differently, the $\eps$ obtained in \cref{def:graphStateVerifProtoc} and \cref{thm:mainThm} will be far from $0$. This makes sense since our resource \cref{protoc:idealResourceVerif} always sends the same abort/accept bit to all parties. So, if the adversary can send different abort bits to different parties, this gives directly a simple way to distinguish the ideal world from the real world, by simply checking if all parties share the same abort bit.

  This could be a real issue in practice, for instance in~\cite{MEVresistant} if a malicious verifier is picked, the verifier could send different abort bits to all parties. As a result, without further checks, this protocol could only be proven composably secure for some \emph{constant} $\eps$ (as a reminder, in the original protocol, we expect to be able to get $\eps$ as small as wanted, the running time scaling polynomially with $O(1/\eps)$). We have two options to avoid this issue:
  \begin{itemize}
  \item Either change the definition of $\Verif$ to get rid of simultaneous abortion, meaning that some honest parties could abort while others would not.
  \item Or we could slightly adapt the protocol using a \emph{broadcast channel}, i.e.\ a channel where all parties receive the same bit, in order to notify to all parties at the same time the final abort bit. 
  \end{itemize}
  Since the first option makes the resource harder to use, we will opt for the second solution. This makes even more sense as in existing protocols, the protocol already needs to obtain a random string known to all parties, which often already implicitly requires a broadcast channel. One might also ask whether, reciprocally, a broadcast channel is needed to realize $\Verif$. It turns out that $\Verif^f$ can already be (ab)used to obtain a kind of broadcast channel: since all parties see the same abort bit, one could decide that ``abort'' means 0 and that ``accept'' means 1. Therefore, it should come at no surprise that a broadcast channel is needed to realize $\Verif$.
\end{remark}

\begin{lemmaE}[][end, restate]\label{lem:equivalentFormulationA}
  Let $\Pi = \{\pi_i\}_{i \in [n] \cup \source}$ be a protocol generating, when all parties are honest, a state $\ket{G}$ shared among all parties but the source. We assume that $\Pi$ has simultaneous abortion (\cref{rq:simultaneousAbortion}), i.e.\ that for any adversary $\cA$, either all honest parties abort at the same time with some probability $1-p$ or accept and output the averaged state $\rho_\top$. Then, if \emph{any} of the following conditions is fulfilled, this protocol is an $\eps$-graph state verification protocol according to \cref{def:graphStateVerifProtoc}:
  \begin{itemize}
  \item If $p(1-F(\rho_\top, \Tr_{[n]\setminus H}(\ketbra{G}{G}))) \leq \eps$. This denotes the fact that the probability of accepting and outputting a state far from $\ket{G}$ to the honest parties is small.
  \item Or if $F(\rho_\top, \Tr_{[n]\setminus H}(\ketbra{G}{G})) \geq 1 - \eps$. This corresponds to the protocol's property to create a state close to the desired state. Note that this condition is strictly stronger than the first one, since an adversary might be able to produce such a state with negligible probability. Yet, unconditionally secure protocols might prefer this formulation.
  \end{itemize}  
\end{lemmaE}
\begin{proofE}
  Since the protocol is correct by assumption, the first point of \cref{def:graphStateVerifProtoc} is trivially fulfilled. We focus now on proving that the protocol is secure according to \cref{def:graphStateVerifProtoc}. Since the protocol has simultaneous abortion, the first part of the security statement is trivially true, hence we just need to find $\sigma$ such that \cref{eq:graphStateVerifProtocSecurity} is true. Let $\rho \eqdef \Tr_{[n] \setminus H}(\pi_{H} \cR \cA)$ be the averaged state obtained at the end of the interaction with $\cA$. Then, since $\pi$ has simultaneous abortion, we know that there exists $p \eqdef \bra{\bot^{|H|}} \rho \ket{\bot^{|H|}}$ (actually this is the same $p$ as in the lemma) such that:
  \begin{align}
    \rho = (1-p)\ketbra*{\bot^{|H|}}{\bot^{|H|}} + p\rho_\top
  \end{align}
  where $\rho_\top$ is a normalized state orthogonal to $\ketbra*{\bot^{|H|}}{\bot^{|H|}}$, corresponding to the averaged state obtained when no party aborted.\\

Let us define
  \begin{align}
    \sigma \eqdef p\Tr_{[n]\setminus H}(\ketbra{G}{G}) + (1-p) \ketbra*{\bot^{|H|}}{\bot^{|H|}}  
  \end{align}
  and find $\eps$ such that $F(\rho,\sigma) \geq 1-\eps$, or, equivalently, such that $1-F(\rho,\sigma) \leq \eps$. To compute this quantity, we can first use the strong concavity of fidelity (\cite[Thm.~9.7]{NC10_QuantumComputationQuantum}) that states that $F(\sum_i p_i \rho_i ,\sum_i q_i \sigma_i) \geq \sum_i \sqrt{p_i q_i} F(\rho_i ,\sigma_i)$. Therefore, we have:
  \begin{align}
    1-F(\rho,\sigma)
    &\leq 1-(1-p)F(\ketbra*{\bot^{|H|}}{\bot^{|H|}}, \ketbra*{\bot^{|H|}}{\bot^{|H|}}) - p F(\rho_\top, \Tr_{[n]\setminus H}(\ketbra{G}{G}))\\
    & = p(1-F(\rho_\top, \Tr_{[n]\setminus H}(\ketbra{G}{G})))\label{eq:poneminusf}
  \end{align}
  where the last equality is a direct consequence of $F(\ketbra{\phi}{\phi},\ketbra{\phi}{\phi}) = 1$ when $\ket{\phi}$ is a pure state. If any of the first two assumptions of this lemma are fulfilled, we can directly inject it in \cref{eq:poneminusf} to obtain \cref{eq:graphStateVerifProtocSecurity}. This shows that the protocol is an $\eps$-graph state verification protocol according to \cref{def:graphStateVerifProtoc} (for the second assumption, we upper bound $p$ by $1$).
\end{proofE}

\begin{lemmaE}[][end, restate]\label{lem:translationFidelityProba}
  If a protocol has simultaneous abortion (\cref{rq:simultaneousAbortion}), and if the probability (on the randomness of $\cA$ and the whole protocol) to have no abort and a final state far from the target $\ket{G}$ is small, more formally:
  \begin{align}
    \pr[\ket{\psi_i} \gets \pi_H \cR \cA]{\Tr_{[n] \setminus H}\ket{\psi_i} \neq \ket{\bot^H} \land \sqrt{1-F^2(\Tr_{[n] \setminus H}\ketbra{\psi_i}{\psi_i}, \Tr_{[n] \setminus H}\ketbra{G}{G})} \geq \eta} \leq \delta\label{eq:assumptionVersionF}
  \end{align}
  or, equivalently,
  \begin{align}
    \pr[\ket{\psi_i} \gets \pi_H \cR \cA]{\Tr_{[n] \setminus H}\ket{\psi_i} \neq \ket{\bot^H} \land \min_U \TD((I^H \otimes U^M)\ket{\psi_i}, \ket{G}) \geq \eta} \leq \delta\label{eq:assumptionVersionMin}
  \end{align}  
  then this protocol is an $(\delta + \eta^2)$-graph state verification protocol according to \cref{def:graphStateVerifProtoc}.
\end{lemmaE}
\begin{proofE}
  First, the fact that these two definitions are equivalent comes from the fact that for pure states, $\TD(\ket{\psi},\ket{\phi}) = \sqrt{1-|\braket{\psi}{\phi}|^2}$ (\cite[Eq.~9.99]{NC10_QuantumComputationQuantum}), hence:
  \begin{align}
    \min_U \TD((I^H \otimes U^M)\ket{\psi_i}, \ket{G})
    & = \min_U \sqrt{1-|\bra{G}(I^H \otimes U^M)\ket{\psi_i}|^2}\\
    & = \sqrt{1-\max_U |\bra{G}(I^H \otimes U^M)\ket{\psi_i}|^2}\\
    & = \sqrt{1-F^2(\Tr_{[n] \setminus H}\ketbra{\psi_i}{\psi_i}, \Tr_{[n] \setminus H}\ketbra{G}{G})}
  \end{align}
  where the last equality comes from Uhlmann's theorem (\cite[Ex.~9.15]{NC10_QuantumComputationQuantum}), and additionally remarking that all purifications are equal up to a local unitary on the purified space. In the following, we will therefore only consider the assumption involving the fidelity.
  
  We will now simplify the assumption \cref{eq:assumptionVersionF}, but first let us define some notations. First, it will be handy to denote, like in \cref{lem:equivalentFormulationA}, $\rho_\top$ as the averaged normalized state obtained by honest parties assuming that the protocol has not aborted:
  \begin{align}
    \rho_\top \eqdef \esp*[\ket{\psi_i} \gets \pi_H \cR \cA\\ \Tr_{[n] \setminus H}\ket{\psi_i} \neq \ket{\bot^H}]{\Tr_{[n] \setminus H} \ketbra{\psi_i}{\psi_i}}
  \end{align}
  If we denote by $\{\ket{\psi_i}\}_i$ the set of all states producible by a $\pi_H \cR \cA$, and if $p_i$ represents the probability of outputting $\ket{\psi_i}$ assuming that it is not aborting ($\Tr_{[n] \setminus H}\ket{\psi_i} \neq \ket{\bot^H}$), we have therefore:
  \begin{align}
    \rho_\top = \Tr_{[n] \setminus H} \left(\sum_i p_i \ketbra{\psi_i}{\psi_i}\right) \label{def:rhotopassum}
  \end{align}
  To simplify further the notation we define for brevity:
  \begin{align}
    F_i & \eqdef F(\Tr_{[n] \setminus H}\ketbra{\psi_i}{\psi_i}, \Tr_{[n] \setminus H}\ketbra{G}{G})\label{eq:notationFi} \\
    F   & \eqdef F(\Tr_{[n] \setminus H}\ketbra{\psi}{\psi}, \Tr_{[n] \setminus H}\ketbra{G}{G})
  \end{align}  
  Using this notation, we can simplify the LHS of the assumption \cref{eq:assumptionVersionF} as:
  \begin{align}
    &\pr[\ket{\psi} \gets \pi_H \cR \cA]{\Tr_{[n] \setminus H}\ket{\psi} \neq \ket{\bot^H} \land \sqrt{1-F^2} \geq \eta}\\
    &= \pr*[\ket{\psi} \gets \pi_H \cR \cA]{\Tr_{[n] \setminus H}\ket{\psi} \neq \ket{\bot^H}}
      \pr*[\ket{\psi} \gets \pi_H \cR \cA\\\Tr_{[n] \setminus H}\ket{\psi} \neq \ket{\bot^H}]{\sqrt{1-F^2} \geq \eta}
  \end{align}
  By defining $p \eqdef \pr*[\ket{\psi} \gets \pi_H \cR \cA]{\Tr_{[n] \setminus H}\ket{\psi} \neq \ket{\bot^H}}$ as the probability of non aborting, this is equal to:
  \begin{align}
    p \pr*[\ket{\psi} \gets \pi_H \cR \cA\\\Tr_{[n] \setminus H}\ket{\psi} \neq \ket{\bot^H}]{\sqrt{1-F^2} \geq \eta}
  \end{align}
  Then, using the fact that the probability for $\ket{\psi}$ to be equal to $\ket{\psi_i}$ is $p_i$, we can simplify this as:
  \begin{align}
    p \pr*[\ket{\psi} \gets \pi_H \cR \cA\\\Tr_{[n] \setminus H}\ket{\psi} \neq \ket{\bot^H}]{\sqrt{1-F^2} \geq \eta}
    = p \sum _p p_i \pr*[]{\sqrt{1-F_i^2} \geq \eta}
    = p \sum_{p \mid \sqrt{1-F_i^2} \geq \eta} p_i
  \end{align}
  Therefore, assumption \cref{eq:assumptionVersionF} is equivalent to:
  \begin{align}
    p \sum_{i \mid \sqrt{1-F_i^2} \geq \eta} p_i \leq \delta \label{eq:boundCepssimpler}
  \end{align}
  
  Now, we can remark that using \cref{lem:equivalentFormulationA}, it is enough to show that $p(1-F(\rho_\top, \Tr_{[n]\setminus H}(\ketbra{G}{G}))) \leq \eps$ to prove that the protocol is an $\eps$-graph state verification protocol. Since we aim to upper bound $p(1-F(\rho_\top, \Tr_{[n]\setminus H}(\ketbra{G}{G})))$, we can simplify this expression as follows:
  \begin{align}
    & p(1-F(\rho_\top, \Tr_{[n]\setminus H}(\ketbra{G}{G})))\\
    & \eqeq{def:rhotopassum} p\left(1-F\left(\sum_i p_i \Tr_{[n] \setminus H} \ket{\psi_i}, \Tr_{[n]\setminus H}(\ketbra{G}{G})\right)\right)                          \\
    & \ \leq p\left(1-\sum_i p_i F_i\right) \tag{Concavity of fidelity (\cite[Ex.~9.20]{NC10_QuantumComputationQuantum})} \\
    & \ = p\sum_i p_i \left(1- F_i\right) \tag{$\sum_i p_i = 1$}\\
    & \ = p\sum_{i \mid \sqrt{1-F_i^2} \geq \eta} p_i \left(1- F_i\right) \nonumber + p\sum_{i \mid \sqrt{1-F_i^2} < \eta} p_i \left(1- F_i\right) \tag{Split the sum} \\
    & \ \leq \delta + p\sum_{i \mid \sqrt{1-F_i^2} < \eta} p_i \left(1- F_i\right) \tag{Fidelity is $\geq 0$ + \cref{eq:boundCepssimpler}}
  \end{align}
  Now, in the remaining terms, we assume $\sqrt{1-F_i^2} < \eta$, i.e.\ $1-F_i^2 < \eta^2$. But since $0 \leq F_i \leq 1$, we have $0 \leq F_i^2 < F_i$ and therefore $1-F_i \leq 1-F_i^2 < \eta^2$.
  Therefore, after injecting this in the previous equation, we get:
  \begin{align}
    p(1-F(\rho_\top, \Tr_{[n]\setminus H}(\ketbra{G}{G})))
    &\leq \delta + p\sum_{i \mid \sqrt{1-F_i^2} < \eta} p_i \eta^2\\
    &\leq \delta + \eta^2\tag{Upper bound probabilities by $1$}
  \end{align}
\end{proofE}

\subsection{GHZ-state verification}
 In~\cite{MEVresistant}, the authors develop and analyze an $n$-party verification protocol consisting only of classical communication and local quantum operations once the state is shared. One of the parties, called the \textit{Verifier}, has a central role in the protocol: it sends instructions to all parties and broadcasts the output of the verification. We recall the protocol of~\cite{MEVresistant} in~\cref{protoc:GHZprot}. In the AC language, it uses quantum and classical authenticated secret channels, a broadcast channel and a Common Random String resource (CRS) as concrete resources to build an ideal GHZ-state sharing functionality. Note that it is important that the random string is sent by the CRS once all parties received the state from the source. 

\begin{protocol}{Multipartite entanglement verification protocol\label{protoc:GHZprot}}
  \begin{enumerate}
  \item The source creates an $n$-qubit state locally equivalent to the GHZ state (up to local Hadamard and phase shift $\sqrt{Z}$ gate) and sends each qubit $i$ to party $i$.
  \item After receiving the state, the parties receive $r \sample \{0,1\}^S$ and $i \sample [n]$ from a CRS resource, where $S$ is a security parameter. Note that the source should not reveal $r$ and $i$ before the state is received by all parties.
  \item If $r=0\dots0$, the state received at step $1$ is outputted by each party and the protocol stops.
  \item Otherwise, if $r \neq 0\dots 0$, party $i$ is designed as the Verifier.
  \item The Verifier selects for each $i\in[n]$ a random input $x_{i}\in\{0,1\}$ such that $\sum_{i=1}^{n}x_{i}\equiv 0 \mod 2$ and sends it to the corresponding party via an authenticated private classical channel resource. The Verifier keeps one to themselves.
  \item If $x_{i}=0$, party $i$ performs a Z operation on their qubit. If $x_{i}=1$, party $i$ performs a Hadamard operation. 
  \item Each party $i$ measures their qubit in the $\{\ket{0},\ket{1}\}$ basis and sends their outcome $y_{i}$ to the Verifier via the classical channel.
  \item The Verifier accepts and broadcasts $b_{out}\eqdef 0$ if and only if 
    \begin{equation*}
      \sum_{i=1}^{n}y_{i}\equiv\frac{1}{2}\sum_{i=1}^{n}x_{i} \mod 2
    \end{equation*}
    Otherwise, the Verifier broadcasts $b_{out}=1$ and the protocol aborts.
  \item If the Verifier has not rejected, we restart from Step 1.
\end{enumerate}
\end{protocol}

This protocol has been extensively studied and presents desirable properties: notably the probability of aborting increases when the state sent by the source is further away from the target state (locally equivalent to) $\ket{GHZ}$, and this probability can be made arbitrary small by increasing the number of rounds. More precisely, for one round, if $\rho$ denotes the state shared among the parties by the source, and if the verifier is honest, the probability of aborting is $\pr{b_{out} = 1} = \frac{\tau^2}{4}$ with:
\begin{equation}
    \tau = \min_{U}\mbox{TD}(\ketbra{GHZ}{GHZ}, U\rho U^\dagger)
\end{equation}
where TD is the trace distance and $U$ is an operator acting only on the space of the dishonest parties. Since the verifier might not always be honest, we can repeat this protocol to increase the abort probability when the state is maliciously prepared. The verification protocol thus consists in rounds where the source shares a state to the parties, who then samples $r \sample \{0,1\}^S$ to decide if they verify the state or if they keep it. This process goes on until until either $r=0\dots0$ in step (3), in which case the parties output a quantum state, or if $b_{out}=1$ in step (8), in which case the parties abort. The fact that the parties decide to verify or keep the state \emph{after} it is shared ensures that possible malicious parties do not adapt their behaviour to a particular round. In \cite{MEVresistant}, the authors prove that the probability that a state $\ket{\psi}$ is accepted after repeating the protocol until $r=0\cdots0$, and $\ket{\psi}$ is further than $\eps$ from the GHZ state is given by:
\begin{equation} 
\pr{\ket{\psi} \textrm{accepted} \land \min _{U} \mbox{TD}(U\ket{\Psi},\ket{GHZ}) \geq \epsilon} =  2^{-S} \frac{4 n}{h \eps^2}
\end{equation}
where $U$ acts on the space of the dishonest parties, $n$ is the total number of parties and $h$ is the number of honest parties.

In later work~\cite{MEVexperimental}, an experimental realization of a loss-tolerant variant of this protocol has been implemented, hinting towards practicability of this protocol in real-life networks. Note that the result that we show in this section also applies to this variant. \\

A previous composable security study of this protocol~\cite{YDK21_ComposableSecurityMultipartite} could not show the composable security of this protocol in the case of a coalition on malicious party and source and conjectured that it was impossible to prove in the AC framework. Our present work answers this conjecture by the negative. The key difference with the previous work lies in \cref{rq:GlobalSimu}, which allows local simulators to communicate between each others via an additional communication channel resource. Equivalently, this allows to consider a single global simulator having access to all the corrupted interfaces of the ideal result, which is actually the choice made by the Universal Composability framework. Building on the work from the previous sections, we can show the following lemma proving composable security of the protocol from \cite{MEVresistant}:

\begin{lemmaE}[][end, restate]
  The protocol defined in~\cite{MEVresistant}, assuming that we use a broadcast channel to transmit the abort bit, is an $\eps$-graph state verification protocol for $\eps \eqdef \frac{1}{2^{S/2}}(4n + 1)$, where $2^S$ is the average number of tests before outputting a state as defined in~\cite{MEVresistant} and $n$ is the total number of parties. As a result, it $(2\sqrt{2\eps - \eps^2})$-realizes $\Verif[\ket{G}]^f$ as defined in \cref{thm:mainThm}, and can be turned into a protocol that $(2\sqrt{2\eps - \eps^2})$-realizes $\Verif$.
\end{lemmaE}
\begin{proofE}
  This is a corollary of \cref{lem:translationFidelityProba}, using the second version of the assumption (\cref{eq:assumptionVersionMin}). The probability in \cref{eq:assumptionVersionMin} corresponds exactly to the $\pr{C_\eps}$ defined in~\cite{MEVresistant} right before Theorem~3. This same theorem actually states that for any $\eps > 0$, $\pr{C_\eps} \leq \frac{4n}{2^{S} h \eps^2}$, where $2^S$ is the average number of tests before outputting a state as defined in~\cite{MEVresistant}, $n$ is the total number of parties, and $h$ is the number of honest parties. 
  
In AC the distance between two resources is the same irrespective of the number of corrupted parties. We thus aim to upper bound this by a number independent from the number of malicious parties. We can assume that at least one party is honest, i.e. $h \geq 1$, to get $\pr{C_\eps} \leq \frac{4n}{2^{S} \eps^2}$.  Since this is true for any $\eps$, we can in particular define:
  \begin{align}
    \eta &\eqdef \sqrt[4]{\frac{1}{2^S}} \label{def:epsstar}\\
    \delta &\eqdef \frac{4n}{2^{S} \eta^2} = \frac{4n}{2^{S/2}}
  \end{align}
  and we have $\pr{C_\eta} \leq \delta$. So we can apply \cref{lem:translationFidelityProba} to show that the protocol is a $\delta + \eta^2$-graph state verification protocol, i.e.\ an $\eps$-graph state verification protocol for $\eps \eqdef \delta + \eta^2 = \frac{4n}{2^{S/2}} + \sqrt{\frac{1}{2^S}} = \frac{1}{2^{S/2}}(4n + 1) $. Finally, we can conclude the proof by using \cref{thm:mainThm} and \cref{cor:mainThmWithoutCorrections} that directly show that this protocol realizes $\Verif^f$ and can be turned, with little changes, into a protocol realizing $\Verif$.
\end{proofE}

\subsection{Graph state verification}

The work of \cite{UM22_VerificationGraphStates} presents a generalization of the previous protocol to arbitrary graph states, where the verifier will ask parties to measure random stabilizers of the graph state. We can similarly show that it is composably secure. Note however that this protocol, like the previous one, shows security scaling polynomially with the security parameter while we usually expect it to scale super-polynomially. However, this allows significantly simpler protocols and most existing graph state verification protocols have this property. But our framework, of course, also applies to protocols that are super-polynomially secure.

\begin{lemmaE}[][end, restate]
  We define, as in the theorem 3 of ~\cite{UM22_VerificationGraphStates} (where we use the fact that conditioned on non-aborting, we have $N_{\mathsf{pass}} \geq \lambda J N_{\mathsf{test}} - \frac{N_{\mathsf{test}}}{2J}$ as described in protocol 2 to simplify the expression of $p_0$ and avoid any dependency on a number that might be different every time we run the protocol):
  \begin{itemize}
  \item $J = 2^n$ or $J = n$ depending on $G$ as described in \cite[Thm.\ 2]{UM22_VerificationGraphStates},
  \item $\lambda$ be the security parameter growing polynomially with the number of tests
  \item $m$ and $c$ some positive constants chosen so that $p_0$ and $\eta_0$ defined later are greater than $0$,
  \item $p_0 \eqdef [1-\sum_{x=0}^\lambda (1-\frac{1}{n})^x(\frac{1}{n} J^{\frac{-2cm}{3}})^{\lambda-x}]^J$ (we got rid of the number of honest parties $|H| \geq 1$ since we want this to be independent of the number of malicious parties)
  \item $\eta_0 \eqdef \left(\frac{1}{\lambda} - \frac{1}{\lambda^2} \right) + \left(1+\frac{1}{\lambda} \right)\frac{\sqrt{c}+1/2}{J}$
  \end{itemize}
  The symmetric protocol 2 defined in~\cite{UM22_VerificationGraphStates}, assuming that we use a broadcast channel to transmit the abort bit, is an $\eps$-graph state verification protocol for $\eps \eqdef 1 - p_0 + 2\eta_0-\eta_0^2$. As a result, it $(2\sqrt{2\eps - \eps^2})$-realizes $\Verif[\ket{G}]^f$ as defined in \cref{thm:mainThm}, and can be turned into a protocol that $(2\sqrt{2\eps - \eps^2})$-realizes $\Verif$.
\end{lemmaE}
\begin{proofE}
  This is mostly a corollary of \cref{lem:translationFidelityProba}, using the first version of the assumption (\cref{eq:assumptionVersionF}). Indeed, using \cite[Thm.~3]{UM22_VerificationGraphStates}, and by denoting by $\rho_i$ the reduced state outputted by honest parties during a given run (this corresponds to $\rho_H^{\mathsf{avg}}$ in \cite[Thm.~3]{UM22_VerificationGraphStates}), we know that for any adversary $\cA$:
  \begin{align}
    \pr[\rho_i \gets \Tr_{[n] \setminus H} \pi_H \cR \cA, \rho_i \neq \ket{\bot^{H}}]{F(\rho_i, \Tr_{[n] \setminus H}\ketbra{G}{G}) \geq 1-\eta_0} \geq p_0
  \end{align}
  But for any $\rho_i$,
  \begin{align}
     F(\rho_i, \Tr_{[n] \setminus H}\ketbra{G}{G}) \geq 1-\eta_0
    \Leftrightarrow
    \sqrt{1-F^2(\rho_i, \Tr_{[n] \setminus H}\ketbra{G}{G})} \leq \sqrt{1-(1-\eta_0)^2} = \sqrt{2\eta_0-\eta_0^2}
  \end{align}
  So let $\eta \eqdef \sqrt{2\eta_0-\eta_0^2}$, and $\delta \eqdef 1 - p_0$. We have therefore:
  \begin{align}
    \pr[\rho_i \gets \Tr_{[n] \setminus H} \pi_H \cR \cA, \rho_i \neq \ket{\bot^{H}}]{\sqrt{1-F^2(\rho_i, \Tr_{[n] \setminus H}\ketbra{G}{G})} \geq \eta_0} \leq \delta
  \end{align}
  and, since $\pr{X \land E} = \pr[E]{x} \pr{E} \leq \pr[E]{x}$, we get as well:
  \begin{align}
    \pr[\rho_i \gets \Tr_{[n] \setminus H} \pi_H \cR \cA]{\rho_i \neq \ket{\bot^{H}} \land \sqrt{1-F^2(\rho_i, \Tr_{[n] \setminus H}\ketbra{G}{G})} \leq \eta_0} \leq \delta
  \end{align}
  So according to \cref{lem:translationFidelityProba}, the protocol is an $(\delta + \eta^2)$-graph state verification protocol, i.e.\ an $\eps$-graph state verification protocol for $\eps \eqdef \delta + \eta^2 = 1 - p_0 + 2\eta_0-\eta_0^2$. Finally, we can conclude the proof by using \cref{thm:mainThm} and \cref{cor:mainThmWithoutCorrections} that directly show that this protocol realizes $\Verif^f$ and can be turned, with little changes, into a protocol realizing $\Verif$.
\end{proofE}

\section{Conclusion}

In this work, we studied the composable security of generic graph-state verification protocols, i.e.\ protocols consisting of a source sharing graph-states to a network of parties, who then only perform local operations and classical communications, to decide whether they can use some of the states for another protocol. We showed that they can be considered equivalent to an ideal abstract resource, up to an $\eps$ that we explicited. This resource shares graph-states while allowing a restricted class of corrections to be applied to them by malicious parties. We proved that without modifying the original protocols, these malicious actions cannot be prevented when considering generic black-box simulators. We thus showed how to modify the original verification protocols to prevent the malicious parties to effectively act on the shared graph-state. The modification consists in asking the parties to jointly apply a random stabilizer to the state that they receive. By doing so, the modified protocol can be considered equivalent to a ideal functionality simply sharing graph-states. In this modified version, the dishonest parties can only force the protocol to abort.

To prove our results, we studied the class of \emph{mergeable states}, to which graph-states belongs. We showed that we can merge two copies of a graph state into one copy, by acting only on a partition of the qubits. This generalizes entanglement swapping to arbitrary graph state. We explicited the measurements and local operations to do this swapping using scalable ZX-calculus~\cite{CHP19_SZXcalculusScalableGraphical}. For graph-state manipulation, the scalable ZX-calculus formalism proved to be more handy than the usual density matrix formalism, and we provide an introduction in \cref{appendix:introZX}. We believe this work is one of the first using such methods in the context of quantum cryptography.

We emphasize that our results mostly preserves the security features of the original protocol. Notably, if the initial protocol is $\eps$-secure, our claim is that it $\eps'$-realizes the above functionalities where $\eps'$ is polynomially related to $\eps$. In particular, if $\eps$ is scaling inverse polynomially with the security parameter, $\eps'$ will also scale inverse polynomially. On the other hand if $\eps$ is negligible, so will $\eps'$. 

Our results are crucial in the context of network protocol development. The class of verification protocols that we studied are use as building-blocks in many quantum network protocols. Our work explicits to which extent they can be repeatedly used by computing of communication protocols to get verified graph-state without threatening the overall security. We showed in the last section of the paper how to apply our result to two existing verification protocols~\cite{MEVresistant,UM22_VerificationGraphStates}. In the process, we answered negatively a conjecture posed in~\cite{YDK21_ComposableSecurityMultipartite} stating that graph-state verification protocols cannot be proven composably secure in the Abstract cryptography framework. Moreover, our composability proof readily applies in a network where trusted parties wish to build trust on a source of graph-state. This is the case in many real-life scenario where a group of people wish to get a graph-state from an untrusted network to perform, for example, a multiparty computing protocol. In future works, we will study different contexts in which our result can apply.

\section{Acknowledgment}

The authors deeply thank Robert Booth for suggesting a more elegant representation of graph states. This work is co-funded by the European Union (ERC, ASC-Q, 101040624) and the Horizon Europe program (QUCATS). It is also supported by the Dutch National Growth Fund (NGF), as part of the Quantum Delta NL programme, and the Government of Spain (Severo Ochoa CEX2019-000910-S and FUNQIP), Fundació Cellex, Fundació Mir-Puig, Generalitat de Catalunya (CERCA program). Views and opinions expressed are however those of the author(s) only and do not necessarily reflect those of the European Union or the European Research Council. This work is supported by France 2030 under the French National Research Agency projects HQI ANR-22-PNCQ-0002 and EPiQ ANR-22-PETQ-0007 as well as ANR Project SecNISQ. Neither the European Union nor the granting authority can be held responsible for them.

\printbibliography[heading=bibintoc]

\newpage

\appendix
\section{Introduction to the (scalable) ZX-calculus}\label{appendix:introZX}

\subsection{Introduction to the (scalable) ZX-calculus.}

\paragraph{ZX-diagrams.}

A ZX-diagram is a graph with some inputs (on the left) and outputs (on the right). We can associate to any ZX-diagram a matrix, using the following interpretation of the generators, where the colored nodes are being called ``spiders'', and $H$ being strictly speaking only a syntactic sugar:
\begin{align*}
  \begin{split}
    \left\llbracket\, \zx{\leftManyDots{n} \zxZ{\alpha} \rightManyDots{m}} \,\right\rrbracket &= \ket{0}^{\otimes m}\bra{0}^{\otimes n}+e^{i \alpha}\ket{1}^{\otimes m}\bra{1}^{\otimes n}\\
    \left\llbracket\, \zx{\leftManyDots{n} \zxX{\alpha} \rightManyDots{m}} \,\right\rrbracket &= \ket{+}^{\otimes m}\bra{+}^{\otimes n}+e^{i \alpha}\ket{-}^{\otimes m}\bra{-}^{\otimes n}\\
    \left\llbracket\,
        \begin{ZX}
          \zxN{} \ar[rd,s] &[\zxWCol] \zxN{}\\[\zxWCol]
          \zxN{} \ar[ru,s] &[\zxWCol] \zxN{}
        \end{ZX}
    \,\right\rrbracket &= \ketbra{00}{00}+\ketbra{10}{01}+\ketbra{01}{10}+\ketbra{11}{11}\\
    \left\llbracket\,\zx{\zxNone{} \rar &[\zxwCol] \zxH{} \rar &[\zxwCol] \zxNone{}}\,\right\rrbracket &= \ketbra{+}{0}+\ketbra{-}{1}
  \end{split}
  \begin{split}
    \left\llbracket\, \zx{\zxNone{} \rar &[2mm] \zxNone{}} \,\right\rrbracket  &= \ketbra{0}{0}+\ketbra{1}{1}\\
    \left\llbracket\,\zx{\zxN{} \dar[C]\\[\zxWCol] \zxN{}}\,\right\rrbracket &= \ket{00}+\ket{11}\\
    \left\llbracket\,\zx{\zxN{} \dar[C-]\\[\zxWCol] \zxN{}}\,\right\rrbracket &= \bra{00}+\bra{11}\\
    \llbracket\, \zx[inner sep=1mm]{\zxEmptyDiagram{}} \,\rrbracket &= \begin{pmatrix} 1 \end{pmatrix}
  \end{split}
\end{align*}
Note that when $\alpha = 0$, we can omit the angle in the spiders. Moreover, we can compose these generators sequentially (resp.\ in parallel). We then obtain the resulting natural interpretation, by inductively computing the matrix product (resp.\ tensor product) of the interpretation of the sub-diagrams.

\paragraph{Circuit to ZX.}

One can turn any quantum circuit into a ZX-diagram. Let $(a,b) \in \{0,1\}^2$ and $\alpha \in \R$. Then, up to a non-relevant re-normalisation scalar and global phase, we represent basic states using $\zx{\zxX*{a\pi} \rar &[\zxwCol]} = \ket{a}$, $\zx{\zxZ*{a\pi} \rar &[\zxwCol]} = H\ket{a}$ and one-qubits gates using $\zx{\zxNL \zxH{} \zxNR} = \gateH$, $\zx{\zxN{} \rar &[\zxwCol] \zxX{\alpha} \rar &[\zxwCol] \zxN{}} = \gateRX(\alpha)$ and $\zx{\zxN{} \rar &[\zxwCol] \zxZ{\alpha} \rar &[\zxwCol] \zxN{}} = \gateRZ(\alpha)$. We have in particular $\zx{\zxN{} \rar &[\zxwCol] \zxX*{a\pi} \rar &[\zxwCol] \zxN{}} = \gateX^a$ and $\zx{\zxN{} \rar &[\zxwCol] \zxZ*{a\pi} \rar &[\zxwCol] \zxN{}} = \gateZ^a$. Two qubit gates are represented as\footnote{Although the generators do not allow vertical wires, we will see that we can freely bend wires and move nodes since ``only topology matters''. We can therefore consider ZX-diagrams as undirected graphs.} %
$\zx{
  \zxN{} \rar &[\zxwCol] \zxZ{} \rar \dar &[\zxwCol] \zxN{}\\
  \zxN{} \rar &[\zxwCol] \zxX{} \rar &[\zxwCol] \zxN{}
} = \gateCNOT$ and %
$\zx{
  \zxN{} \rar &[\zxwCol] \zxZ{} \rar \dar[H] &[\zxwCol] \zxN{}\\[\zxHRowFlat]
  \zxN{} \rar &[\zxwCol] \zxZ{} \rar &[\zxwCol] \zxN{}
} = \gateCZ$. Moreover, we represent a measurement in the computational basis whose outcome is $a$ using $\zx{ \rar &[\zxwCol] \zxX*{a\pi}} = \bra{a}$ and a measurement in the Hadamard basis, i.e. the projection on $\ket{0} + (-1)^a\ket{1}$, whose outcome is $a$ $\zx{ \rar &[\zxwCol] \zxZ*{a\pi}} = \bra{a}H$ . Finally, $
\begin{ZX}
   \rar & [\zxwCol] \zxZ*{a\pi} \dar[C-]\\
   \rar & [\zxwCol] \zxX*{b\pi}\\
\end{ZX}
$ represents a Bell measurement (projection on $\ket{0b}+(-1)^a\ket{1\bar{b}}$) with outcomes $(a,b)$.

\paragraph{Scalable ZX.}
The scalable ZX-calculus~\cite{CHP19_SZXcalculusScalableGraphical} generalises these generators: wires can be grouped (or ungrouped) together. We represent these grouped wires as bold wires, where we specify  above the number of wires in the group when there can be a confusion. The grouping/ungrouping of wires is done using so-called gatherers and dividers, shown in the first two diagrams of \cref{eq:ScalableZxDef} below. In scalable ZX, spiders can contain lists of real numbers, corresponding to stacked spiders, as we show for example in the two last diagram of \cref{eq:ScalableZxDef} with $\alpha \in \R$ and $\beta \in \R^{n-1}$:
\begin{align}
\label{eq:ScalableZxDef}
  \begin{aligned}
    \begin{ZX}
      \rar[Bn'Args={n}{pos=.9}] & [\zxWCol] \zxN{} \ar[dr, '>, B] & [\zxWCol]                        & [\zxWCol] \\[\zxZeroRow+.5mm]
      &                                 & \zxDivider-{} \rar[wc,Bn'={n+m}] &           \\[\zxZeroRow+.5mm]
      \rar[Bn.Args={m}{pos=.9}] & \zxN{} \ar[ur, .>, B]
    \end{ZX}\qquad
    \begin{ZX}
      & [\zxWCol]                                  & [\zxwCol] \zxN{} \rar[Bn'Args={n}{pos=.1}] & [\zxWCol] \\[\zxZeroRow+.5mm]
      \rar[Bn'={n+m},wce] & \zxDivider{} \ar[B, dr, <.] \ar[ur, B, <'] &                                                        \\[\zxZeroRow+.5mm]
      &                                            & \zxN{} \rar[Bn.Args={m}{pos=.1}]           & 
    \end{ZX}&\qquad\qquad
    \begin{ZX}
      \zxN{} \rar[Bn'Args={n}{pos=.3},dr, '>, B] \ar[dd,3 vdots={$k$}] & [\zxWCol]                                                                                  & [\zxWCol] \zxN{} \ar[dd,3 vdotsr={\footnotesize$l$}] \\[\zxZeroRow+.5mm]
      & \zxZ[B]{\alpha{::}\beta} \ar[ru, <', Bn'Args={n}{pos=.7}] \ar[rd, <', Bn.Args={n}{pos=.7}] &                                                      \\[\zxZeroRow+.5mm]
      \zxN{} \rar[Bn.Args={n}{pos=.3},ru, .>, B]                       &                                                                                            & \zxN{}
    \end{ZX}
    \ \hypertarget{eq:zxRuleZ}{\eqAbove[\eqdef]{\myZXRule{Z}}}\ %
    \begin{ZX}
      \zxN{} \rar[B,wc,Bn'Args={n}{pos=.4}] \ar[d,3 vdots={$k$}] & [\zxwCol] \zxDivider{} \rar[] \ar[dr,B] & [\zxwCol] \zxZ{\alpha} \rar \ar[rd]  & [\zxwCol] \zxDivider-{} \rar[wc, Bn'Args={n}{pos=.6}] & [\zxwCol] \zxN{} \ar[d,3 vdotsr={$l$}] \\
      \zxN{} \rar[B,wc,Bn.Args={n}{pos=.4}]                      & \zxDivider{} \rar[B] \ar[ur]            & \zxZ[B]{\beta} \ar[ru, B, wc] \ar[r, B] & \zxDivider-{} \rar[wc, Bn.Args={n}{pos=.6}]           & \zxN{}
    \end{ZX}&\qquad\qquad
    \begin{ZX}[mbr]
      \zxN{} \rar[Bn'Args={n+m}{}] &[\zxWCol+2mm] \zxH[B]{} \rar[Bn.Args={n+m}{}] &[\zxWCol+2mm] \zxN{}
    \end{ZX}
    \hypertarget{eq:zxRuleW}{\eqAbove[\eqdef]{\myZXRule{W}}}
    \begin{ZX}[mbr=1]
      \rar[B] & \zxDivider{} \ar[r,o',H={B},Bn'Args={n}{pos=.25}] \ar[r,o.,H={B},Bn.Args={m}{pos=.25}] &[\zxHCol+5mm] \zxDivider-{} \rar[B] &           
    \end{ZX}
  \end{aligned}
\end{align}
The interest of the ZX-calculus is that we can rewrite a ZX diagram using some rewriting rules while preserving the interpretation of the corresponding matrix. The first rule is that \emph{only connectivity matters}, i.e.\ we can bend wires arbitrarily as soon as the corresponding undirected graph is left unchanged. Other rules and theorems are described in \cref{fig:rulesSZX}. Note that, for simplicity, we will remove all scalars, i.e.\ sub-graphs having no input nor output, as they correspond to global phases (not observable physically) and/or a re-normalisation of the state.

\begin{lemma}[{Rewiring rule \protect\myZXRule{R}~\cite[Thm.~3.2]{CHP19_SZXcalculusScalableGraphical}}]\label{eq:zxRuleR}
  Two diagrams composed only of identity, gatherers and dividers are equal iff their respective number of inputs and outputs are equal.
\end{lemma}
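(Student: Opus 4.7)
My plan is to first dispatch the ``only if'' direction as automatic: if two diagrams denote the same linear map then their domains and codomains must agree, which in the scalable ZX-calculus is precisely the statement that their total numbers of underlying wires on each boundary coincide. The real content lies in the ``if'' direction.

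For the ``if'' direction, my strategy is to show that every diagram built solely from identities, gatherers and dividers denotes the identity map on the tensor product of its boundary Hilbert spaces. Inspecting the semantics of the generators given in \cref{eq:ScalableZxDef}, both the gatherer and the divider act as the identity on the underlying $(n+m)$-qubit register: they are purely syntactic bookkeeping that changes how individual wires are grouped into bold wires, without introducing any non-trivial linear map. A straightforward structural induction on the diagram then shows that any sequential or parallel composition of such generators still denotes the identity, since the identity map is closed under both composition operations. Once this is established, any two diagrams in the class with matching numbers of input and output wires both denote the identity on the same Hilbert space, so they coincide.

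The main subtlety that I would handle carefully is the inductive step when two sub-diagrams meet at an internal interface whose bold-wire bundling differs on the two sides. This is not really an obstacle, since by the adopted semantics any re-bundling of the same underlying collection of wires is itself realizable by a composition of gatherers and dividers, all of which denote the identity. To make this fully rigorous, I would fix a canonical form in which every bold wire on every internal boundary has been fully split into width-$1$ wires, reducing the verification to the trivial observation that the identity on $n$ individual wires equals itself, irrespective of how one later chooses to re-bundle them at the boundary. With this normalization in hand the induction is essentially mechanical and the lemma follows.
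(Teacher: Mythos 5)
Your argument is sound, but note that the paper does not actually prove this lemma: it is imported verbatim as Theorem~3.2 of \cite{CHP19_SZXcalculusScalableGraphical}, so there is no in-paper proof to match against. Your semantic route is essentially the standard justification and it works: under the interpretation of \cref{eq:ScalableZxDef} the gatherer and divider are both the identity on the underlying $(n+m)$-qubit space once one fixes the canonical identification $(\C^2)^{\otimes n}\otimes(\C^2)^{\otimes m}\cong(\C^2)^{\otimes(n+m)}$, each generator preserves the total underlying wire count (so every diagram in the class has as many underlying inputs as outputs and ``the identity on the boundary space'' is well-typed), and sequential and parallel composition both preserve being the identity; two diagrams of matching arity therefore both denote $I_{2^n}$, while the converse direction is just matching matrix shapes. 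Since the paper's notion of diagram equality is equality of interpretations, this suffices. One point worth making explicit, which your structural induction quietly handles but which is essential to the truth of the statement: ``composed only of identity, gatherers and dividers'' must be read as ``generated under sequential and parallel composition of those generators,'' thereby excluding swaps, cups and caps even though those are built from plain wires under the ``only topology matters'' convention. Otherwise the lemma is false --- two crossed identity wires versus two parallel ones, or two cups pairing four outputs in different ways, give diagrams with identical arities and different interpretations. Your canonical form with all internal bold wires split to width~$1$ is a clean way to make the re-bundling step at internal interfaces rigorous, and is consistent with how the rule is actually applied in the paper (always to order-preserving rewirings).
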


\begin{figure}[!ht]
  \centering
  \begin{alignat*}{6}
    \begin{ZX}
      \zxN[a=a]{}  &                       & \zxN[a=c]{}  \\[\zxZeroRow]
                   & \zxZ[a=alpha,B]{\alpha} &              \\[\zxZeroRow]
      \zxN[a=b]{}  &                       & \zxN[a=myd]{}  \\[\zxwRow]
      \zxN[a=aa]{} &                       & \zxN[a=cc]{} \\[\zxZeroRow]
                   & \zxZ[a=beta,B]{\beta}   &              \\[\zxZeroRow]
      \zxN[a=bb]{} &                       & \zxN[a=mydd]{}
      \ar[from=a, to=alpha,'>,B]       
      \ar[from=b, to=alpha,.>,B]       
      \ar[to=c, from=alpha,<',B]       
      \ar[from=alpha,to=myd,<.,B]       
      \ar[from=a,to=b,3 vdots]       
      \ar[from=c,to=myd,3 vdots]       
      \ar[from=aa, to=beta,'>,B]       
      \ar[from=bb, to=beta,.>,B]       
      \ar[to=cc, from=beta,<',B]       
      \ar[to=mydd, from=beta,<.,B]
      \ar[from=aa,to=bb,3 vdots]       
      \ar[from=cc,to=mydd,3 vdots]       
      \ar[from=alpha, to=beta, o-={-=.2,L=.3},B]
      \ar[from=alpha, to=beta, -o={-=.2,L=.3},B]
      \ar[from=alpha, to=beta, 3 dots]
    \end{ZX}
    & \hypertarget{eq:zxRuleS}{\eqZX{S}}
    \begin{ZX}
      \zxN[a=a]{} &                                 & \zxN[a=c]{} \\[\zxZeroRow]
                  & \zxZ[a=alpha,B]{\alpha + \beta} &             \\[\zxZeroRow]
      \zxN[a=b]{} &                                 & \zxN[a=myd]{} 
      \ar[from=a, to=alpha,'>,B]       
      \ar[from=b, to=alpha,.>,B]       
      \ar[to=c, from=alpha,<',B]       
      \ar[from=alpha,to=myd,<.,B]       
      \ar[from=a,to=b,3 vdots]       
      \ar[from=c,to=myd,3 vdots]       
    \end{ZX} & \qquad
    \begin{ZX}[ampersand replacement=\&]
      \zxN{} \rar \&[\zxwCol] \zxH{} \rar \&[\zxwCol] \zxN{}
    \end{ZX} &\hypertarget{eq:zxRuleEU}{\eqZX{EU}}
    \begin{ZX}
      \zxN[a=myb]{} \rar & \zxFracZ{\pi}{2} \rar      & \zxFracX{\pi}{2} \rar & \zxFracZ{\pi}{2} \rar & \zxN{}
    \end{ZX}&\qquad
    \begin{ZX}
      \zxN{} \ar[rd,B,-N.,H={B,pos=.35}]
      \ar[dd,3 vdots]                    & [\zxWCol] & [\zxWCol] \zxN{} \ar[dd,3 vdots] \\[\zxNRow]
                                         & \zxX[B]{\alpha} \ar[B,ru,N'-,H={B,pos=.65}]
      \ar[rd,N.-,B,H={B,pos=.65}]        &                                              \\[\zxNRow]
      \zxN{} \ar[B,ru,-N',H={B,pos=.35}] & [\zxWCol] & [\zxWCol] \zxN{}
    \end{ZX} & \hypertarget{eq:zxRuleH}{\eqZX{H}}
    \begin{ZX}
      \zxN{} \ar[B,rd,-N.] \ar[dd,3 vdots] & [\zxWCol]                              & [\zxWCol] \zxN{} \ar[dd,3 vdots] \\[\zxNRow]
                                           & \zxZ[B]{\alpha} \ar[B,ru,N'-] \ar[B,rd,N.-] &                                  \\[\zxNRow]
      \zxN{} \ar[B,ru,-N']                 & [\zxWCol]                              & [\zxWCol] \zxN{}
    \end{ZX}&\qquad
    \begin{ZX}
      \zxN{} \rar[B] &[\zxwCol] \zxZ[B]{} \rar[B] &[\zxwCol] \zxN{}
    \end{ZX}
    & \hypertarget{eq:zxRuleI}{\eqZX{I}}
    \begin{ZX}
      \zxN{} \rar[B] &[\zxWCol] \zxN{}
    \end{ZX}\\
    \begin{ZX}
                             &                         & [\zxwCol] \zxN[a=a]{} \ar[dd,3 vdots] \\[\zxZeroRow]
      \zxX[B]*{a\pi} \rar[B] & \zxZ[B,a=theta]{\theta} &                                       \\[\zxZeroRow]
                             &                         & \zxN[a=b]{}
      \ar[from=theta, to=a,<',B]
      \ar[from=theta, to=b,<',B]      
    \end{ZX} & \hypertarget{eq:zxRuleC}{\eqZX{C}}
    \begin{ZX}
      \zxX[a=a,B]*{a\pi} \rar[B] & \zxN[]{} \ar[d,3 vdots] \rar[B] & \zxN{} \\
      \zxX[a=b,B]*{a\pi} \rar[B] & \zxN[]{}  \rar[B]               & \zxN{}
    \end{ZX} & \qquad
    \begin{ZX}
                     & [\zxwCol]              &                         & \zxN[a=a]{} \ar[dd,3 vdots] \\ 
      \zxN{} \rar[B] & \zxX[B]*{a\pi} \rar[B] & \zxZ[B,a=theta]{\theta} &                             \\
                     &                        &                         & \zxN[a=b]{}
      \ar[from=theta, to=a,<',B]
      \ar[from=theta, to=b,<',B]
    \end{ZX} & \hypertarget{eq:zxRulePi}{\eqZX{Pi}}
    \begin{ZX}
                     & [\zxwCol]                       & \zxX[a=a,B]*{a\pi} \rar[B] &[\zxwCol] \zxN[]{} \ar[dd,3 vdots] \\[\zxZeroRow]
      \zxN{} \rar[B] & \zxZ[B,a=theta]{(-1)^{a}\theta} &                                                       \\[\zxZeroRow]
                     &                                 & \zxX[a=b,B]*{a\pi} \rar[B] & \zxN[]{}
      \ar[from=theta, to=a,<',B]
      \ar[from=theta, to=b,<',B]
    \end{ZX}&\qquad
    \begin{ZX}
      \zxN{} \rar & [\zxwCol] \zxZ{} \rar[o'] \ar[rd,] & \zxX{} \rar & [\zxwCol] \zxN{} \\
      \zxN{} \rar & \zxZ[a=myBase]{} \rar[o.] \ar[ru,] & \zxX{} \rar & \zxN{}
    \end{ZX} & \hypertarget{eq:zxRuleB}{\eqZX{B}}
    \begin{ZX}
      \zxN{} \ar[dr,-N.] & [\zxwCol]   &                                & [\zxwCol] \zxN{} \\
                         & \zxX{} \rar & \zxZ{} \ar[ru,N'-] \ar[rd,N.-] &                  \\
      \zxN{} \ar[ru,-N'] &             &                                & \zxN{}
    \end{ZX}             & \qquad
    \begin{ZX}
      \zxN{} \rar & \zxZ{} \ar[r,o'] \ar[r,o.] & \zxX{} \rar & \zxN{} \\
    \end{ZX} & \hypertarget{eq:zxRuleR}{\eqZX{Ho}}
    \begin{ZX}
      \zxN{} \rar &[\zxwCol] \zxZ{} & \zxX{} \rar &[\zxwCol] \zxN{}
    \end{ZX}    
  \end{alignat*}
  \caption{Rules and theorems of the SZX calculus (the rules also hold when read from right to left and after exchanging colors, note that we omit the more generic Euler rule as it is not needed for the Clifford fragment that we use here). $a \in \{0,1\}^n$ and $b \in \{0,1\}^n$ are arbitrary vectors of binary variables, $\theta \in \R^n$ is an arbitrary vector, and operations between vectors are component-by-component. In the S rule, at least one connection must exist between both spiders, and bold wires indicate wires of size $\geq 1$.}
  \label{fig:rulesSZX}
\end{figure}

The scalable ZX-calculus also provides a way to use arbitrary binary matrices to specify the connectivity between nodes: for any matrix $A \in \Z_2^{m \times n}$, we have:
\begin{align}
  \left\llbracket\, \begin{ZX}
    \zxN{} \rar[B] & [\zxwCol] \zxMatrix{A} \rar[B] &[\zxwCol] \zxN{}
  \end{ZX} \,\right\rrbracket \eqSubeq[\coloneqq]{eq:defMatrix}{a} (\ket{x} \mapsto \ket{Ax})\qquad
  \begin{ZX}[mbr]
    \zxN{} \rar[B] & [\zxwCol] \zxMatrix{A} \rar[B] &[\zxwCol] \zxN{}
  \end{ZX} \eqSubeq{eq:defMatrix}{b}
  \begin{ZX}[mbr=2]
                      & [\zxwCol]                          & \zxZ{} \ar[dr] \ar[dd,3 vdots] &                                      & \zxX{} \ar[rd,'>] \ar[dd,3 vdots] &                          & [\zxwCol] \\
    \zxN{} \rar[B,wc] & \zxDivider{} \ar[ru,<'] \ar[rd,<'] &                                & |[inner sep=.7mm]| A \ar[ru] \ar[rd] &                                   & \zxDivider-{} \rar[B,wc] & \zxN{}    \\
                      &                                    & [\zxwCol] \zxZ{} \ar[ru]       &                                      & \zxX{} \ar[ru,.>]                 &                                      \\
  \end{ZX}\qquad
  \begin{ZX}[mbr]
    \zxN{} \rar[B] & \zxMatrix*{A} \rar[B] & \zxN{}
  \end{ZX} \eqSubeq[\coloneqq]{eq:defMatrix}{c}
  \begin{ZX}[mbr=2]
                & [\zxWCol] \zxN{} \ar[B,dr,s]                 & [\zxWCol] \zxN{} \\[\zxZeroCol+.3mm]
    \ar[B,dr,s] & \zxMatrix{A} \ar[B,u,C] \ar[B,d,C-] &                  \\
                &                                              & \zxN{}           \\
  \end{ZX}\label{eq:defMatrix}
\end{align}
Where, in the second equation, $A$ represents the biadjency matrix of the bipartite green/red graph.

For any binary vector $u$ and binary matrix $A$ and $B$ of appropriate size, we also have the following rewriting rules, that informally come from the fact that a green spider copies states in the computational basis while a red spider performs the modulo sum of its inputs:
\begin{alignat}{6}
  \begin{ZX}[mbr=3]
                   & [\zxwCol]                     &                                        & [\zxWCol] \\[5mm] 
                   &                               &                                        & \zxN{}    \\
    \zxN{} \rar[B] & \zxMatrix/_{A\\B}{} \rar[B,wc] & \zxDivider{} \ar[ru,<',B] \ar[rd,<.,B] & \zxN{}\\
                   &                               &                                        & \zxN{} 
  \end{ZX} &\eqSubeq{eq:matrixManipulationA}{a}
  \begin{ZX}[mbr=2]
                   & [\zxWCol]                           & \zxMatrix{A} \rar[B]  & [\zxwCol] \zxN{} \\[\zxZeroRow]
    \zxN{} \rar[B] & \zxZ[B]{} \ar[ru,<',B] \ar[rd,<.,B] & \zxN{}                                   \\[\zxZeroRow]
                   &                                     & \zxMatrix-{B} \rar[B] & \zxN{} 
  \end{ZX}&\quad
  \begin{ZX}[mbr=1]
    \zxN{} \rar[B] & \zxMatrix_{A & B}{} \rar[B] & \zxN{}
  \end{ZX} &\eqSubeq{eq:matrixManipulationA}{b}
  \begin{ZX}[mbr=2]
                      & [\zxwCol]                              & \zxMatrix{A} \ar[rd,B,'>]   &                &[\zxwCol] \\[\zxZeroRow]
    \zxN{} \rar[B,wc] & \zxDivider{} \ar[ru,<',B] \ar[rd,<.,B] & \zxN{}                      & \zxX[B]{} \rar[B] & \zxN{} \\[\zxZeroRow]
                      &                                        & \zxMatrix-{B} \rar[ru,B,.>] & 
  \end{ZX}&\quad
  \begin{ZX}[mbr=1]
    \zxN{} \rar[B] & \zxMatrix{A + B} \rar[B] & \zxN{}
  \end{ZX} &\eqSubeq{eq:matrixManipulationA}{c}
  \begin{ZX}[mbr=2]
                      & [\zxwCol]                           & \zxMatrix{A} \ar[rd,B,'>]   &                   & [\zxwCol] \\[\zxZeroRow]
    \zxN{} \rar[B,wc] & \zxZ[B]{} \ar[ru,<',B] \ar[rd,<.,B] & \zxN{}                      & \zxX[B]{} \rar[B] & \zxN{}    \\[\zxZeroRow]
                      &                                     & \zxMatrix-{B} \rar[ru,B,.>] & 
  \end{ZX}\label{eq:matrixManipulationA}\\
  \begin{ZX}[mbr=1]
    \zxN{} \rar[B] & \zxMatrix{BA} \rar[B] & \zxN{}
  \end{ZX} &\eqSubeq{eq:matrixManipulationB}{a}
  \begin{ZX}[mbr=1]
    \zxN{} \rar[B] & [\zxwCol] \zxMatrix{A} \rar[B] & \zxMatrix{B} \rar[B] & [\zxwCol] \zxN{}
  \end{ZX}            & \quad
  \begin{ZX}[mbr=1]
    \zxN{} \rar[B,wc] &[\zxwCol] \zxMatrix{A} \rar[B] & \zxZ[B]*{u\pi} \rar[B] &[\zxwCol] \zxN{}
  \end{ZX} &\eqSubeq{eq:matrixManipulationB}{b}
  \begin{ZX}[mbr=1]
    \zxN{} \rar[B,wc] &[\zxwCol] \zxZ[B]{A^Tu\pi} \rar[B] & \zxMatrix{A} \rar[B] &[\zxwCol]  \zxN{}
  \end{ZX}&\quad
  \begin{ZX}[mbr=1]
    \zxN{} \rar[B,wc] &[\zxwCol] \zxMatrix{A} \rar[B] & \zxZ[BBw]*{u\pi}
  \end{ZX} &\eqSubeq{eq:matrixManipulationB}{c} \begin{ZX}[mbr=1]
    \zxN[B]{} \rar[B,wc] &[\zxwCol] \zxZ[BBw]{A^Tu\pi}
  \end{ZX}\label{eq:matrixManipulationB}\\
    \begin{ZX}[mbr=1]
    \zxN{} \rar[B,wc] &[\zxwCol] \zxX[B]*{u\pi} \rar[B] & \zxMatrix{A} \rar[B] &[\zxwCol]  \zxN{}
  \end{ZX} &\eqSubeq{eq:matrixManipulationC}{a}
  \begin{ZX}[mbr=1]
    \zxN{} \rar[B,wc] &[\zxwCol] \zxMatrix{A} \rar[B] & \zxX[B]{Au\pi} \rar[B] &[\zxwCol] \zxN{}
  \end{ZX} &
  \begin{ZX}[mbr=1]
    \zxX[BBw]*{u\pi} \rar[B,wc] & \zxMatrix{A} \rar[B] &[\zxwCol] \zxN{}
  \end{ZX} &\eqSubeq{eq:matrixManipulationC}{b}
  \begin{ZX}[mbr=1]
    \zxX[BBw]{Au\pi} \rar[wc] &[\zxwCol] \zxN{}
  \end{ZX}\label{eq:matrixManipulationC}
\end{alignat}

\begin{lemma}[{\cite[Thm.~4.8,4.9]{CHP19_SZXcalculusScalableGraphical}}]\label{lem:injectiveSurjective}
  If $A$ is injective, we have
  $\begin{ZX}[mbr=1,arrows={B}]
      \zxN{} \rar[B] & \zxMatrix{A} \rar & \zxMatrix*{A} \rar[B] & \zxN{}
    \end{ZX} = 
    \begin{ZX}[mbr=1,arrows={B}]
      \zxNL \zxNR
    \end{ZX}$
  and similarly, if $A$ is surjective, we have
  $\begin{ZX}[mbr=1,arrows={B}]
      \zxN{} \rar[B] & \zxMatrix*{A} \rar & \zxMatrix{A} \rar[B] & \zxN{}
    \end{ZX} =
    \begin{ZX}[mbr=1,arrows={B}]
      \zxNL \zxNR
    \end{ZX}$. In particular, if $A$ is bijective, we have:
    \begin{align}
      \begin{ZX}[mbr=1,myBold]
        \zxNL \zxMatrix{A} \zxNR
      \end{ZX} \eqeqLem{lem:injectiveSurjective} \begin{ZX}[mbr=1,myBold]
        \zxNL \zxMatrix{A} \rar & \zxMatrix{A^{-1}} \rar & \zxMatrix*{A^{-1}} \zxNR
      \end{ZX} \eqSubeq{eq:matrixManipulationB}{a} \begin{ZX}[mbr=1,myBold]
        \zxNL \zxMatrix{A^{-1}A} \rar & \zxMatrix*{A^{-1}} \zxNR
      \end{ZX} = \begin{ZX}[mbr=1,myBold]
        \zxNL \zxMatrix{I} \rar & \zxMatrix*{A^{-1}} \zxNR
      \end{ZX} \eqSubeq{eq:defMatrix}{b} \begin{ZX}[mbr=1,myBold]
        \zxNL \zxMatrix*{A^{-1}} \zxNR
      \end{ZX}\label{eq:bijectionA}
    \end{align}
\end{lemma}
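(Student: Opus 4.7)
The first two assertions of the lemma are restatements of Theorems~4.8 and~4.9 of~\cite{CHP19_SZXcalculusScalableGraphical}, so my plan is to invoke them directly rather than reproduce their proofs. For orientation, the strategy in those theorems is to expand both the ``forward'' matrix gadget for $A$ and its bent-wire adjoint into their bipartite spider forms via (\ref{eq:defMatrix}.b) and (\ref{eq:defMatrix}.c), fuse the green (respectively red) spiders sitting on either side of the boundary using the \myZXRule{S} rule, and then repeatedly apply the Hopf rule~(\myZXRule{Ho}) and bialgebra rule~(\myZXRule{B}) to collapse the resulting CNOT-like structure. Injectivity (respectively surjectivity) of $A$ is exactly what ensures that the remaining bipartite multi-graph has trivial cycle structure, so that every output wire on one side has a unique partner on the other and the whole block reduces to parallel identity wires. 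This is the only delicate piece, and the step I would expect to be the main obstacle if one insisted on reproving it from scratch using only the rules recalled in \cref{appendix:introZX}; since the statement already exists in the literature, I simply cite it.

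For the bijective case, the chain of equalities displayed inline in the lemma statement is itself the complete argument, and my plan is to spell it out. Since $A$ is bijective, $A^{-1}$ exists and is in particular injective, so applying the first assertion of the lemma to $A^{-1}$ lets me insert the identity formed by the forward gadget for $A^{-1}$ followed by its bent adjoint to the right of the forward gadget for $A$ at no cost. Then (\ref{eq:matrixManipulationB}.a) merges the two forward gadgets into the one for $A^{-1}A = I$, which is the bare identity wire by (\ref{eq:defMatrix}.b). What remains is precisely the bent adjoint of $A^{-1}$, which is the desired equality. No further calculation or case split is required.
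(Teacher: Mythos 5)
Your proposal is correct and follows essentially the same route as the paper: the injective and surjective cases are handled by citation to \cite{CHP19_SZXcalculusScalableGraphical}, and the bijective case is exactly the displayed chain in \cref{eq:bijectionA} (insert the identity given by the injectivity statement applied to $A^{-1}$, merge via (\ref{eq:matrixManipulationB}.a), and remove $\zx[mbr=1,myBold]{\zxNL \zxMatrix{I} \zxNR}$ via (\ref{eq:defMatrix}.b)). No divergence from the paper's argument.
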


\subsection{Graph states} We list below some basic properties to graphically represent and manipulate graph states. Most of these facts and proofs can be found in \cite{Car20_NoteDiagonalGates,CHP19_SZXcalculusScalableGraphical}.

\begin{lemma}\label{lem:shapeCZ}
  Let $\Gamma \in \Z_2^{m \times n}$, then the following ZX diagram applies a $\gateCZ$ between the $i$-th qubit of the first group of qubits and the $j$-th qubit of the second group of qubits iff $\Gamma_{j,i} = 1$, or, more formally:
  \begin{align}
    \begin{ZX}
      \zxN{} \rar[Bn']   & [\zxWCol] \zxZ[B]{} \rar[Bn'] \dar[B] & [\zxWCol] \zxN{} \\
                         & \zxMatrix./{\Gamma} \dar[B,H={B}]     &                  \\[\zxHRow+1mm]
      \zxN{} \rar[Bn'=m] & \zxZ[B]{} \rar[Bn'=m]                 & \zxN{}                    
    \end{ZX}
    = \prod_{(i,j), \Gamma_{j,i} = 1} \gateCZ_{i, m+j}
  \end{align}
  where $\gateCZ_{i, j}$ applies a $\gateCZ$ between the $i$-th and $j$-th qubit, assuming $i$, $j$ and the qubits are all indexed starting from $0$.
\end{lemma}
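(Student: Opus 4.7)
The plan is to unfold the scalable ZX diagram into its ordinary ZX form and recognize the result as the claimed product of $\gateCZ$ gates. The key observation is that in ordinary ZX-calculus, a pair of green spiders connected by a Hadamard wire applied to two qubit lines is exactly a $\gateCZ$ gate between those qubits, which follows from the standard circuit-to-ZX translation $\gateCZ = \zx{\zxN{} \rar & \zxZ{} \rar \dar[H] & \zxN{}\\ \zxN{} \rar & \zxZ{} \rar & \zxN{}}$ recalled in the circuit-to-ZX paragraph.

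First, I would apply the spider fusion rule \myZXRule{S} (or equivalently, unfold the bold wire using \cref{eq:ScalableZxDef}) to split the top green spider into $n$ individual green spiders, one per input/output wire pair, each with an additional leg going down into the matrix $\Gamma$. Symmetrically, the bottom green spider splits into $m$ green spiders with legs going up through Hadamards to $\Gamma$. At this point the diagram consists of $n$ green spiders on the top row, each acting as a ``copy'' of its input qubit, wired to $m$ green spiders on the bottom row through the block $\Gamma$ with Hadamards in between.

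Next, I would expand the matrix block using \cref{eq:defMatrix}(b), which represents $\Gamma$ as a bipartite green/red graph whose biadjacency matrix is $\Gamma$ itself: the $i$-th top input of $\Gamma$ is copied by a green spider and connected by red spiders to exactly those $j$-th outputs with $\Gamma_{j,i} = 1$. Combined with the already-present top/bottom green copy spiders, the green spiders can be merged by \myZXRule{S}, and the red spiders disappear using \myZXRule{I} and the fact that connecting a red spider to green spiders via Hadamards converts them appropriately (using \myZXRule{H} to push Hadamards through). Concretely, for each pair $(i,j)$ with $\Gamma_{j,i} = 1$, what remains is a wire from the $i$-th top qubit line (through a green spider) to the $j$-th bottom qubit line (through a green spider) with a single Hadamard in between, and for every pair with $\Gamma_{j,i} = 0$ the spiders are disconnected.

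Finally, I would identify each remaining green-$H$-green connection as a $\gateCZ$ gate between the $i$-th wire of the top group and the $(m+j)$-th wire overall (which is the $j$-th wire of the bottom group), and use the fact that $\gateCZ$ gates on disjoint qubit pairs commute, so the whole diagram equals the product $\prod_{(i,j),\, \Gamma_{j,i}=1} \gateCZ_{i,\,m+j}$. The main subtlety, and the step I expect to require most care, is bookkeeping the Hadamard and the conversion between $X$ and $Z$ spiders coming from \cref{eq:defMatrix}(b): the single Hadamard placed between $\Gamma$ and the bottom spider (as indicated by \texttt{H=\{B\}} in the diagram) is what ensures that the red spiders of the bipartite decomposition fuse away cleanly and leave exactly one Hadamard per surviving edge, yielding genuine $\gateCZ$ gates rather than stray $\gateCNOT$s or extra Hadamards.
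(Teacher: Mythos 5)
The paper never actually proves this lemma: it appears in the appendix among the ``basic properties'' whose proofs are deferred to \cite{Car20_NoteDiagonalGates,CHP19_SZXcalculusScalableGraphical}, so there is no in-paper argument to compare yours against. Taken on its own, your diagrammatic proof is correct in structure and reaches the right conclusion: unfolding the bold spiders, expanding $\Gamma$ via \cref{eq:defMatrix}(b), and fusing the resulting copy spiders into the top and bottom rows does leave, for each pair $(i,j)$ with $\Gamma_{j,i}=1$, a single Hadamard edge between the green spider on the $i$-th top wire and the green spider on the $j$-th bottom wire, which is exactly a $\gateCZ$; and $\gateCZ$ gates on a fixed register all commute, so the order of the product is immaterial.

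The one step you state imprecisely is how the red spiders vanish. Rule \myZXRule{I} removes only a phaseless spider of degree two, so it handles the $j$-th red output spider only when exactly one $i$ satisfies $\Gamma_{j,i}=1$. In general that spider has $d_j = |\{i : \Gamma_{j,i}=1\}|$ plain legs toward the top and one Hadamard leg toward the bottom; the correct move is the colour-change rule \myZXRule{H}, which rewrites it as a green spider while placing a Hadamard on each of the $d_j$ upward legs and cancelling the existing Hadamard on the downward leg, after which it fuses into the $j$-th bottom green spider by \myZXRule{S}. That is precisely the bookkeeping you flagged as delicate, and it is what guarantees exactly one Hadamard per surviving edge and no stray gates. (The degenerate case $d_j=0$ leaves a dangling $\ket{0}$ state that becomes $\ket{+}$ after the Hadamard and is absorbed into the bottom spider up to the global scalars the paper discards.) Two small remarks: the index $m+j$ in the statement should presumably be $n+j$ if the top group carries $n$ wires, a slip you inherit from the paper rather than introduce; and the most economical proof is simply to compute the interpretation directly, since the diagram sends $\ket{x}\otimes\ket{y}$ to $(-1)^{y^{T}\Gamma x}\,\ket{x}\otimes\ket{y}$ and $y^{T}\Gamma x = \sum_{(i,j)\,:\,\Gamma_{j,i}=1} x_i y_j$ is exactly the phase applied by the claimed product of $\gateCZ$ gates. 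Either route is sound.
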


\begin{definition}\label{def:notationGraphs}
  Let $G = (V, E)$ be an undirected graph on $n$ ordered vertices in $V$. By slightly abusing notations, we will also denote its representation in term of an adjacency matrix as $G \in \Z_2^{n \times n}$, the rows and columns of $G$ being indexed by elements in $V$, where $G_{i,j} = 1$ iff there exists an edge between $i$ and $j$ and $G_{i,j} = 0$ otherwise. We also denote $G^{\matrixUr}$ and $G^\matrixLl$ as, respectively, the upper and lower triangular matrix of $G$ (note that $G$ has zeros on its diagonal). In particular, $G = G^\matrixUr + G^\matrixLl$, and since $G^T = G$, we also have $(G^\matrixUr)^T = G^\matrixLl$. For any partition $(A,B)$ of $V$ of $G$ (for simplicity, we assume that elements in $V$ are ordered so that elements in $A$ appear before elements in $B$), we define $G_A \in \Z_2^{|A| \times |A|}$, $G_B \in \Z_2^{|B| \times |B|}$, and $\Gamma_{A \rightarrow  B} \in \Z_2^{|B| \times |A|}$ (or simply $\Gamma$) such that:
  \begin{align}
    G =
    \begin{bmatrix}
      G_A & \Gamma^T\\
      \Gamma & G_B
    \end{bmatrix}\label{eq:decompositionG}
  \end{align}
  In particular, $G_A$ and $G_B$ are the subgraphs of $G$ restricted to vertices in $A$ and $B$ respectively, and $\Gamma$ is the biadjency matrix between elements in $G_A$ and elements in $G_B$.
\end{definition}

\begin{definition}\label{def:graphDiag}
  For any undirected graph $G = (V,E)$ on $n$ ordered vertices, we define $\zx[mbr]{\zxN{} \rar[Bn'] &[\zxWCol] \zxBox{G} \rar[Bn'] &[\zxWCol] \zxN{}}$ as the operation that applies a $\gateCZ$ gate between all input qubits connected in $G$ (the wires being ordered following the order on $V$). We can represent this operation diagrammatically (see \cite{Car20_NoteDiagonalGates}) as:
  \begin{align}
    \begin{aligned}
      \begin{ZX}[myBold]
        \zxN{} \rar[B] &[\zxwCol] \zxBox{G} \rar[B] &[\zxwCol]
      \end{ZX} \eqdef
      \begin{ZX}[myBold]
                       &                                       &                                                   &                                  & \\
                       & \zxH{} \ar[to=Gll,C'] \ar[to=root,<.] &                                                   & \zxMatrix[a=Gll]'/*{G^\matrixLl} & \\
        \zxN{} \ar[rr] &                                       & [\zxwCol] \zxZ[a=root]{} \ar[B,rr] \ar[to=Gll,.>] &                                  & 
      \end{ZX}\label{eq:defGraphicalG}
    \end{aligned}
  \end{align}
  Moreover, we define the graph states $G$ as $\zx{\zxBox{G} \rar[B] &[\zxwCol] \zxN{}} \eqdef \zx{\zxZ[B]{} \rar[B] &[\zxwCol] \zxBox{G} \rar[B] &[\zxwCol] \zxN{}}$.
\end{definition}

\begin{lemma}
  For any graph $G$, $\left\llbracket\,\zx{\zxBox{G} \rar[B] &[\zxwCol]} \,\right\rrbracket = \ket{G}$.
\end{lemma}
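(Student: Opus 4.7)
The plan is to prove the identity by unfolding the diagrammatic definition of $\zxBox{G}$ and showing that it implements exactly the product of $\gateCZ$ gates along the edges of $G$. Starting from the definition just above the statement, we have
\begin{align*}
  \zx[myBold]{\zxBox{G} \rar[B] &[\zxwCol] \zxN{}} \;\eqdef\; \zx[myBold]{\zxZ[B]{} \rar[B] &[\zxwCol] \zxBox{G} \rar[B] &[\zxwCol] \zxN{}},
\end{align*}
and since the leftmost green spider with no input is (up to normalisation) $\ket{+}^{\otimes n}$, our task reduces to showing that the operator $\zxBox{G}$, defined in \cref{eq:defGraphicalG}, acts as $\prod_{(v_i,v_j)\in E}\gateCZ^{\{v_i,v_j\}}$ on the computational basis. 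Then applying this to $\ket{+}^{\otimes n}$ immediately yields $\ket{G}$ by \cref{eq:graphstatedef1}.

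To identify the operator, I would compare the right-hand side of \cref{eq:defGraphicalG} with the gadget in \cref{lem:shapeCZ}. The gadget of \cref{lem:shapeCZ} consists of two green spiders connected through a $\Gamma$-matrix (with a Hadamard on the wire) and implements $\prod_{(i,j),\,\Gamma_{j,i}=1}\gateCZ_{i,m+j}$ between two separate groups of qubits. The diagram in \cref{eq:defGraphicalG} has exactly the same local shape, except that both ``groups'' of qubits are collapsed onto the same input/output register by a single green spider (which copies each input to both the outgoing wire and the matrix branch), with $\Gamma = G^{\matrixLl}$. Applying \cref{lem:shapeCZ} in this setting, each entry $(G^{\matrixLl})_{j,i}=1$ contributes one $\gateCZ_{i,j}$ acting on the shared register.

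The last step is purely combinatorial: since $G$ is symmetric with zero diagonal, $G = G^{\matrixUr} + G^{\matrixLl}$ and $G^{\matrixUr} = (G^{\matrixLl})^T$, so each edge $\{v_i,v_j\}\in E$ corresponds to exactly one nonzero entry of $G^{\matrixLl}$ (the one with the larger row index). Therefore
\begin{align*}
  \prod_{(i,j),\,(G^{\matrixLl})_{j,i}=1}\gateCZ_{i,j} \;=\; \prod_{(v_i,v_j)\in E}\gateCZ^{\{v_i,v_j\}},
\end{align*}
which, applied to $\ket{+}^{\otimes V}$, gives $\ket{G}$ by \cref{eq:graphstatedef1}.

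The only genuinely delicate point is step two, namely justifying the use of \cref{lem:shapeCZ} when its two input groups are identified via a single green spider rather than kept as independent registers. I would handle this by the copy rule for green spiders in the computational basis: explicitly splitting $\zx[myBold]{\zxZ[B]{}}$ into two green spiders joined by a bold wire, applying \cref{lem:shapeCZ} verbatim, and then re-merging using the spider fusion rule \myZXRule{S}. Beyond that, the rest of the argument is a straightforward bookkeeping exercise on the lower-triangular encoding of a symmetric adjacency matrix.
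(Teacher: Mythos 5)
Your proof is correct, but it takes a genuinely different route from the paper's. You unfold the definition in \cref{eq:defGraphicalG} and argue directly that the box implements $\prod_{(v_i,v_j)\in E}\gateCZ^{\{v_i,v_j\}}$, so that feeding it the input\-/free green spider $\ket{+}^{\otimes V}$ reproduces the circuit definition \cref{eq:graphstatedef1}. The paper instead defers the proof to \cite{Car20_NoteDiagonalGates} and sketches a stabilizer argument: one shows diagrammatically (this is exactly \cref{lem:stabilizerDiag}, proved right after) that the diagram state is fixed by every stabilizer of $\ket{G}$, and concludes via the characterization \cref{eq:graphstatedef2} rather than \cref{eq:graphstatedef1}. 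Your route is more elementary and self\-/contained in one sense (it does not invoke uniqueness of the state stabilized by $\{S_{v_i}\}$), while the paper's buys a reusable diagrammatic identity that is needed elsewhere anyway. The delicate point you flag is real, and your resolution works, but I would phrase it semantically rather than as ``re\-/merging'': after unfusing the spider by rule \myZXRule{S}, the diagram is the gadget of \cref{lem:shapeCZ} with the output register of one group wired into the input register of the other; since the resulting $2n$\-/qubit operator is diagonal, this partial composition restricts it to the diagonal $y=x$, giving the phase $(-1)^{x^{T}G^{\matrixLl}x}$ on $\ket{x}$, which is $\prod_{(i,j)\in E}\gateCZ_{i,j}$ precisely because each edge contributes exactly one nonzero entry of $G^{\matrixLl}$ --- your combinatorial bookkeeping step. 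With that rephrasing the argument is complete.
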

The proof of this statement can be found in \cite{Car20_NoteDiagonalGates}, but the main idea is to show that this state is stabilized by the stabilizers of $\ket{G}$, which is formalized by the following lemma:

\begin{lemma}[{\cite{Car20_NoteDiagonalGates}}]\label{lem:stabilizerDiag}
  Let $G = (V,E)$ be a graph, and $x \in \Z_2^{|V|}$ be a vector (indexed by vertices in $V$). Then, $\zx{\zxNL \zxX[BBw]*{x\pi} \rar & \zxZ[BBw]{Gx\pi} \zxNR}$ is a stabilizer of $G$:
  \begin{align}
    \begin{ZX}
      \zxBox{G} \rar & \zxX[BBw]*{x\pi} \rar & \zxZ[BBw]{Gx\pi} \zxNR
    \end{ZX} =
    \begin{ZX}[arrows={B}]
      \zxBox{G} \zxNR
    \end{ZX}\label{eq:stabilizerDiag}
  \end{align}
\end{lemma}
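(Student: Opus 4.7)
The identity asserts that $X^{x\pi} Z^{Gx\pi}$ is a stabilizer of $\ket{G}$. Mathematically, this is nothing more than the standard fact that any product of the canonical stabilizer generators $S_{v_i} = \sigma_X^{v_i} \prod_{v_j \in N(v_i)} \sigma_Z^{v_j}$ fixes the graph state: one has $X^x Z^{Gx} = \prod_i S_{v_i}^{x_i}$, since the total $Z$ correction on vertex $j$ coming from the product is $\sum_{i : (i,j) \in E} x_i = (Gx)_j$. The plan is to give a diagrammatic proof of this within the scalable ZX-calculus, essentially by commuting the $X^{x\pi}$ spider through the graph operation $\zxBox{G}$ and showing that the resulting $Z$ phases are exactly $(Gx)\pi$, cancelling the $Z^{Gx\pi}$ applied on the output.

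First, I would unfold $\zxBox{G}$ using the definition in \cref{eq:defGraphicalG}: a $Z$-spider copying the incoming wire, one copy going straight to the output and the other going through an $H$ and into the matrix block $G^\matrixLl$, whose output is then tied back into the output via the internal Hadamard/copy structure. Now apply $X^{x\pi}$ on the bold output wire. Using the copy rule \myZXRule{Pi} (or equivalently \myZXRule{C}) for the central $Z$-spider, the $X^{x\pi}$ duplicates onto both legs: one $X^{x\pi}$ stays on the direct output and the other $X^{x\pi}$ travels back up through the $H$ into the $G^\matrixLl$ block.

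Next, I would push each $X^{x\pi}$ through the linear matrix parts using \cref{eq:matrixManipulationC}(a): commuting $X^{x\pi}$ through the matrix $G^\matrixLl$ produces $X^{(G^\matrixLl x)\pi}$ on the other side, which after the Hadamard layer becomes a $Z$-phase contribution of $(G^\matrixLl x)\pi$ on the output. The symmetric contribution coming from the upper-triangular part of the diagram (obtained by unfolding the Hadamarded edge to $G^\matrixLl$ on the ``other side'') produces a $Z$-phase of $((G^\matrixLl)^T x)\pi$. Fusing these with the explicit $Z^{Gx\pi}$ via the spider rule \myZXRule{S}, and using $G = G^\matrixLl + (G^\matrixLl)^T$ (cf.\ \cref{def:notationGraphs}), the three $Z$-phases sum to $(G^\matrixLl x + (G^\matrixLl)^T x + Gx)\pi = (2Gx)\pi = 0$ modulo $2\pi$, so all $Z$-phases on the output wire disappear. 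What remains on the input side is an $X^{x\pi}$ absorbed into the initial $Z$-spider $\zxZ[B]{}$ representing $\ket{+}^{\otimes n}$ in the graph-state definition; by the identity $\zx{\zxZ[B]{} \rar[B] & \zxX[B]*{x\pi} \zxNR} = \zx{\zxZ[B]{} \zxNR}$ (a direct consequence of \myZXRule{Pi} together with the fact that a $Z$-spider with no input is an $X$-eigenstate), this collapses back to the original graph-state diagram.

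The main obstacle is bookkeeping: carefully tracking the two separate $Z$-phase contributions that appear from commuting $X$ through the asymmetric $G^\matrixLl$ representation (the upper and lower triangular parts, respectively, of $G$), and verifying that they really sum to $Gx$ via $G = G^\matrixLl + (G^\matrixLl)^T$. Everything else is a routine application of the matrix-manipulation rules \cref{eq:matrixManipulationB,eq:matrixManipulationC} together with the standard fusion, copy, and $\Pi$ rules already listed in \cref{fig:rulesSZX}, so once the triangular decomposition is handled cleanly the rest of the argument reduces to a few spider fusions.
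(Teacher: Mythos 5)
Your proposal is correct and follows essentially the same route as the paper's proof: unfold the diagrammatic definition of $\ket{G}$, push $X^{x\pi}$ through the central copy spider and the $H$/$G^\matrixLl$ blocks via the $\Pi$ and matrix-commutation rules, and observe that the generated $Z$-phases $G^\matrixLl x$ and $(G^\matrixLl)^T x$ together with the explicit $Gx$ sum to $2Gx = 0$ modulo $2$. The only cosmetic difference is that the paper works directly with the state version of the root spider (so there is no separate ``absorb $X^{x\pi}$ into $\ket{+}^{\otimes n}$'' step), but the cancellation via $G = G^\matrixLl + (G^\matrixLl)^T$ that you identify as the crux is exactly the paper's final spider fusion yielding the phase $(Gx \oplus Gx)\pi = 0$.
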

\begin{proof}
  The proof of this statement can be found in \cite{Car20_NoteDiagonalGates}, but we rewrite it here for completeness:
  \begin{align}
    &\begin{ZX}
      \zxBox{G} \rar & \zxX[BBw]*{x\pi} \rar & \zxZ[BBw]{Gx\pi} \zxNR
    \end{ZX} \eqeq{eq:defGraphicalG}
    \begin{ZX}[myBold]
                                               &                                                   &                                  & \\
         \zxH{} \ar[to=Gll,C'] \ar[to=root,<.] &                                                   & \zxMatrix[a=Gll]'/*{G^\matrixLl} & \\
                                               & [\zxwCol] \zxZ[a=root]{} \ar[B,rr] \ar[to=Gll,.>] &                                  & \zxX[BBw]*{x\pi} \rar & \zxZ[BBw]{Gx\pi} \zxNR
    \end{ZX} \eqZX{Pi}
    \begin{ZX}[myBold]
                                          &                                               &                                       & \\
         \zxH{} \dar \ar[to=Gll,C']       &                                               & \zxMatrix[a=Gll]'/*{G^\matrixLl} \dar & \\
         \zxX[BBw]*{x\pi} \ar[to=root,<.] &                                               & \zxX[BBw]*{x\pi}                      & \\
                                          & [\zxwCol] \zxZ[a=root]{} \ar[B,rr] \ar[ru,.>] &                                       & \zxZ[BBw]{Gx\pi} \zxNR
    \end{ZX} \eqZXandSubeq{H}{eq:matrixManipulationC}{a}
    \begin{ZX}[myBold]
                                              &                                               &                             & \\
         \zxZ[BBw]*{x\pi} \dar \ar[to=Gll,C'] &                                               & \zxX[BBw,a=Gll]*{G^\matrixLl x\pi} \dar & \\
         \zxH{} \ar[to=root,<.]               &                                               & \zxMatrix[]'/*{G^\matrixLl} & \\
                                              & [\zxwCol] \zxZ[a=root]{} \ar[B,rr] \ar[ru,.>] &                             & \zxZ[BBw]{Gx\pi} \zxNR
    \end{ZX} \\&\eqZX{Pi}
    \begin{ZX}[myBold]
                                                          &                                               &                             & \\
         \zxX[BBw]*{G^\matrixLl x\pi} \dar \ar[to=Gll,C'] &                                               & \zxZ[BBw,a=Gll]*{x\pi} \dar & \\
         \zxH{} \ar[to=root,<.]                           &                                               & \zxMatrix[]'/*{G^\matrixLl} & \\
                                                          & [\zxwCol] \zxZ[a=root]{} \ar[B,rr] \ar[ru,.>] &                             & \zxZ[BBw]{Gx\pi} \zxNR
    \end{ZX} \eqZXandSubeq{H}{eq:matrixManipulationB}{b}
    \begin{ZX}[myBold]
                                                           &                                                   &                                       & \\
         \zxH{} \dar \ar[to=Gll,C']                        &                                                   & \zxMatrix[a=Gll]'/*{G^\matrixLl} \dar & \\
         \zxZ[BBw]*{G^\matrixLl x\pi} \ar[wcs, to=root,<.] &                                                   & \zxZ[BBw]*{G^\matrixLl x\pi}          & \\
                                                           & [\zxwCol] \zxZ[a=root]{} \ar[B,rr] \ar[ru,wce,.>] &                                       & \zxZ[BBw]{Gx\pi} \zxNR
    \end{ZX} \eqZX{S}
    \begin{ZX}[myBold]
                                                    &                                                   &                                  & \\
         \zxH{} \ar[to=Gll,C'] \ar[wcs, to=root,<.] &                                                   & \zxMatrix[a=Gll]'/*{G^\matrixLl} & \\
                                                    & [\zxwCol] \zxZ[a=root]{} \ar[B,rr] \ar[ru,.>,wce] &                                  & \zxZ[BBw]{(Gx \xor Gx)\pi} \zxNR
    \end{ZX} \eqZX{I} 
    \begin{ZX}[myBold]
      \zxBox{G} \zxNR
    \end{ZX}
  \end{align}
\end{proof}

\begin{lemma}\label{lem:GDestroys}
  For any graph $G$, $
    \begin{ZX}[mbr=1]
      \rar[B] &[\zxwCol] \zxBox{G} \rar[B] & \zxBox{G} \rar[B] &[\zxwCol] 
    \end{ZX}  =
    \begin{ZX}[mbr=1]
      \rar[B] &[\zxWCol] 
    \end{ZX}
  $.
\end{lemma}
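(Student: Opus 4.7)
Conceptually, the statement is immediate from \cref{lem:shapeCZ}: $\zxBox{G}$ applies the product $\prod_{(i,j) \in E,\, i<j} \gateCZ_{i,j}$ of controlled-$Z$ gates across the edges of $G$, and since each $\gateCZ_{i,j}$ is self-inverse and these gates are mutually commuting (being diagonal in the computational basis), composing two copies applies $\gateCZ_{i,j}^2 = I$ for every edge, which is the identity. So one option is simply to invoke \cref{lem:shapeCZ} and this observation.

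For a fully diagrammatic proof matching the style of the rest of the paper, I would unfold both occurrences of $\zxBox{G}$ using \cref{eq:defGraphicalG}. This exposes on each input wire a central green spider connected via a Hadamard to the lower-triangular adjacency-matrix node $G^\matrixLl$. Composing two copies places two such green spiders back-to-back on each wire, which can be merged using spider fusion \myZXRule{S}. The fused spider now has two parallel legs going into two separate copies of the $G^\matrixLl$ node. After aligning the colors at the endpoints with \myZXRule{H} and using the matrix-sum rule \cref{eq:matrixManipulationA}(c), these parallel copies combine into a single matrix $G^\matrixLl + G^\matrixLl = \mathbf{0}$ over $\Z_2$. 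A zero matrix in this position decouples the adjacency structure from the wires, leaving only a scalar to discard, and the identity rule \myZXRule{I} then collapses each remaining trivial spider to a plain wire.

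The main obstacle is the bookkeeping of Hadamard edges when fusing the two middle spiders and when combining the stacked $G^\matrixLl$ matrices: one must ensure that the endpoints have matching color so that the matrix-sum rule applies cleanly. This may require inserting and then cancelling pairs of Hadamards using the identity $\zx{\rar[B] & \zxH[B]{} \rar[B] & \zxH[B]{} \rar[B] &}=\zx{\rar[B] &}$ already proved in \cref{eq:HH}. Apart from this routine manipulation, nothing conceptually new is required beyond the standard ZX rewrites used throughout the paper, and the result follows.
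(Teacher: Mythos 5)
Your proposal is correct and follows essentially the same route as the paper's proof: unfold both graph-boxes via \cref{eq:defGraphicalG}, fuse the two central green spiders with \myZXRule{S}, align colors with \myZXRule{H}, combine the two stacked copies of $G^\matrixLl$ into $G^\matrixLl + G^\matrixLl = \mathbf{0}$ via \cref{eq:matrixManipulationA}(c), and collapse the remaining trivial spiders with \myZXRule{I} --- and the paper likewise opens by observing that the whole statement is just $\gateCZ\,\gateCZ = \gateI$ applied edgewise. The only nitpick is that the interpretation of the graph-box as a commuting product of $\gateCZ$ gates comes directly from \cref{def:graphDiag} rather than from \cref{lem:shapeCZ}, which concerns the bipartite $\Gamma$ construction.
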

\begin{proof}
  This fact is a direct consequence of the property $\gateCZ\gateCZ = \gateI$. More formally:
  \begin{align}
    \begin{ZX}[mbr=1]
      \rar[B] &[\zxwCol] \zxBox{G} \rar[B] & \zxBox{G} \rar[B] &[\zxwCol] 
    \end{ZX} &\eqeqDef{def:graphDiag}
    \begin{ZX}[myBold,mbr=1]
                     & \zxH{} \ar[to=Gll,C'] \ar[to=root,<.] &                                                    & \zxMatrix[a=Gll]'*{G^\matrixLl} & \zxH{} \ar[to=Gll2,C'] \ar[to=root2,<.] &                                                     & \zxMatrix[a=Gll2]'*{G^\matrixLl} & \\
      \zxN{} \ar[rr] &                                       & [\zxwCol] \zxZ[a=root]{} \ar[B,rrr] \ar[to=Gll,.>] &                                 &                                         & [\zxwCol] \zxZ[a=root2]{} \ar[B,rr] \ar[to=Gll2,.>] &                                   & 
    \end{ZX} \eqZX{S}
    \begin{ZX}[myBold,mbr=1]
                     & \zxH{} \ar[to=Gll,C'] \ar[to=root,<.] & \zxH{} \ar[to=Gll2,C'] \ar[to=root]      & \zxMatrix[a=Gll2].*{G^\matrixLl} \ar[to=root2,<.] &                                                                  & \zxMatrix[a=Gll]'*{G^\matrixLl} \\
      \zxN{} \ar[rr] &                                       & [\zxwCol] \zxZ[a=root]{} \ar[B,to=root2] &                                  & [\zxwCol] \zxZ[a=root2]{} \ar[B,rr] \ar[to=Gll,.>]   &  & 
    \end{ZX} \eqZX{H}
    \begin{ZX}[myBold,mbr=2]
                    & [\zxwCol]   & [\zxWCol+1mm]                                                 & \zxMatrix[a=Gll]*{G^\matrixLl}                   &                                                    & [\zxWCol] \\
                    &             &                                                               & \zxMatrix[a=Gll2]*{G^\matrixLl} \ar[to=root2,C-] &                                                    &           \\
      \zxN{} \rar[] & \zxH{} \rar & \zxX[a=root]{} \ar[to=Gll2,C] \ar[to=Gll,C] \ar[B,to=root2,H] &                                                  & [\zxwCol] \zxZ[a=root2]{} \ar[B,rr] \ar[to=Gll,C-] &  & 
    \end{ZX}\\ &\eqSubeq{eq:matrixManipulationA}{c}
    \begin{ZX}[myBold,mbr=3]
                    &                                                                                                                                                                             \\
                    & [\zxwCol]   & [\zxwCol]                                      & \zxMatrix[a=Gll]/*{G^\matrixLl+G^\matrixLl} &                                                    & [\zxWCol] \\
      \zxN{} \rar[] & \zxH{} \rar & \zxX[a=root]{} \ar[to=Gll,C] \ar[B,to=root2,H] &                                             & [\zxwCol] \zxZ[a=root2]{} \ar[B,rr] \ar[to=Gll,C-] &  & 
    \end{ZX} =              
    \begin{ZX}[myBold,mbr=3]
                    &                                                                                                                                                       \\[2mm]
                    & [\zxwCol]   & [\zxwCol]                                      & \zxMatrix[a=Gll]/*{0} &                                                    & [\zxWCol] \\
      \zxN{} \rar[] & \zxH{} \rar & \zxX[a=root]{} \ar[to=Gll,C] \ar[B,to=root2,H] &                       & [\zxwCol] \zxZ[a=root2]{} \ar[B,rr] \ar[to=Gll,C-] &  & 
    \end{ZX} \eqSubeq{eq:defMatrix}{b}
    \begin{ZX}[myBold,mbr=3]
                    &                                                                                                                                  \\
                    & [\zxwCol]   & [\zxwCol]      \zxX[a=Glle]{}      &             & \zxZ[a=Glls]{}                                      & [\zxWCol] \\
      \zxN{} \rar[] & \zxH{} \rar & \zxX[a=root]{} \ar[to=Glle,C] \rar & \zxH{} \rar & [\zxwCol] \zxZ[a=root2]{} \ar[B,rr] \ar[to=Glls,C-] &  & 
    \end{ZX} \eqZXXX{H}{S}{I}
    \begin{ZX}[myBold]
      \zxN{} \rar &[\zxWCol] \zxN{}
    \end{ZX} 
  \end{align}
\end{proof}

\begin{lemma}\label{lem:GCommutes}
  For any graph $G$, $
    \begin{ZX}[mbr=2]
           & [\zxwCol] \dar   & [\zxwCol]         & [\zxwCol] \\
      \rar & \zxZ[BBw]{} \rar & \zxBox{G} \rar[B] & 
    \end{ZX}  =
    \begin{ZX}[mbr=2]
              & [\zxwCol]      & [\zxwCol] \dar & [\zxwCol] \\
      \rar[B] & \zxBox{G} \rar & \zxZ[BBw]{} \rar    & 
    \end{ZX}
  $.
\end{lemma}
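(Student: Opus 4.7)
The plan is to unfold both sides of the equation using the graphical definition of $\zx[myBold]{\rar[B] & \zxBox{G} \rar[B] &}$ from \cref{eq:defGraphicalG}, and then to collapse the extra dangling Z spider into the root Z spider of $\zx[myBold]{\zxBox{G}}$ by a single application of the spider fusion rule \myZXRule{S}.

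Concretely, recall that in \cref{eq:defGraphicalG} the core of $\zx[myBold]{\zxBox{G}}$ is itself a Z spider whose outer legs are the input wire on one side and the output wire on the other side, with an internal loop through the matrix $G^\matrixLl$ and a Hadamard that implements the $\gateCZ$ structure. On the LHS of \cref{lem:GCommutes}, after unfolding, the extra Z spider sits directly between the input wire and this root Z spider; since two Z spiders connected by at least one wire merge via \myZXRule{S} (spider fusion), we obtain a single Z spider carrying the input wire, the output wire, the internal $G^\matrixLl$-loop, and one additional dangling wire. Performing the analogous unfolding and fusion on the RHS, where the extra Z spider sits between the root and the output wire, yields the very same fused Z spider with the very same set of legs.

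The two resulting diagrams then differ only in the purely topological position at which the dangling wire is attached to the fused spider, which is irrelevant in ZX by the rewiring rule \myZXRule{R} (\cref{eq:zxRuleR}) and the ``only connectivity matters'' convention. The underlying intuition is that $\zx[myBold]{\zxBox{G}}$ is diagonal in the computational basis (it is a product of $\gateCZ$ gates), while the Z spider with a dangling wire acts as a computational-basis copy; diagonal operations commute with computational-basis copying. For this reason, no real obstacle is expected: the entire argument reduces to one use of \myZXRule{S} on each side, followed by a topological identification, exactly in the same spirit as the short derivation of \cref{lem:GDestroys}.
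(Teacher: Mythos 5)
Your proposal is correct and matches the paper's proof, which is stated in one line as ``a direct application of the spider rule on the definition of $\zx[myBold]{\zxNL \zxBox{G} \zxNR}$'' --- i.e.\ exactly your unfolding of \cref{eq:defGraphicalG} followed by fusing the extra Z spider into the root Z spider via \myZXRule{S} and observing that both sides yield the same spider with the same legs. Your additional remarks on topology and on $\zxBox{G}$ being diagonal are consistent elaborations of the same argument.
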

\begin{proof}
  This is a direct application of the spider rule on the definition of \zx[myBold]{\zxNL \zxBox{G} \zxNR}.
\end{proof}

We describe now a way to partition a graph $G$ into two subgraphs.
\begin{lemma}\label{lem:graphicalCutG}
  Let $G = (V,E)$. Let $H \subseteq V$ and $M \eqdef V \setminus H$ be a partition of $V$ (for simplicity we assume that we re-order elements in $V$ so that elements in $H$ are smaller than elements in $M$). Then, using notations from \cref{def:notationGraphs}, we have:
  \begin{align}
    \zx[mbr=2,myBold]{
                      & [\zxwCol]                          & [\zxwCol] \\
    \zxBox{G} \rar[B] & \zxDivider{} \ar[ru,<'] \ar[rd,<.] &           \\
                      &                                    & 
    }
    =
    \begin{ZX}[myBold]
      \zxBox{G_H} \rar[B] & [\zxWCol] \zxZ{} \rar[B] \dar[B]                      & [\zxWCol] \zxN{} \\
                          & \zxMatrix./{\Gamma^G_{H \rightarrow M}} \dar[B,H={B}] &                  \\[\zxHRow+1mm]      
      \zxBox{G_M} \rar[B] & \zxZ{} \rar[B]                                        & \zxN{}                    
    \end{ZX}
  \end{align}
\end{lemma}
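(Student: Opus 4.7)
\textbf{Proof plan for \cref{lem:graphicalCutG}.}

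The plan is to unfold the definition of $\zx[myBold]{\zxBox{G} \rar[B] &[\zxwCol]}$ given in \cref{def:graphDiag}, exploit the block structure of the adjacency matrix $G^\matrixLl$ induced by the partition $(H,M)$, and then refold the intra-$H$ and intra-$M$ pieces back into their respective graph-state boxes while interpreting the cross-block as a $\Gamma$-controlled $\gateCZ$ layer via \cref{lem:shapeCZ}.

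First I would observe that, since elements of $H$ are ordered before those of $M$, \cref{def:notationGraphs} gives the block decomposition
\[
G = \begin{bmatrix} G_H & \Gamma^T \\ \Gamma & G_M \end{bmatrix},
\qquad
G^\matrixLl = \begin{bmatrix} G_H^\matrixLl & 0 \\ 0 & 0 \end{bmatrix} + \begin{bmatrix} 0 & 0 \\ 0 & G_M^\matrixLl \end{bmatrix} + \begin{bmatrix} 0 & 0 \\ \Gamma & 0 \end{bmatrix},
\]
so $G^\matrixLl$ splits as a sum of three matrices acting on disjoint source/target blocks. Starting from
\[
\zx[myBold]{\zxBox{G} \rar[B] &[\zxwCol]}
\eqeqDef{def:graphDiag}
\begin{ZX}[myBold]
  & \zxH{} \ar[to=Gll,C'] \ar[to=root,<.] & & \zxMatrix[a=Gll]'/*{G^\matrixLl} & \\
  \zxZ[B]{} \ar[rr] & & [\zxwCol] \zxZ[a=root]{} \ar[B,rr] \ar[to=Gll,.>] & &
\end{ZX},
\]
I would apply \cref{eq:matrixManipulationA}(c) twice to split the central $G^\matrixLl$-matrix into the parallel composition of the three block-matrices above, each connecting the appropriate gatherer/divider partition of $H$ and $M$ wires.

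Next, for the two purely block-diagonal terms, I would reverse the construction of \cref{def:graphDiag}: the sub-diagram involving only the $G_H^\matrixLl$ block and the $H$-wires is exactly the unfolding of $\zx[myBold]{\zxBox{G_H} \rar[B] &[\zxwCol]}$, and symmetrically for $G_M^\matrixLl$. This uses only the spider fusion rule (\ref{eq:zxRuleS}) to merge the central $\zxZ{}$ spider along its $H$- and $M$-wires into the two independent graph-state spiders shown on the RHS. The remaining third summand is a diagram where the $\Gamma$ matrix connects the $H$-register to the $M$-register via the standard $\gateCZ$-pattern, which by \cref{lem:shapeCZ} realises a $\gateCZ$ between the $i$-th qubit of $H$ and the $j$-th qubit of $M$ exactly when $\Gamma_{j,i}=1$, matching the $\Gamma^G_{H \to M}$ block on the RHS.

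The main obstacle, and the only place requiring care, is the bookkeeping of spiders, dividers and Hadamards when splitting the $G^\matrixLl$-matrix into its three summands: one needs to duplicate the outer $\zxZ$ spider through its Hadamard partner so that the resulting three parallel matrix-sub-diagrams each have their own pair of control/target endpoints, then recombine the $H$-output wires (resp.\ $M$-output wires) of the $G_H^\matrixLl$ and $\Gamma$ blocks (resp.\ $G_M^\matrixLl$ and $\Gamma$ blocks) using the spider rule. Once this diagrammatic rewiring is done, the identification with the RHS is immediate.
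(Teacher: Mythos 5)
Your proposal is correct, but it takes a genuinely different decomposition from the paper's. You split $G^\matrixLl$ \emph{additively} into three zero-padded summands (intra-$H$, intra-$M$, and the cross block $\Gamma$) and invoke rule (c) of \cref{eq:matrixManipulationA}, which diagrammatically introduces a copy spider before and an XOR spider after the three parallel matrices; the paper instead keeps the single block matrix $\begin{bmatrix} G_H^\matrixLl & 0 \\ \Gamma & G_M^\matrixLl \end{bmatrix}$ and peels it apart \emph{block-wise}, first by columns with rule (a) and then by rows with rule (b) of \cref{eq:matrixManipulationA}, so that the gatherers and dividers separating the $H$- and $M$-registers appear in the right places from the outset, and rule (c) is never needed. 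Your route is semantically the more transparent one, since it directly mirrors the factorization of the phase $(-1)^{x^\top G^\matrixLl x}$ into the product of the two intra-block $\gateCZ$ layers and the $\Gamma$-controlled cross layer recognized via \cref{lem:shapeCZ}; its cost is exactly the bookkeeping you flag: each of your three summands is still a full $n\times n$ matrix, so stripping the zero blocks and restricting each piece to its own register forces you back onto the same block rules (a), (b) and the zero-matrix identity of \cref{eq:defMatrix} that the paper uses, plus the extra red XOR spider from rule (c) that must be fused away through the Hadamard in the loop of \cref{def:graphDiag}. Both arguments go through; the paper's is somewhat shorter diagrammatically, yours is easier to motivate.
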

\begin{proof}
  Intuitively, this lemma only means that in order to create $G$, we can first create the graph states $G_H$ and $G_M$, and apply after $\gateCZ$ gates between elements in $G_H$ and $G_M$. This can be formalized diagrammatically using the decomposition rules for block matrices:
  \begin{align}
    \ket{G} &\eqeqDef{def:graphDiag}
    \begin{ZX}[myBold,mbr=2]
      \zxH{} \ar[to=Gll,C'] \ar[to=root,<.] &                                                   & \zxMatrix[a=Gll]'/*{G^\matrixLl} &                                                & \\
                                            & [\zxwCol] \zxZ[a=root]{} \ar[B,rr] \ar[to=Gll,.>] &                                 & \zxDivider[a=finalDiv]{} \ar[ru,<'] \ar[rd,<.] & \\[\zxwRow]
                                            &                                                   &                                 &                                                & 
    \end{ZX} \eqeq{eq:decompositionG}
    \begin{ZX}[myBold,mbr=2]
      \zxH{} \ar[to=Gll,C'] \ar[to=root,<.] &                                                   & \zxMatrix[a=Gll]'*_{G_H^\matrixLl & 0 \\\Gamma & G_M^\matrixLl}{} &                                                &[\zxwCol] \\
                                            & [\zxwCol] \zxZ[a=root]{} \ar[B,rr] \ar[to=Gll,.>] &                                                             & \zxDivider[a=finalDiv]{} \ar[ru,<'] \ar[rd,<.] &           \\[\zxwRow]
                                            &                                                   &                                                             &                                                &           
    \end{ZX} \eqSubeq{eq:matrixManipulationA}{a}
    \begin{ZX}[myBold,mbr=3]
                                                & \zxDivider[a=div]{} \ar[to=Gll,'>]  \ar[to=Gll2,'>={1-=1,1L=0,2-=0}]                         &                                   &                                                                               \\
      \zxH{} \ar[to=div,<',wce] \ar[to=root,<.] &                                                                                              & \zxMatrix[a=Gll]'*_{G_H^\matrixLl & 0}{} & \zxMatrix[a=Gll2]'*_{\Gamma                    & G_M^\matrixLl}{} &  &[\zxwCol] \\
                                                & [\zxwCol] \zxZ[a=root]{} \ar[B,to=finalDiv] \ar[to=Gll,.>] \ar[to=Gll2,.>={1-=.8,1L=0,2-=0}] &                                   &      & \zxDivider[a=finalDiv]{} \ar[ru,<'] \ar[rd,<.] &                  &  & \\[\zxwRow]
                                                &                                                                                              &                                   &      &                                                & 
    \end{ZX} \\&\eqSubeq{eq:matrixManipulationA}{b}
    \begin{ZX}[myBold,mbr=3]
                                                & \zxDivider[a=div]{} \ar[to=gatherA,'>]  \ar[to=gatherB,'>={1-=1,1L=0,2-=0}]           &                                                                      &                        &                                                                                                                        \\
                                                &                                                                                       & \zxX[a=gatherA]{} \ar[to=Gll] \ar['>={1-=.8,1L=0,2-=0},to=zero]      &                        & \zxX[a=gatherB]{}  \ar[to=Gll2] \ar[to=GM,'>={1-=.8,1L=0,2-=0}]                                                        \\
      \zxH{} \ar[to=div,<',wce] \ar[to=root,<.] &                                                                                       & \zxMatrix[a=Gll]'*{G_H^\matrixLl}                                    & \zxMatrix[a=zero]'*{0} & \zxMatrix[a=Gll2]'*{\Gamma}                                         & \zxMatrix[a=GM]'*{G_M^\matrixLl}                 \\
                                                &                                                                                       & \zxDivider[a=dividA]{} \ar[to=Gll] \ar[to=zero,.>={1-=.8,1L=0,2-=0}] &                        & \zxDivider[a=dividB]{} \ar[to=Gll2] \ar[to=GM,.>={1-=.8,1L=0,2-=0}] &                                                & \\
                                                & [\zxwCol] \zxZ[a=root]{} \ar[B,to=finalDiv] \ar[to=dividA,S,wce] \ar[to=dividB,s,wce] &                                                                      &                        &                                                                     & \zxDivider[a=finalDiv]{} \ar[ru,<'] \ar[rd,<.] & \\[\zxwRow]
                                                &                                                                                       &                                                                      &                        &                                                                     &                                                &  &  & 
    \end{ZX} \eqSubeq{eq:defMatrix}{b}
    \begin{ZX}[myBold,mbr=3]
                                                & \zxDivider[a=div]{} \ar[to=Gll,'>]  \ar[to=gatherB,'>={1-=1,1L=0,2-=0}]               &                                                                      &                &                                                                                                                        \\
                                                &                                                                                       &                                                                      &                & \zxX[a=gatherB]{}  \ar[to=Gll2] \ar[to=GM,'>={1-=.8,1L=0,2-=0}]                                                        \\
      \zxH{} \ar[to=div,<',wce] \ar[to=root,<.] &                                                                                       & \zxMatrix[a=Gll]'*{G_H^\matrixLl}                                    & \zxZ[a=zero]{} & \zxMatrix[a=Gll2]'*{\Gamma}                                         & \zxMatrix[a=GM]'*{G_M^\matrixLl}                 \\
                                                &                                                                                       & \zxDivider[a=dividA]{} \ar[to=Gll] \ar[to=zero,.>={1-=.8,1L=0,2-=0}] &                & \zxDivider[a=dividB]{} \ar[to=Gll2] \ar[to=GM,.>={1-=.8,1L=0,2-=0}] &                                                & \\
                                                & [\zxwCol] \zxZ[a=root]{} \ar[B,to=finalDiv] \ar[to=dividA,S,wce] \ar[to=dividB,s,wce] &                                                                      &                &                                                                     & \zxDivider[a=finalDiv]{} \ar[ru,<'] \ar[rd,<.] & \\[\zxwRow]
                                                &                                                                                       &                                                                      &                &                                                                     &                                                &  &  & 
    \end{ZX} \\&\eqZXXX{Z}{W}{R}
    \begin{ZX}[myBold]
                                             &                                                                    &                                                                                                &  &                               \\
                                             &                                                                    &                                                                                                &  &                               \\
      \zxH{} \ar[to=Gll,C']  \ar[to=root,<.] &                                                                    & \zxMatrix[a=Gll]'*{G_H^\matrixLl}                                                              &  &  &                            \\
                                             &                                                                    &                                                                                                &  &  &                &           \\
                                             & [\zxwCol] \zxZ[a=root]{} \ar[to=end1]  \ar[to=Gll,.>] \ar[to=Gll2,'>] &                                                                                                &  &  & \zxN[a=end1]{} &           \\[\zxwRow]
                                             &                                                                    & \zxMatrix[a=Gll2]'{\Gamma}                                                                                                        \\
                                             &                                                                    & \zxX[a=gatherB]{}  \ar[to=Gll2] \ar[to=GM,'>={1-=.8,1L=0,2-=0}]                                                                   \\
                                             &                                                                    & \zxH{} \ar[to=root2] \ar[to=gatherB]                                                           & \zxMatrix[a=GM]'*{G_M^\matrixLl} \\
                                             & \zxZ[a=zero]{}                                                     & \zxZ[a=root2]{} \ar[to=zero,.>={1-=.8,1L=0,2-=0}] \ar[to=end2] \ar[to=GM,.>={1-=.8,1L=0,2-=0}] &  &  & \zxN[a=end2]{} &  &  & 
    \end{ZX} \eqZXX{S}{H}
    \begin{ZX}[myBold]
      \zxBox{G_H} \rar[B] & [\zxWCol] \zxZ{} \rar[B] \dar[B]                      & [\zxWCol] \zxN{} \\
                          & \zxMatrix./{\Gamma} \dar[B,H={B}] &                  \\[\zxHRow+1mm]      
      \zxBox{G_M} \rar[B] & \zxZ{} \rar[B]                                        & \zxN{}                    
    \end{ZX}
  \end{align}
\end{proof}

\begin{lemma}\label{lem:GaussPivot}
  Let $\Gamma \in \Z_2^{m \times n}$ be a binary matrix. Then, there exists an invertible matrix $U \in \Z_2^{n \times n}$, an integer $r$ (the rank), a matrix $R \in \Z_2^{r \times n-r}$ and an invertible matrix $V \in \Z_2^{m \times m}$ such that:
  \begin{align}
    \Gamma = V
    \begin{bmatrix}
      I_r & R\\
      \mathbf{0} & \mathbf{0}
    \end{bmatrix}
    U
  \end{align}
  where $I_r \in \Z_2^{r \times r}$ is the identity matrix and $\mathbf{0}$ is the zero matrix.
\end{lemma}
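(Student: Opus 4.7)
The plan is to carry out standard Gaussian elimination over $\mathbb{F}_2$ to bring $\Gamma$ into a reduced block form, then read off $U$ and $V$ as the inverses of the row and column operation matrices used.

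First, I would set $r \eqdef \mathrm{rank}(\Gamma)$. Since $\mathbb{F}_2$ is a field, the standard theory of elementary row operations applies: there exists a sequence of elementary row operations (swap two rows, add one row to another over $\mathbb{F}_2$) whose composition is an invertible matrix $P \in \Z_2^{m \times m}$ such that $P\Gamma$ is in reduced row echelon form. In this form, the first $r$ rows of $P\Gamma$ contain the pivot entries (exactly one $1$ per row, with zeros elsewhere in that pivot's column), and the last $m-r$ rows are zero.

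Next, I would use a permutation $Q \in \Z_2^{n \times n}$ of the columns of $P\Gamma$ to move all pivot columns to the leftmost $r$ positions. Since $Q$ is a permutation matrix it is invertible. The resulting matrix $P\Gamma Q$ then has the block form
\begin{align*}
  P\Gamma Q = \begin{bmatrix} I_r & R \\ \mathbf{0} & \mathbf{0} \end{bmatrix}
\end{align*}
for some $R \in \Z_2^{r \times (n-r)}$ containing the entries of $P\Gamma$ in the non-pivot columns (the bottom blocks are zero because the last $m-r$ rows of $P\Gamma$ are already zero).

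Finally, I would set $V \eqdef P^{-1} \in \Z_2^{m \times m}$ and $U \eqdef Q^{-1} \in \Z_2^{n \times n}$. Both are invertible as inverses of invertible matrices, and rearranging $P\Gamma Q = \bigl[\begin{smallmatrix}I_r & R\\ \mathbf{0} & \mathbf{0}\end{smallmatrix}\bigr]$ immediately yields the claimed decomposition $\Gamma = V \bigl[\begin{smallmatrix}I_r & R\\ \mathbf{0} & \mathbf{0}\end{smallmatrix}\bigr] U$. There is no real obstacle here: the entire argument is a routine invocation of Gaussian elimination, with the only minor care needed being to keep track of whether we are applying left- or right-multiplications and to remember to invert at the end so that the shape matches the statement.
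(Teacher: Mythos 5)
Your proof is correct and follows essentially the same route as the paper: both are a routine Gaussian elimination argument, the only cosmetic difference being that you pass directly to reduced row echelon form and then permute columns, whereas the paper first obtains row echelon form, swaps columns to place the pivots on the diagonal, and then clears the entries above them with further row operations. Nothing further is needed.
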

\begin{proof}
  This is a direct consequence of the Gaussian elimination algorithm: using Gaussian elimination we can obtain a row echelon form (using only elementary row operations, i.e.\ swapping rows and adding a multiple of one row to another row). Then using column swap, we can reorder the row echelon form matrix to bring the pivots on the diagonal of the matrix. Finally, by subtracting for each row $j$ (starting from the last row) the rows $i$ (for any $i > j$) if $j$'s row contains a one on $i$-th column, we can remove all elements above the diagonal, leading to the identity in the upper right corner. By combining all (invertible) elementary row and column operations into $U$ and $V$, we obtain our final form.
\end{proof}

\begin{lemma}\label{lem:physicallyImplementable}
  Let $U \in \Z_2^{n \times n}$ be an invertible matrix, then $\zx[mbr,arrows={B}]{\rar &[\zxwCol] \zxMatrix{U} \rar &[\zxwCol]}$ is physically implementable without auxiliary qubits using only $\gateCNOT$ and swap operations, and $\zx[mbr,arrows={B}]{\rar &[\zxwCol] \zxMatrix*{U} \rar &[\zxwCol]} = \zx[mbr,arrows={B}]{\rar &[\zxwCol] \zxMatrix{U^{-1}} \rar &[\zxwCol]}$ .
\end{lemma}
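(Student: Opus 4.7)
The proof plan splits along the two claims in the statement.

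For the first claim (physical implementability of $\zxMatrix{U}$), I would proceed by decomposing $U$ into elementary factors using Gaussian elimination over $\Z_2$. Concretely, since $U$ is invertible, row reduction produces a sequence of elementary matrices $F_1,\dots,F_k$ with $F_k F_{k-1} \cdots F_1 U = I$, where each $F_\ell$ is either a transposition matrix $P_{ij}$ (swapping rows $i$ and $j$) or an elementary row-addition matrix $E_{j,i} \eqdef I + e_j e_i^T$ (adding row $i$ to row $j$). Because every such factor is an involution over $\Z_2$, inverting yields the clean decomposition $U = F_1 F_2 \cdots F_k$, still as a product of the same two types of elementary matrices.

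Next, I would match each elementary factor with a physical gate whose action on the computational basis coincides with the corresponding linear map: $E_{j,i}$ sends $\ket{x}$ to $\ket{x + x_i e_j}$, which is exactly $\gateCNOT$ with control $i$ and target $j$; $P_{ij}$ permutes the $i$-th and $j$-th components of $x$, which is exactly a SWAP gate on qubits $i$ and $j$. Chaining the circuits associated to $F_k, F_{k-1}, \dots, F_1$ in this order then produces a CNOT/SWAP circuit whose overall action on basis states is $\ket{x} \mapsto \ket{F_1 \cdots F_k x} = \ket{Ux}$, which by the defining equation~\cref{eq:defMatrix}(a) for $\zxMatrix{U}$ is exactly the diagram we wanted to implement. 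No auxiliary qubits are introduced.

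For the second claim $\zxMatrix*{U} = \zxMatrix{U^{-1}}$, no new argument is needed: this is obtained by substituting $A \leftarrow U^{-1}$ (which is itself bijective) into the chain of equalities \cref{eq:bijectionA} that was already established, giving $\zxMatrix{U^{-1}} = \zxMatrix*{(U^{-1})^{-1}} = \zxMatrix*{U}$.

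The only substantive ingredient in the entire argument is Gaussian elimination over $\Z_2$, which is completely standard, so I do not anticipate any real obstacle. The two minor bookkeeping points to keep in mind are: (i) restricting oneself to row operations so that every elementary factor corresponds to a gate acting on the input register in the obvious way, and (ii) inverting the product correctly when passing from $F_k\cdots F_1 U = I$ to a decomposition of $U$ itself — this works cleanly precisely because every elementary factor that appears is its own inverse over $\Z_2$.
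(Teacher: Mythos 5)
Your proposal is correct and follows essentially the same route as the paper: Gaussian elimination over $\Z_2$ into self-inverse elementary factors, identification of each factor with a $\gateCNOT$ or SWAP, and the second claim deduced from \cref{eq:bijectionA}. The only (harmless) difference is that you use row operations alone to reach the identity, whereas the paper's proof uses both row and column operations before exploiting self-inverseness to reassemble $U$.
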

\begin{proof}
  The fact that $\zx[mbr,arrows={B}]{\rar &[\zxwCol] \zxMatrix*{U} \rar &[\zxwCol]} = \zx[mbr,arrows={B}]{\rar &[\zxwCol] \zxMatrix{U^{-1}} \rar &[\zxwCol]}$ is a direct consequence of \cite[Lem.~4.8,~4.9]{CHP19_SZXcalculusScalableGraphical}. To see that it can be written as a sequence of \gateCNOT{} and swap operations, we can first realize that since $U$ is invertible, using the Gaussian elimination, we can find elementary row and column operations $E_1, \dots, E_n$ and $E'_1, \dots, \times E'_n$ such that:
  \begin{align}
    E_1  \dots  E_n U E'_1  \dots  E'_n = I
  \end{align}
  The elementary row (resp.\ column) operations can either be:
  \begin{itemize}
  \item Operation 1: Multiply line (resp.\ column) $j$ by a non-null scalar: since the only non-null scalar modulo $2$ is $1$, all operations of this form would be identity so we can remove such operations.
  \item Operation 2: Exchange the $j$-th line (resp.\ column) and the $j'$-th line (resp.\ column).
  \item Operation 3: Add the column $j$ to column $j'$ (in theory, we can multiply first the $j$-th line by an arbitrary factor, but since we are working modulo $2$, the only interesting case is when this scalar is $1$ since when it is equal to $0$ nothing happens)
  \end{itemize}
  We can see that all these operations are self-inverse, so we have:
  \begin{align}
     U = E_n  \dots  E_1 E'_n  \dots  E'_1
  \end{align}
  i.e.\
  \begin{align}
     \zx[myBold,mbr=1]{\zxNL \zxMatrix{U} \zxNR} = \zx[myBold,mbr=1]{\zxNL \zxMatrix{E'_1} \rar & \dots \rar & \zxMatrix{E'_n} \rar & \zxMatrix{E_1} \rar & \dots \rar & \zxMatrix{E_n} \zxNR}
  \end{align}  
  Moreover, any such operation can be implemented using a swap or a \gateCNOT{} gate. If the elementary operation $E$ is Operation 2, i.e.\ a swap, then $E$ can literally be implemented by the same swap operation. This can be seen for instance by realizing that a swap can be realized via a matrix of this form:
  \begin{align}
    E = \begin{bmatrix}
      \mathbf{1^k}        \\
      & 0 &  & 1         \\
      &   & \mathbf{1^l} \\
      & 1 &  & 0         \\
      &   &  &  & \mathbf{1^m}
    \end{bmatrix}\label{eg:diagSwap}
  \end{align}
  where $1^k$ is the diagonal matrix of size $k \times k$ with ones on its diagonal, and $k=j-1$, $l=j'-j-1$. Therefore, using the characterization in \cref{eq:defMatrix}, we have:
  \begin{align}
    \begin{ZX}[myBold,mbr=1]
      \zxNL \zxMatrix{E} \zxNR
    \end{ZX} \eqSubeq{eq:defMatrix}{b}
    \begin{ZX}[mbr=2]
                        & [\zxwCol]                          & \zxZ{} \ar[dr] \ar[dd,3 vdots] &                                      & \zxX{} \ar[rd,'>] \ar[dd,3 vdots] &                          & [\zxwCol] \\
      \zxN{} \rar[B,wc] & \zxDivider{} \ar[ru,<'] \ar[rd,<'] &                                & |[inner sep=.7mm]| E \ar[ru] \ar[rd] &                                   & \zxDivider-{} \rar[B,wc] & \zxN{}    \\
                        &                                    & [\zxwCol] \zxZ{} \ar[ru]       &                                      & \zxX{} \ar[ru,.>]                 &                                      
    \end{ZX} \eqeq{eg:diagSwap}
    \begin{ZX}[mbr=3]
      \rar[Bn'Args={k}{}]       & [\zxWCol] \zxZ[B]{} \rar[B] & [\zxWCol+3mm] \zxX[B]{} \rar[B] & [\zxWCol] \\[.5mm]
      \zxN{} \rar               & \zxZ{} \ar[to=jprime,S]     & \zxX[a=j]{}      \rar           &           \\[.5mm]
      \rar[Bn'Args={l}{pos=.2}] & \zxZ[B]{} \rar[B]           & \zxX[B]{} \rar[B]               &           \\[.5mm]
      \zxN{} \rar               & \zxZ{} \ar[to=j,S]          & \zxX[a=jprime]{} \rar           &           \\[.5mm]
      \rar[Bn.Args={m}{}]       & \zxZ[B]{} \rar[B]           & \zxX[B]{} \rar[B]               & 
    \end{ZX} \eqZX{I}
    \begin{ZX}[mbr=3]
      \rar[Bn'Args={k}{}]       & [\zxWCol+5mm]    \\[.5mm]
      \zxN{} \ar[to=jprime,S]   & \zxN[a=j]{}      \\[.5mm]
      \rar[Bn'Args={l}{pos=.2}] &                  \\[.5mm]
      \zxN{} \ar[to=j,S]        & \zxN[a=jprime]{} \\[.5mm]
      \rar[Bn.Args={m}{}]       & 
    \end{ZX}     
  \end{align}
  Similarly, if $E$ is Operation 3, then we can write $E$ as a matrix of this form (the $1$ might be on the other side of the diagonal if $j' < j$):
  \begin{align}
    E = \begin{bmatrix}
          \mathbf{1^k} \\
           & 1 &  &             \\
           &   & \mathbf{1^l}    \\
           & 1  &  & 1            \\
           &   &  &  & \mathbf{1^m}
       \end{bmatrix}\label{eg:formCNOT}
  \end{align}
  where $k=j-1$, $l=j'-j-1$ which gives, using the characterization in \cref{eq:defMatrix}:
  \begin{align}
    \begin{ZX}[myBold,mbr=1]
      \zxNL \zxMatrix{E} \zxNR
    \end{ZX} \eqSubeq{eq:defMatrix}{b}
    \begin{ZX}[mbr=2]
                        & [\zxwCol]                          & \zxZ{} \ar[dr] \ar[dd,3 vdots] &                                      & \zxX{} \ar[rd,'>] \ar[dd,3 vdots] &                          & [\zxwCol] \\
      \zxN{} \rar[B,wc] & \zxDivider{} \ar[ru,<'] \ar[rd,<'] &                                & |[inner sep=.7mm]| E \ar[ru] \ar[rd] &                                   & \zxDivider-{} \rar[B,wc] & \zxN{}    \\
                        &                                    & [\zxwCol] \zxZ{} \ar[ru]       &                                      & \zxX{} \ar[ru,.>]                 &                                      
    \end{ZX} \eqeq{eg:formCNOT}
    \begin{ZX}[mbr=3]
      \rar[Bn'Args={k}{}]       & [\zxWCol] \zxZ[B]{} \rar[B]  & [\zxWCol+3mm] \zxX[B]{} \rar[B] & [\zxWCol] \\[.5mm]
      \zxN{} \rar               & \zxZ{} \rar \ar[to=jprime,S] & \zxX[a=j]{}      \rar           &           \\[.5mm]
      \rar[Bn'Args={l}{pos=.2}] & \zxZ[B]{} \rar[B]            & \zxX[B]{} \rar[B]               &           \\[.5mm]
      \zxN{} \rar               & \zxZ{} \rar                  & \zxX[a=jprime]{} \rar           &           \\[.5mm]
      \rar[Bn.Args={m}{}]       & \zxZ[B]{} \rar[B]            & \zxX[B]{} \rar[B]               & 
    \end{ZX} \eqZX{I}
    \begin{ZX}[mbr=3]
      \rar[B]     & [\zxWCol] \zxN{} \rar[B]   &[\zxWCol] \\[.5mm]
      \zxN{} \rar & \zxZ{} \rar \ar[to=jprime] &          \\[.5mm]
      \rar[B]     & \zxN{} \rar[B]             &          \\[.5mm]
      \zxN{} \rar & \zxX[a=jprime]{} \rar      &          \\[.5mm]
      \rar[B]     & \zxN{} \rar[B]             &
    \end{ZX}
  \end{align}
  which corresponds exactly to a \gateCNOT{} gate where the $j$-th qubit is the source and the $j'$-th qubit is the target.
\end{proof}

\section{Proofs}

We details in this appendix some proofs of the main paper.

\printProofs

\end{document}